\newcounter{saetning} 
  \newtheorem{theo}[saetning]{Theorem}
  \newtheorem{coro}[saetning]{Corollary}
  \newtheorem{lemm}[saetning]{Lemma}
  \newtheorem{prop}[saetning]{Proposition}
\theoremstyle{definition} 
	\newtheorem{defi}{Definition}
\theoremstyle{remark}
	\newtheorem{example}{Example}
\newcommand{\N}{\mathbb{N}}
\newcommand{\E}{\mathbb{E}}
\begin{document}

\markboth{Jakobsen et al.}{Timeability of Extensive-Form Games}

\title{Timeability of Extensive-Form Games}
\author{Sune K.~Jakobsen\thanks{School of Mathematical Sciences and School of Electronic Engineering \&
Computer Science, Queen Mary University of London, Mile End Road, London,
E1 4NS, UK. Email: S.K.Jakobsen@qmul.ac.uk.}
\and
Troels B.~S{\o}rensen
\thanks{IT-University of Copenhagen, Rued Langgaards Vej 7, 2300 Copenhagen S, Denmark. Email: trbj@itu.dk.}
\and
Vincent Conitzer
\thanks{Department of Computer Science, Duke University, 
Box 90129,
Durham, NC 27708, USA. Email: conitzer@cs.duke.edu.}
}

\maketitle 

\begin{abstract}
 Extensive-form games constitute the standard representation scheme
for games with a temporal component.  But do all extensive-form
games correspond to protocols that we can implement in the real world?
We often rule out games with {\em imperfect recall},
which prescribe that an agent forget something that she knew
before.  In this paper, we show that even some games with perfect
recall can be problematic to implement.  Specifically,
we show that if the agents have a sense of time passing
(say, access to a clock), then some extensive-form games
can no longer be implemented; no matter how we attempt to time
the game, some information will leak to the agents that they are not
supposed to have.  We say such a game is not
{\em exactly timeable}. We provide easy-to-check necessary and
sufficient conditions for a game to be exactly timeable.
Most of the technical depth of the paper concerns how to
{\em approximately} time games, which we show can always be done,
though it may require large amounts of time. Specifically, we show that for some games the time required to approximately implement the game grows as a power tower of height proportional to the number of players and with a parameter that measures the precision of the approximation at the top of the power tower. In practice, that makes the games untimeable.
Besides the conceptual contribution to game theory, we believe our
methodology can have applications
to preventing information leakage in security protocols.
\end{abstract}

\section{Introduction}

The {\em extensive form} is a very powerful representation scheme for
games.  It allows one to naturally
specify how the game unfolds over time, and what each player knows at
each point of action.
This allows one to model, for example, card games such as
poker, but also real-world
strategic situations with similar aspects.

Besides asking whether all strategic situations one might encounter in the 
real world
can be modelled as extensive-form games,
one may also ask whether all extensive-form games correspond to
something one might encounter in the real world.
This question is important for several reasons.  One is that if the
answer is ``no,'' then
there should be some well-motivated restricted subclasses of
extensive-form games that may be more tractable from the perspective
of algorithmic and other theoretical analysis.
Another is that if we are interested in designing a protocol,
extensive-form games give us a natural language
in which to express the protocol---but this language may lead us
astray if some of its games are not
actually implementable
in the real world.

Games of {\em imperfect recall}, in which an agent sometimes forgets
something she knew before, constitute a natural
example of games that may be difficult to implement in the
real world.\footnote{Computer poker provides some amusing
anecdotes in this regard. When comparing two poker-playing bots by
letting them play a sequence of hands,
one way to reduce the role of luck and thereby improve statistical
significance is to wipe clean the bots' memory
and let them play the same sequence of hands again, but with the bots'
roles in the hands reversed. This is not feasible
for {\em human} players, of course. Because of this, events pitting
computers against humans have generally pitted
a {\em pair} of players against one copy of the bot each, in separate
rooms. In this setup, each human-computer pair
receives the same hands, though the bot's role in one room is the
human's role in the other.} Indeed, restricting attention to perfect recall is often useful for algorithmic
and other theoretical purposes.
From a theoretical perspective, perfect recall is required~\cite{kuhn1953extensive} for behavioral strategies to be as
expressive as mixed strategies. Perfect recall also allows for the use of the
sequence form~\cite{romanovsky1962reduction}, which allows linear optimization
techniques to be used for computing equilibria of two-person extensive-form
games~\cite{von1996efficient}. The sequence form can also be used to compute
equilibrium refinements~\cite{miltersen2010computing,miltersen2008fast}, again
requiring perfect recall. Without perfect recall, otherwise simple single agent decision
problems become complicated~\cite{piccione1997interpretation,aumann1997absent,aumann1997forgetful},
and even the existence of equilibria in behavior strategies becomes NP-hard to
decide~\cite{hansen2007finding}. Imperfect recall has proven
useful for computing approximate minimax strategies for
poker~\cite{waugh2009practical}, even though the agent following the strategy does have perfect recall when playing the game.

We believe that many researchers are under the impression
that, given any finite extensive-form game
of perfect recall, one could
in principle have agents play that game in the real world, with the
actions of the game unfolding
in the order suggested by the extensive form. 
In this paper, we prove that this is not so, at least if agents have a
sense of {\em time}.
If the players have a sense of time, we show that some games cannot be 
implemented in actual time in a way that respects the information
sets of the extensive form. 
The games that can be implemented in time are exactly those that have {\em 
chronologically ordered information sets}, as
defined in a set of lecture notes by~\cite[page 91]{weibull2009lecturenotes}.
Weibull argues that games with this property constitute the natural domain of {\em 
sequential equilibria}~\cite{kreps1982sequential}. The concept of sequential equilibrium is
arguably the most used equilibrium refinement for extensive-form games with 
imperfect information. ~\cite{kreps1987structural}
provided an example where the unique sequential equilibrium requires some level of cognitive 
dissonance from the players~\cite{weibull1992self}, forcing a player to best-respond to strategies
that are not consistent with her beliefs. However, examples of 
this type only work because they have no ordering of the information sets, 
which is Weibull's point in restricting attention to games with 
chronologically ordered information sets.
In this paper, we argue something stronger: we argue that extensive-form games 
without this property
cannot model any real world strategic situation, since the information 
structure of the model cannot be enforced.

We emphasize that our paper is not intended as a criticism of
extensive-form games. Rather, the goal is to point out a natural
restriction -- timeability -- that is needed to ensure that the game
can be implemented as intended in practice. Again, perfect recall is
a restriction that is similar in nature. Restricting attention to
those games that have perfect recall has been useful for many
purposes, and the notion has also been useful to understand why
certain games have odd features---namely, they have imperfect recall.  
We suspect the notion of timeability can be used similarly, and
encourage game theorists (algorithmic or otherwise) to, in contexts
where they consider the restriction of perfect recall, consider that
of timeability as well.  

One place where the analogy between timeability and perfect recall
perhaps breaks down is that we have shown that games that are not
exactly timeable can nevertheless be approximately timed, in some
cases even in a reasonable amount of time. It is not clear whether an
analogous notion of approximately perfect recall could be given.

Most of our
technical work concerns whether games that do not have an exact
timing can nevertheless be approximately timed, and if so, how much
time is required to do so. This latter contribution may have important
ramifications for the design of protocols that run a risk of leaking
information to participants based on the times at which
they are requested to take action. While we show that all games are at least 
approximately timeable, we also show that some games require so much time that 
in practice they are untimeable.

\subsection{Motivating example}

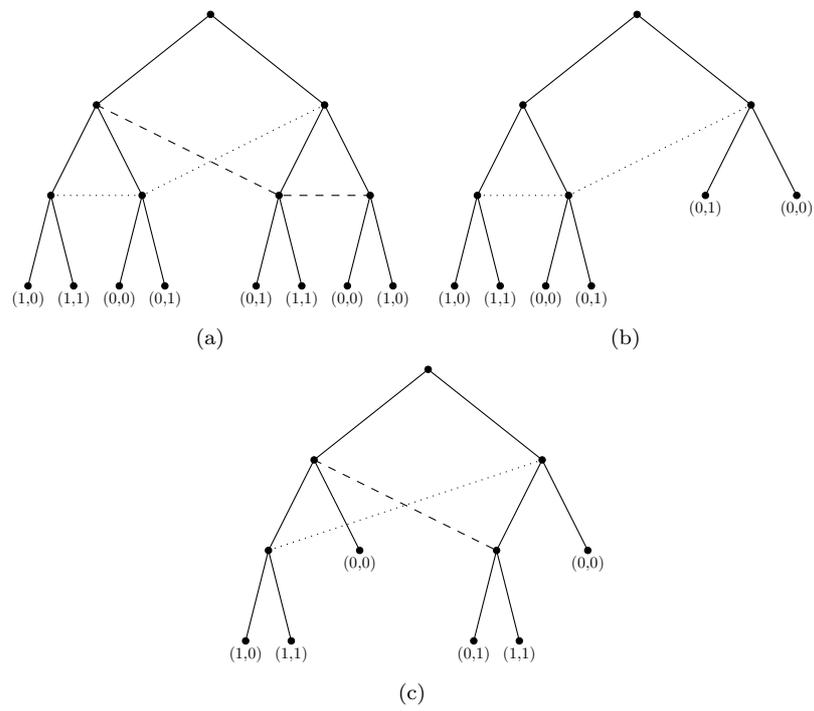
\begin{figure}
\centering

\subfigure[]{%
\begin{tikzpicture}[scale=0.6, every node/.style={scale=0.6}]

\draw (5.5,6) -- (3,4) -- (4,2) -- (4.5,0);
\draw[dashed]  (3,4) -- (7,2) -- (9,2); 
\draw[dotted]  (2,2) -- (4,2) -- (8,4); 
\draw (4,2) -- (3.5,0);
\draw (3,4) -- (2,2) -- (2.5,0);
\draw (2,2) -- (1.5,0);
\draw (5.5,6) -- (8,4) -- (9,2) -- (9.5,0);
\draw (9,2) -- (8.5,0);
\draw (8,4) -- (7,2) -- (7.5,0);
\draw (7,2) -- (6.5,0);

\draw [fill] (5.5,6) circle [radius=0.07];
\draw [fill] (3,4) circle [radius=0.07];
\draw [fill] (4,2) circle [radius=0.07];
\draw [fill] (4.5,0) circle [radius=0.07];
\node [below] at (4.5,0) {(0,1)}; 
\draw [fill] (3.5,0) circle [radius=0.07];
\node [below] at (3.5,0) {(0,0)}; 
\draw [fill] (2,2) circle [radius=0.07];
\draw [fill] (2.5,0) circle [radius=0.07];
\node [below] at (2.5,0) {(1,1)}; 

\draw [fill] (1.5,0) circle [radius=0.07];
\node [below] at (1.5,0) {(1,0)}; 
\draw [fill] (8,4) circle [radius=0.07];
\draw [fill] (9,2) circle [radius=0.07];
\draw [fill] (7,2) circle [radius=0.07];
\draw [fill] (9.5,0) circle [radius=0.07];
\node [below] at (9.5,0) {(1,0)}; 
\draw [fill] (8.5,0) circle [radius=0.07];
\node [below] at (8.5,0) {(0,0)}; 
\draw [fill] (7.5,0) circle [radius=0.07];
\node [below] at (7.5,0) {(1,1)}; 
\draw [fill] (6.5,0) circle [radius=0.07];
\node [below] at (6.5,0) {(0,1)}; 

\end{tikzpicture}

\label{fi:motivating}}
\subfigure[]{%
\begin{tikzpicture}[scale=0.6, every node/.style={scale=0.6}]

\draw (5.5,6) -- (3,4) -- (4,2) -- (4.5,0);
\draw[dotted]  (2,2) -- (4,2) -- (8,4); 
\draw (4,2) -- (3.5,0);
\draw (3,4) -- (2,2) -- (2.5,0);
\draw (2,2) -- (1.5,0);
\draw (5.5,6) -- (8,4) -- (9,2);
\draw (8,4) -- (7,2);

\draw [fill] (5.5,6) circle [radius=0.07];
\draw [fill] (3,4) circle [radius=0.07];
\draw [fill] (4,2) circle [radius=0.07];
\draw [fill] (4.5,0) circle [radius=0.07];
\node [below] at (4.5,0) {(0,1)}; 
\draw [fill] (3.5,0) circle [radius=0.07];
\node [below] at (3.5,0) {(0,0)}; 
\draw [fill] (2,2) circle [radius=0.07];
\draw [fill] (2.5,0) circle [radius=0.07];
\node [below] at (2.5,0) {(1,1)}; 

\draw [fill] (1.5,0) circle [radius=0.07];
\node [below] at (1.5,0) {(1,0)}; 
\draw [fill] (8,4) circle [radius=0.07];
\draw [fill] (9,2) circle [radius=0.07];
\node [below] at (9,2) {(0,0)}; 
\draw [fill] (7,2) circle [radius=0.07];
\node [below] at (7,2) {(0,1)};

\end{tikzpicture}
\label{fi:modified}
}
\subfigure[]{
\begin{tikzpicture}[scale=0.6, every node/.style={scale=0.6}]

\draw (5.5,6) -- (3,4) -- (4,2) ;
\draw[dashed]  (3,4) -- (7,2) ; 
\draw[dotted]  (2,2) -- (8,4); 

\draw (3,4) -- (2,2) -- (2.5,0);
\draw (2,2) -- (1.5,0);
\draw (5.5,6) -- (8,4) -- (9,2) ;

\draw (8,4) -- (7,2) -- (7.5,0);
\draw (7,2) -- (6.5,0);

\draw [fill] (5.5,6) circle [radius=0.07];
\draw [fill] (3,4) circle [radius=0.07];
\draw [fill] (4,2) circle [radius=0.07];
\node [below] at (4,2) {(0,0)}; 
\draw [fill] (2,2) circle [radius=0.07];
\draw [fill] (2.5,0) circle [radius=0.07];
\node [below] at (2.5,0) {(1,1)}; 

\draw [fill] (1.5,0) circle [radius=0.07];
\node [below] at (1.5,0) {(1,0)}; 
\draw [fill] (8,4) circle [radius=0.07];
\draw [fill] (9,2) circle [radius=0.07];
\draw [fill] (7,2) circle [radius=0.07];
\node [below] at (9,2) {(0,0)}; 
\draw [fill] (7.5,0) circle [radius=0.07];
\node [below] at (7.5,0) {(1,1)}; 
\draw [fill] (6.5,0) circle [radius=0.07];
\node [below] at (6.5,0) {(0,1)}; 

\end{tikzpicture}
\label{fi:modified2}
}

\caption{Three examples. The roots are Chance nodes where Chance chooses its move uniformly at random. Dashed information sets belong to player $1$ and dotted ones to player $2$. The node in game (b) that forms its own information set belongs to player $1$.}
\end{figure}

Consider the following simple 2-player extensive-form game
(Figure~\ref{fi:motivating}).
In it, first a coin is tossed that determines which player goes first.
Then, each player, in turn, is asked
to guess whether she has gone first.
If the player is correct, she is paid $1$ (and otherwise $0$).
The information sets of the game suggest that a player cannot at all
distinguish the situation
where she goes first from the one where she goes second, and thus, she
gets expected utility $1/2$ no matter her strategy.

However, now consider implementing this game in practice.  Assume that
the game starts at time $0$.  Clearly,
if we toss the coin at time $0$, ask one player to bet at time $1$,
and the other at time $2$, a time-aware
player will know exactly whether she is being asked first or second
(assuming the timing protocol is common knowledge),
and will act accordingly.
This implementation blatantly violates the intended information
structure of the extensive-form representation of
the game; indeed, it results in an entirely different game (one that
is much more beneficial to the players!).
We say that this protocol is not an {\em exact timing} of the game in
Figure~\ref{fi:motivating}.

Of course, the general protocol of taking one action per time unit is
a perfectly fine timing of many games,
including games where every action is public (as in, say, Texas Hold'em poker).
Also, there are games where taking one action per time unit fails to
exactly time the game,
but nevertheless an exact timing is available.  For example, consider the
modified game in Figure~\ref{fi:modified},
where player 1 only plays if the coin comes up Heads, and if so plays
first.  This game can be timed
by letting player $1$ play at time $1$ and player $2$ at time $2$,
even if player $1$ does not go first.

But what about the game in Figure~\ref{fi:motivating}?  Can it not be
timed at all?
We will pose the constraint that there must be at least one time unit
between successive actions in the extensive form.
Without this constraint, we could take the normal form of the game and
let players play it by declaring their entire
strategy at once---but this scheme violates the natural interpretation
of the extensive form, and would allow us
to play games of imperfect recall just as well.  (One may argue that
we should just let the players play in parallel after
the coin
flip
in the game in Figure~\ref{fi:motivating}---however, a simple
modification of the game where the second player
is only offered a bet if the first player guessed correctly
(Figure~\ref{fi:modified2}) would disallow this move.)
It is easy to see that
no {\em deterministic} timing will suffice.  This is because every
node within an information set would have to have
the same time associated with it; but then, the left-hand side of the
tree requires that player $1$'s information set
has a time strictly before that of player $2$, but the right-hand side
implies the opposite.

For games where deterministic timing cannot be done, one might turn to 
randomized timing when trying to implement the game. However, if
the time at which a node is played is
to reveal {\em no information whatsoever} about which node in the
information set has been reached, then the
{\em distribution} over times at which it is played must be identical for
each node in the information set.  But this
cannot be achieved in the game in Figure~\ref{fi:motivating}, because
the left-hand side of the tree ensures that
the expectation of the time distribution for player $1$'s information
set must be at least $1$ lower than that
for player $2$'s information set, but the right-hand side implies the opposite.
Still, we may achieve {\em something} with randomization.
For example, we may draw an integer $i$ uniformly at random from 
$[N-1]=\{1, \ldots, N-1\}$,
offer the first player a bet at time $i$
and the second player a bet at time $i+1$. Then, if a player is offered a bet
at time $1$ or time $N$, the player will know exactly
at which node in the extensive form she is.  On the other hand, if she
is offered a bet at any time
$t \in \{2, \ldots, N-1\}$, she obtains
no additional information at all, because the conditional probability
of $t$ being the selected time is
the same whether she is the first or the second player to move.
Hence, as long as $i \in \{2, \ldots, N-2\}$, which happens with
probability $(N-3)/(N-1)$,
neither player learns anything from the timing.
We say the game is {\em approximately timeable}: we can come
arbitrarily close to timing the game by increasing
$N$, the number of time periods used.
This immediately raises the question of whether {\em all} games are
approximately timeable,
and if so how large $N$ needs to be for a particular approximation.

\subsection{Our contribution}
In the next section we define \emph{exactly timeable games}, give a characterization of these games, and show that there is a linear-time algorithm that decides whether an extensive-form game is exactly timeable. In Section \ref{sec:epstime} we define $\epsilon$-timeability and argue that this is the correct definition. In Section \ref{sec:upper} we show that all extensive-form games are $\epsilon$-timeable for any $\epsilon>0$, but in Section \ref{sec:lower} we show that these $\epsilon$-timings can easily become too time-consuming for this universe: for any number $r$, there exists a game $\Gamma_r$ such that for sufficiently small $\epsilon$, any $\epsilon$-timing of $\Gamma_r$ will take time at least $2^{2^{\dots 2^{\frac{1}{\epsilon}}}}$ where the tower has height $r$. In Section \ref{sec:imperfect} we ask what happens if we have some control over the players' perception of time. We assume that there exists a constant $c$ such that any player will always perceive a time interval of length $t$ as having length between $\frac{t}{c}$ and $ct$, and otherwise we have complete control over the players' perception of time. We show that even under these assumptions, the lower bound from Section \ref{sec:lower} still holds.

\section{Exactly timeable games}
\begin{defi}
For an extensive-form game\footnote{For an introduction to the game-theoretical concepts used in this paper, see, for example,~\cite{AGT07}} $\Gamma$, a \emph{deterministic timing} is a labelling of the nodes in $\Gamma$ with non-negative real numbers such that the label of any node is at least one higher than the label of its parent. A deterministic timing is \emph{exact} if any two nodes in the same information set have the same label.
\end{defi}
An exact deterministic timing is the same as the time function in the
definition of a chronological order by~\cite{weibull2009lecturenotes}.
Since we will also be discussing games that cannot be timed, we need this more
general definition of timings that are not exact.

This definition allows times to be nonnegative real numbers rather than 
integers, which makes some of the proofs cleaner. However, given a deterministic timing with real values, one can always 
turn it into a timing with integer values by taking the floor function of each 
of the times. 

  The following theorem says that it is easy to check whether a game has an exact deterministic timing, providing multiple equivalent criteria. Criterion $2$ is presumably most useful for a human being looking at small extensive-form games, while criterion $3$ is easy for a computer to check.
  
\begin{theo}\label{theo:char}
For an extensive-form game $\Gamma$, the following are equivalent:
\begin{enumerate}
\item $\Gamma$ has an exact deterministic timing.
\item The game tree $\Gamma$ can be drawn in such a way that a node always has a lower $y$-coordinate than its parent, and two nodes belong to the same information set if and only if they have the same $y$-coordinate.
\item Contracting each information set in the directed graph $\Gamma$ to a single node results in a graph without oriented cycles.
\end{enumerate}
\end{theo}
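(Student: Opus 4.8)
The plan is to prove the cycle of implications $1 \Rightarrow 2 \Rightarrow 3 \Rightarrow 1$, which keeps each individual step light.

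For $1 \Rightarrow 2$: given an exact deterministic timing $\tau$, draw each node $v$ at $y$-coordinate $-\tau(v)$ (the sign flip so that children sit below parents). Since the timing is a labelling with the parent's label at least one lower, a child has strictly larger $\tau$, hence strictly smaller $y$-coordinate, as required. Two nodes in the same information set share a label by exactness, hence the same $y$-coordinate. The one subtlety is the ``only if'' direction of the information-set condition in criterion~2: distinct information sets could accidentally be assigned the same label by $\tau$. I would handle this by a cheap perturbation — break ties by slightly shifting the labels of different information sets so that no two distinct information sets share a $y$-coordinate, while preserving the strict parent/child inequality (we have slack $1$ to play with, and only finitely many information sets, so a small enough perturbation works). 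The $x$-coordinates can be chosen arbitrarily, e.g. to make the drawing planar, though criterion~2 as stated does not even demand planarity.

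For $2 \Rightarrow 3$: suppose such a drawing exists, and let $H$ be the graph obtained by contracting each information set to a single node. Assign to each contracted node the common $y$-coordinate of the information set it came from. Every original edge goes from a higher $y$-coordinate to a strictly lower one; after contraction an edge of $H$ between two distinct contracted nodes still connects a strictly higher $y$-value to a strictly lower one (if the endpoints had equal $y$-coordinates they would be in the same information set and hence already identified). So $y$ is a function on the vertices of $H$ that strictly decreases along every edge, which forbids any oriented cycle.

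For $3 \Rightarrow 1$: let $H$ be the contracted graph, which by hypothesis is a finite directed acyclic graph (on the contracted nodes). Define, for each contracted node $w$, the value $\ell(w)$ to be the length of the longest directed path in $H$ ending at $w$ — well-defined precisely because $H$ is acyclic and finite. Then along any edge $w \to w'$ of $H$ we have $\ell(w') \geq \ell(w) + 1$. Pull this back to $\Gamma$: set $\tau(v) = \ell(w)$ where $w$ is the contracted node containing $v$. By construction $\tau$ is constant on information sets (hence exact), and if $u$ is the parent of $v$ in $\Gamma$ then either $u,v$ lie in the same information set — impossible in a game of perfect recall along a single play, but in any case if it happened the edge would be a self-loop in $H$, contradicting acyclicity — or they lie in distinct contracted nodes joined by an edge, so $\tau(v) \geq \tau(u) + 1$. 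Thus $\tau$ is an exact deterministic timing.

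I expect the main obstacle to be the bookkeeping in $1 \Rightarrow 2$ around the ``if and only if'' in criterion~2 (the perturbation argument to separate distinct information sets), together with making sure in $3 \Rightarrow 1$ that the contraction genuinely produces a simple DAG with no self-loops — i.e. that no information set contains a node together with one of its own ancestors. The latter is where one invokes that $\Gamma$ is a game tree (a node and its strict ancestor lie on a common play and so cannot share an information set, by perfect recall / the tree structure), so an edge of $\Gamma$ never becomes a self-loop under contraction; everything else is a routine longest-path argument on a finite DAG.
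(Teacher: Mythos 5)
Your proof is correct and follows essentially the same route as the paper's: the cycle $1 \Rightarrow 2 \Rightarrow 3 \Rightarrow 1$, with the same perturbation trick to separate distinct information sets in $1 \Rightarrow 2$, the same ``edges go strictly downward'' observation for $2 \Rightarrow 3$, and a topological-order argument for $3 \Rightarrow 1$ (you use longest-path-to-node where the paper invokes a generic topological numbering of a DAG, but these are interchangeable). Your parenthetical worry about self-loops in the contracted graph is a real point but is already dispatched exactly as you say --- by the acyclicity hypothesis itself, with no need to invoke perfect recall (which the theorem does not assume).
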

\begin{proof}
``$1\Rightarrow 2$:'' Given an exact deterministic timing 
(WLOG, with integer-valued times), we draw $\Gamma$ such that each node has $y$-coordinate equal to the negative of its time. As the timing is exact, nodes in the same information set have the same $y$-coordinate. To ensure that any two nodes with the same $y$-coordinate are in the same information set, we perturb each node based on its information set. 
This can be done deterministically: for example, if there are $q$ information sets in the game, then subtract $i/q$ from the time of each node in the $i$th information set.

``$2\Rightarrow 3$:'' Given such a drawing, contracting each information set results in all edges going downwards, so the resulting graph cannot have directed cycles.

``$3\Rightarrow 1$:'' The nodes of a directed acyclic graph can be numbered such that each edge goes from a smaller to a larger number. This numbering can be used as a deterministic timing.\end{proof}
  
  We can use criterion $3$ of Theorem \ref{theo:char} to test whether the games in 
Figure \ref{fi:motivating} and \ref{fi:modified} are timeable. First we draw 
a node for each information set: One for the root, one for player $1$'s 
information set and one for player $2$'s information set. (If one of the 
players had more than one information set, that player would have had more than one 
node in the contracted graph.) We ignore the leaves, as they can never form 
cycles. 
In the games in Figure \ref{fi:motivating} and \ref{fi:modified}, we 
can get from the root to each of the two players' information sets, so we draw 
a directed edge from the root to each of the two other nodes. We can 
also get from player $1$'s information set to player $2$'s, and in the game 
in Figure \ref{fi:motivating} we can go from player $2$'s information set to 
player $1$'s. When we draw these directed edges (without multiplicity) we get 
Figure \ref{fi:motivatingcon} and Figure \ref{fi:modifiedcon}, respectively. 
We see that the graph in Figure \ref{fi:motivatingcon} has a cycle, so the
game in Figure  \ref{fi:motivating} is not exactly timeable, while graph in Figure 
\ref{fi:modifiedcon} does not have a cycle, so the game in Figure 
\ref{fi:modified} is exactly timeable. 
  The contracted graph can be constructed in linear time, and given this directed 
graph, we can in linear time test for cycles~\cite{CLRS}. Thus, we can test 
in linear time whether a game is exactly timeable.

\begin{figure}
\centering

\subfigure[]{
\begin{tikzpicture}

\draw [-triangle 60] (2,3) -- (1,1.5);
\draw (1,1.5)--(0,0); 
\draw [-triangle 60] (2,3) -- (3,1.5);
\draw (3,1.5) -- (4,0);
\draw [-triangle 60] (0,0) to [out=30,in=180] (2.1,0.5);
\draw (2.1,0.5) to [out=0,in=150] (4,0);
\draw [-triangle 60] (4,0) to [out=210, in=0] (1.9,-0.5);
\draw (1.9,-0.5) to [out=180, in=330] (0,0);

\draw [fill] (0,0) circle [radius=0.05];
\draw [fill] (4,0) circle [radius=0.05];
\draw [fill] (2,3) circle [radius=0.05];

\end{tikzpicture}

\label{fi:motivatingcon}}
\subfigure[]{
\begin{tikzpicture}

\draw [-triangle 60] (2,3) -- (1,1.5);
\draw (1,1.5)--(0,0); 
\draw [-triangle 60] (2,3) -- (3,1.5);
\draw (3,1.5) -- (4,0);
\draw [-triangle 60] (0,0) to (2.1,0);
\draw (2.1,0) to (4,0);

\draw [fill] (0,0) circle [radius=0.05];
\draw [fill] (4,0) circle [radius=0.05];
\draw [fill] (2,3) circle [radius=0.05];

\draw [-triangle 60,white] (4,0) to [out=210, in=0] (1.9,-0.5);

\end{tikzpicture}
\label{fi:modifiedcon}
}

\caption{Example of how to use Theorem \ref{theo:char}. The top node is the Chance node, the left node corresponds to player $1$'s information set, and the right node corresponds to player $2$'s information set.}
\end{figure}
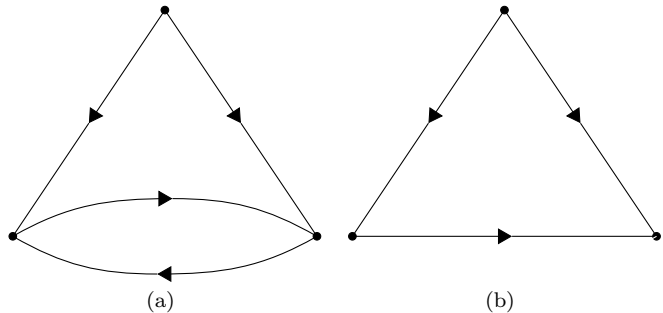


  \section{$\epsilon$-timeability}\label{sec:epstime}
  
  We now move on to approximate timeability.    
  \begin{defi}
  The \emph{total variation distance} (also called \emph{statistical distance}) between two discrete random variables $X_1$ and $X_2$ is given by
  \[\delta(X_1,X_2)=\sum_{x}\max(\Pr(X_1=x)-\Pr(X_2=x),0)\]
  where the sum is over all possible values of $X_1$ and $X_2$. This measure is symmetric in $X_1$ and $X_2$.
  If $\delta(X_1,X_2)\leq \epsilon$ we say that $X_1$ and $X_2$ are \emph{$\epsilon$-indistinguishable}.
  
 A \emph{(randomized) timing} is a discrete distribution over deterministic timings.  
 For a game, a timing of the game, a player and a node $v$ belonging to that player, the player's \emph{timing information} is the sequence of times for nodes belonging to that player on the path from the root to $v$ (including $v$ itself). Thus, for a fixed game, timing, player, and node, the timing information is a random variable.

  The timing is an \emph{$\epsilon$-timing} if for any two nodes in the same information set, the total variation distance between the timing information at the two nodes is less than $\epsilon$. A $0$-timing is also called an \emph{exact timing}.
 
  A game is \emph{exactly timeable} if it has an exact timing,
   \emph{$\epsilon$-timeable} if it has an $\epsilon$-timing, and \emph{approximately timeable} if it is $\epsilon$-timeable for all $\epsilon>0$.
  \end{defi}
  
  The following proposition implies that $\Gamma$ being exactly timeable is equivalent to each of the three criteria in Theorem \ref{theo:char}.
  \begin{prop}
  A game is exactly timeable if and only if it has an exact deterministic timing.
  \end{prop}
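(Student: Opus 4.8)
The plan is to show both directions, where only one requires any work. The ``if'' direction is immediate: an exact deterministic timing is a (degenerate) randomized timing supported on a single deterministic timing, and since every node in an information set receives the same label, the timing information at any two nodes of the same information set is literally the same deterministic sequence, hence the total variation distance between them is $0 < \epsilon$ for any $\epsilon$, and in particular it is an exact timing. So the content is the ``only if'' direction: if a game has an exact (randomized) timing, then it has an exact deterministic timing.

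For the ``only if'' direction, I would argue that an exact randomized timing forces a combinatorial constraint that we already know is equivalent to exact deterministic timeability, namely criterion $3$ of Theorem~\ref{theo:char} (the contracted graph is acyclic). Suppose for contradiction that the contracted graph has an oriented cycle; this means there is a sequence of information sets $I_0, I_1, \dots, I_{k-1}, I_k = I_0$ together with, for each consecutive pair, a directed path in $\Gamma$ from some node of $I_j$ to some node of $I_{j+1}$. Given any deterministic timing in the support of the randomized timing, labels strictly increase along edges, so along each such path the label increases by at least the number of edges, which is at least $1$. The key point is to extract, for each information set $I_j$, a well-defined ``expected time,'' and to show it must strictly decrease around the cycle, which is absurd. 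The obstacle is that in a randomized timing the label of a node is itself a random variable, and different nodes within the same information set need not have the same label in a given deterministic timing in the support — exactness only constrains the \emph{timing information}, i.e.\ the sequence of times of that player's own nodes on the path to $v$.

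To handle this I would track, for a node $v$ belonging to player $p$, the last coordinate of $p$'s timing information at $v$ — call it $T(v)$, the time label of $v$ itself, viewed as a random variable under the randomized timing. If $v, v'$ lie in the same information set $I$ of player $p$, then exactness says the full timing-information vectors at $v$ and $v'$ are identically distributed, so in particular $T(v)$ and $T(v')$ have the same distribution, hence the same expectation; define $\tau(I) = \E[T(v)]$ for any $v \in I$. Now if there is a directed path in $\Gamma$ from $v \in I_j$ to $w \in I_{j+1}$, then along any deterministic timing in the support, the label of $w$ exceeds that of $v$ by at least $1$ (there is at least one edge, and each edge raises the label by at least $1$), so $T(w) \ge T(v) + 1$ pointwise, and taking expectations $\tau(I_{j+1}) \ge \tau(I_j) + 1$. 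Chaining around the cycle $I_0, I_1, \dots, I_k = I_0$ gives $\tau(I_0) \ge \tau(I_0) + k$ with $k \ge 1$, a contradiction. Hence the contracted graph is acyclic, so by Theorem~\ref{theo:char} the game has an exact deterministic timing. (One small point to check carefully: the path in $\Gamma$ realizing an edge of the contracted cycle starts at \emph{some} node of $I_j$ and ends at \emph{some} node of $I_{j+1}$, and the pointwise inequality $T(w) \ge T(v)+1$ uses only those particular endpoints, while $\tau$ is well-defined on the whole information set by exactness — so the chaining is legitimate.)
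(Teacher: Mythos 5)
Your proof is correct, and the key insight you use --- that exactness forces $\E[T(v)]$ (the expected time label of $v$) to be constant on each information set while it strictly increases by at least one along each edge --- is exactly the engine behind the paper's own argument. The difference is one of packaging: you use this insight to establish criterion~3 of Theorem~\ref{theo:char} (acyclicity of the contracted graph) and then cite that theorem to get an exact deterministic timing, whereas the paper is more direct and self-contained: it simply observes that the labeling $v \mapsto \E[T(v)]$ \emph{is itself} an exact deterministic timing --- constant on information sets by exactness, increasing by at least one along edges because that holds pointwise in every realization --- with no appeal to Theorem~\ref{theo:char} needed. Your detour through acyclicity and Theorem~\ref{theo:char} is sound, but the direct construction is shorter and avoids the dependency.
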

  \begin{proof}
  An exact deterministic timing is a special case of an exact randomized timing. Conversely, given an exact randomized timing of a game, we can label each node with its expected time to obtain an exact deterministic timing.
  \end{proof}
  
 We will show that in fact all games are approximately timeable. 
We will need the following properties of total variation distance. For proofs, see~\cite{NoF}.


\begin{prop}[Data Processing Inequality for Total Variation Distance]\label{prop:dpi}
Suppose $X_1$ and $X_2$ have total variation distance $\epsilon$, $Y$ is a random variable independent from $X_1$ and $X_2$, and $f$ is a function. Then the total variation distance between $f(X_1,Y)$ and $f(X_2,Y)$ is at most $\epsilon$.
\end{prop}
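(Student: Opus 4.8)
The plan is to factor this into two elementary monotonicity properties of total variation distance and then compose them. First I would show that for any function $g$ and any two discrete random variables $Z_1,Z_2$ one has $\delta(g(Z_1),g(Z_2))\le\delta(Z_1,Z_2)$. This follows by grouping terms in the defining sum: for each value $z$,
$\Pr(g(Z_1)=z)-\Pr(g(Z_2)=z)=\sum_{w:g(w)=z}\big(\Pr(Z_1=w)-\Pr(Z_2=w)\big)\le\sum_{w:g(w)=z}\max\big(\Pr(Z_1=w)-\Pr(Z_2=w),0\big)$, so taking positive parts and summing over all $z$ gives $\delta(g(Z_1),g(Z_2))\le\sum_w\max(\Pr(Z_1=w)-\Pr(Z_2=w),0)=\delta(Z_1,Z_2)$, where the last equality is just a regrouping of the nonnegative terms by $w$.

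Second I would show that adjoining an independent copy of $Y$ does not change the distance, i.e. $\delta((X_1,Y),(X_2,Y))=\delta(X_1,X_2)$. Here independence lets us write $\Pr((X_i,Y)=(x,y))=\Pr(X_i=x)\Pr(Y=y)$, hence $\Pr((X_1,Y)=(x,y))-\Pr((X_2,Y)=(x,y))=\Pr(Y=y)\big(\Pr(X_1=x)-\Pr(X_2=x)\big)$; taking positive parts, summing over $y$ (which contributes a factor $\sum_y\Pr(Y=y)=1$) and then over $x$ recovers exactly $\delta(X_1,X_2)$.

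Finally, applying the first property with $g=f$ and $Z_i=(X_i,Y)$, and then the second property, yields $\delta(f(X_1,Y),f(X_2,Y))\le\delta((X_1,Y),(X_2,Y))=\delta(X_1,X_2)=\epsilon$, which is the claim.

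I do not expect a real obstacle here: the only points needing a little care are that $f(X_1,Y)$ and $f(X_2,Y)$ (and the pairs $(X_i,Y)$) may range over different sets, which is harmless since the defining sum simply includes the extra values with one probability equal to zero, and that the independence hypothesis is used precisely to factor the joint law in the second step. One could alternatively run a one-shot argument via the dual characterization $\delta(Z_1,Z_2)=\max_A\big(\Pr(Z_1\in A)-\Pr(Z_2\in A)\big)$, using $\{f(X_i,Y)\in B\}=\{(X_i,Y)\in f^{-1}(B)\}$ and conditioning on $Y=y$ to reduce an event about the pair to an event about $X_i$ alone; but the two-step version above keeps the bookkeeping minimal.
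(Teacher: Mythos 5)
The paper does not actually include a proof of this proposition; it is stated together with Propositions~\ref{prop:disjoint}--\ref{prop:givenT} with a pointer to an external reference (\cite{NoF}). So there is no in-paper argument to compare against. Your blind proof is correct and is the standard argument: the reduction to two monotonicity facts (post-processing by a deterministic $g$ cannot increase $\delta$, and adjoining an independent $Y$ preserves $\delta$ exactly) is clean, the inequality $\max(\sum a_i,0)\le\sum\max(a_i,0)$ is exactly what is needed in the first step, and the factorization $\Pr((X_i,Y)=(x,y))=\Pr(X_i=x)\Pr(Y=y)$ together with $\sum_y\Pr(Y=y)=1$ gives the second step with equality. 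Your closing remark about the dual characterization via $\max_A\bigl(\Pr(Z_1\in A)-\Pr(Z_2\in A)\bigr)$ is also a valid alternative route. One small point of hygiene worth making explicit: the statement implicitly treats $X_1$ and $X_2$ as two laws rather than two jointly defined variables, so ``$Y$ independent from $X_1$ and $X_2$'' should be read as forming the product laws $X_1\otimes Y$ and $X_2\otimes Y$; your step two uses exactly this reading, which is the intended one.
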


\begin{prop}\label{prop:disjoint}
Let $X_1,\dots, X_n,Y_1,\dots ,Y_n,I$ be independent random variables with $X_i$ and $Y_i$ distributed on $\mathcal{X}_i$, and $I$ distributed on $[n]$. Let $X=X_I$ and $Y=Y_I$. We have
$\delta(X,Y)\leq \sum_{i=1}^n\Pr(I=i)\delta(X_i,Y_i)$,
with equality if all the $\mathcal{X}_i$'s are pairwise disjoint.
\end{prop}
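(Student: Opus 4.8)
The plan is to compute the probability mass functions of $X$ and $Y$ directly and then exploit the subadditivity of the positive-part function $t\mapsto\max(t,0)$. Since $I$ is independent of the $X_i$ and of the $Y_i$, for every value $x$ we have $\Pr(X=x)=\sum_{i=1}^n\Pr(I=i)\Pr(X_i=x)$ and likewise $\Pr(Y=x)=\sum_{i=1}^n\Pr(I=i)\Pr(Y_i=x)$. Hence
\[\delta(X,Y)=\sum_x\max\!\Big(\sum_{i=1}^n\Pr(I=i)\big(\Pr(X_i=x)-\Pr(Y_i=x)\big),\,0\Big).\]
Using $\max(a+b,0)\le\max(a,0)+\max(b,0)$ repeatedly, together with $\max(ca,0)=c\max(a,0)$ for $c\ge 0$, the inner maximum is at most $\sum_{i=1}^n\Pr(I=i)\max(\Pr(X_i=x)-\Pr(Y_i=x),0)$. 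Summing over $x$ and swapping the order of summation gives $\delta(X,Y)\le\sum_{i=1}^n\Pr(I=i)\,\delta(X_i,Y_i)$.

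For the equality statement, suppose the $\mathcal{X}_i$ are pairwise disjoint, and fix a value $x$. If $x\notin\bigcup_i\mathcal{X}_i$, then $\Pr(X_i=x)=\Pr(Y_i=x)=0$ for every $i$, so $x$ contributes $0$ to both sides. Otherwise $x$ lies in exactly one of the sets, say $\mathcal{X}_{i(x)}$, and then all terms with $i\ne i(x)$ vanish, so
\[\max\!\Big(\sum_{i=1}^n\Pr(I=i)\big(\Pr(X_i=x)-\Pr(Y_i=x)\big),0\Big)=\Pr(I=i(x))\max\big(\Pr(X_{i(x)}=x)-\Pr(Y_{i(x)}=x),0\big).\]
Thus the subadditivity step above holds with equality term by term, and summing over $x$ grouped by which $\mathcal{X}_i$ contains it yields $\delta(X,Y)=\sum_{i=1}^n\Pr(I=i)\,\delta(X_i,Y_i)$.

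A conceptually cleaner route to the inequality, if preferred, is via maximal couplings: since $\delta$ depends only on the marginals, recouple each pair $(X_i,Y_i)$ so that $\Pr(X_i\ne Y_i)=\delta(X_i,Y_i)$, keep $I$ independent of everything, and observe $\delta(X,Y)\le\Pr(X_I\ne Y_I)=\sum_i\Pr(I=i)\,\delta(X_i,Y_i)$. Either way there is essentially no obstacle beyond bookkeeping; the one point worth emphasizing is \emph{why} one gets the weighted bound rather than the cruder $\sum_i\delta(X_i,Y_i)$ that the Data Processing Inequality would give when applied to $(X_1,\dots,X_n)$ versus $(Y_1,\dots,Y_n)$ — namely, that $I$ selects a single coordinate, so only the discrepancy in that one coordinate is ever exposed, and in the disjoint case this localization is exact.
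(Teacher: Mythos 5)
Your proof is correct. Note, though, that the paper does not actually include its own proof of this proposition --- it groups Propositions~\ref{prop:dpi}--\ref{prop:givenT} together and defers all of them to the cited reference \cite{NoF} --- so there is nothing in-text to compare against. Your first argument is the standard one: write $\Pr(X=x)=\sum_i\Pr(I=i)\Pr(X_i=x)$ using independence of $I$ from the $X_i$'s, then apply subadditivity and positive homogeneity of $t\mapsto\max(t,0)$ termwise in $x$, and observe that disjointness of the supports forces each inner sum over $i$ to collapse to a single term, turning the inequality into an equality. That is complete and airtight. The maximal-coupling variant you sketch as an alternative is also valid for the inequality direction, and your closing remark about why this gives the weighted bound $\sum_i\Pr(I=i)\delta(X_i,Y_i)$ rather than the cruder $\sum_i\delta(X_i,Y_i)$ from a naive data-processing argument is a genuinely useful observation, since it is exactly the improvement the paper needs when invoking this proposition in the proofs of Proposition~\ref{prop:sym} and Lemma~\ref{lemm:imperfecttime}.
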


\begin{prop}\label{prop:tooln2}
If $X_1$ and $X_2$ are discrete random variables taking real values in an interval $[a,b]$ and $\E X_2\geq \E X_1+1$ then 
$\delta(X_1,X_2)\geq \frac{1}{b-a}.$
\end{prop}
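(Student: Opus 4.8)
The plan is to reduce the claim to the standard principle that total variation distance controls differences of expectations of bounded functions. Concretely, I would first establish the auxiliary fact that for any function $g\colon\R\to[c,d]$ one has $|\E g(X_1)-\E g(X_2)|\le (d-c)\,\delta(X_1,X_2)$. Granting this, the proposition is immediate: apply it with $g$ the identity function restricted to $[a,b]$, so that $d-c=b-a$, and combine with the hypothesis $\E X_2-\E X_1\ge 1$ to get $1\le \E X_2-\E X_1=\E g(X_2)-\E g(X_1)\le (b-a)\,\delta(X_1,X_2)$, hence $\delta(X_1,X_2)\ge 1/(b-a)$.

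To prove the auxiliary fact, write $p_x=\Pr(X_1=x)$ and $q_x=\Pr(X_2=x)$; since both variables are supported in the bounded interval $[a,b]$, all the series below converge absolutely, so rearrangements are legitimate. Because $\sum_x(q_x-p_x)=0$, I may subtract the constant $c$ and write $\E g(X_2)-\E g(X_1)=\sum_x\big(g(x)-c\big)(q_x-p_x)$. Each factor $g(x)-c$ lies in $[0,d-c]$, so only the terms with $q_x>p_x$ contribute positively, and bounding $g(x)-c\le d-c$ on those terms gives $\E g(X_2)-\E g(X_1)\le (d-c)\sum_{x:\,q_x>p_x}(q_x-p_x)=(d-c)\,\delta(X_2,X_1)=(d-c)\,\delta(X_1,X_2)$, where the last equality is the symmetry of $\delta$ noted in the definition. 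Swapping the roles of $X_1$ and $X_2$ gives the absolute-value form.

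I expect no genuine obstacle here; the only points needing a little care are the absolute convergence of the sums (this is exactly where the boundedness of the range is used — note the statement really is false without a bounded interval, e.g. for shifts of a heavy-tailed distribution) and correctly identifying $\sum_x\max(q_x-p_x,0)$ with $\delta(X_2,X_1)$ and thence, by symmetry, with $\delta(X_1,X_2)$. An alternative route is via a maximal coupling $(X_1,X_2)$ satisfying $\Pr(X_1\ne X_2)=\delta(X_1,X_2)$, for which $\E X_2-\E X_1=\E[(X_2-X_1)\mathbf 1_{\{X_1\ne X_2\}}]\le (b-a)\Pr(X_1\ne X_2)$ yields the same estimate; but the direct summation argument is shorter and does not require invoking the coupling characterization of $\delta$.
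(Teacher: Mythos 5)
Your proposal is correct. The paper itself does not give a proof of this proposition but defers to an external reference (``For proofs, see~\cite{NoF}''), so there is no in-paper argument to compare against; the argument you give is the standard one. Reducing to the inequality $|\E g(X_1)-\E g(X_2)|\le (d-c)\,\delta(X_1,X_2)$ for $[c,d]$-valued $g$, shifting by the constant $c$ so that $\sum_x(q_x-p_x)=0$ can be exploited, and then keeping only the positive part of the resulting sum to recover $\delta(X_2,X_1)=\delta(X_1,X_2)$ is exactly the usual route, and your handling of absolute convergence (via the bounded support) and the symmetry of $\delta$ (which the paper records in the definition) are both sound. The alternative you mention via maximal coupling also works and is the other standard derivation, but the direct summation argument you carry out is clean and complete as written.
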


\begin{prop}\label{prop:givenT}
Let $(X_1,X_2,T)$ be random variables with some joint distribution, where $T$ only takes values $0$ and $1$, $\Pr(T=0)=\epsilon<1$, and $\delta(X_1,X_2)=\delta$. For $i\in\{1,2\}$ define $X_i'=X_i|_{T=1}$. Then 
$\delta(X_1',X_2')\leq \frac{\delta+\epsilon}{1-\epsilon}.$
\end{prop}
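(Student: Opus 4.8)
The plan is to work directly from the definition of total variation distance, expressing everything in terms of joint probabilities with $T$. Write $p = \Pr(T=1) = 1-\epsilon > 0$. For each value $x$ we have, by definition of conditioning,
\[
\Pr(X_i' = x) \;=\; \frac{\Pr(X_i = x,\, T=1)}{1-\epsilon},
\qquad i \in \{1,2\},
\]
so that
\[
\delta(X_1',X_2') \;=\; \frac{1}{1-\epsilon}\sum_x \max\bigl(\Pr(X_1=x,T=1) - \Pr(X_2=x,T=1),\,0\bigr).
\]
Hence it suffices to bound the sum on the right by $\delta + \epsilon$.

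Next I would rewrite each summand using $\Pr(X_i = x, T=1) = \Pr(X_i=x) - \Pr(X_i=x,T=0)$, giving
\[
\Pr(X_1=x,T=1) - \Pr(X_2=x,T=1)
= \bigl(\Pr(X_1=x) - \Pr(X_2=x)\bigr) - \Pr(X_1=x,T=0) + \Pr(X_2=x,T=0).
\]
The key elementary step is the pointwise inequality $\max(a - c + d, 0) \le \max(a,0) + d$ valid for all reals $a$ and all $c,d \ge 0$ (drop the nonpositive term $-c$ and note $\max(d,0)=d$). Applying this with $a = \Pr(X_1=x)-\Pr(X_2=x)$, $c = \Pr(X_1=x,T=0)$, $d = \Pr(X_2=x,T=0)$ yields
\[
\max\bigl(\Pr(X_1=x,T=1) - \Pr(X_2=x,T=1),\,0\bigr)
\;\le\; \max\bigl(\Pr(X_1=x) - \Pr(X_2=x),\,0\bigr) + \Pr(X_2=x,T=0).
\]

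Finally I would sum over all $x$: the first term on the right sums to $\delta(X_1,X_2) = \delta$ by definition, and the second sums to $\sum_x \Pr(X_2=x,T=0) = \Pr(T=0) = \epsilon$. Combining with the displayed formula for $\delta(X_1',X_2')$ gives $\delta(X_1',X_2') \le (\delta+\epsilon)/(1-\epsilon)$, as claimed. There is essentially no hard obstacle here; the only point requiring a little care is the choice in the pointwise inequality of which of the two $T=0$ correction terms to discard (the one entering with a negative sign), so that the conditioning ``costs'' us the $T=0$ mass only once rather than twice.
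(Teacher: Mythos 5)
Your proof is correct. The paper itself does not give a proof of this proposition---it defers to the reference [NoF] (``For proofs, see~[NoF]'')---so there is nothing in the source to compare against directly. That said, your argument is sound and self-contained: the factorization $\Pr(X_i'=x)=\Pr(X_i=x,T=1)/(1-\epsilon)$, the decomposition $\Pr(X_i=x,T=1)=\Pr(X_i=x)-\Pr(X_i=x,T=0)$, and the pointwise inequality $\max(a-c+d,0)\le\max(a,0)+d$ for $c,d\ge 0$ all check out, and summing over $x$ gives exactly $\delta+\epsilon$ in the numerator. This is the natural elementary argument for such a conditioning bound, and it is exactly tight in the sense that each piece (the $\delta$ from the unconditional distance and the $\epsilon$ from the discarded $T=0$ mass) can be realized.
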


\subsection{$\epsilon$-timeability and $\epsilon$-approximate Nash equilibria}

In this section we will argue that our definition of $\epsilon$-timeability is the correct one, because it gives a bound on how much the players can gain in expectation per choice from having the timing information. If one is playing a game $\Gamma$ with timing $X$, one can think of this as really playing a different game $\Gamma'$ whose first node is a Chance node that chooses from among all the possible timings, with a distribution given by $X$. Then, each possible choice by Chance leads to a copy of $\Gamma$, and two nodes in $\Gamma'$ are in the same information set if and only if they correspond to nodes in $\Gamma$ in the same information set and have the same timing information.

We want to show that if $\Gamma$ has utilities in $[0,1]$, $X$ is an $\epsilon$-timing, and a player has at most $m$ nodes in any history, then that player can gain at most $m\epsilon$ in expectation from using the timing information.\footnote{We do not rely on anything specific to timing here: the theorems and examples in this section hold
 for any side information such that the total variation distance between the side information for any two nodes in the same information set is at most $\epsilon$.} First, we show that there are games where the gain can get arbitrarily close to $m\epsilon$.

\begin{example}
Consider a game with $m$ rounds and only one player. In each round, Chance chooses a number uniformly at random from $[k]$ and then the player can either try to guess the number or pass. If she guesses, the game ends in that round and she gets utility $1$ if she was correct and $0$ otherwise. If she passes, the player learns the value that Chance chose, and they play another round. In the last round, she is not allowed to pass. 

It is clear that without any further information the player will get utility exactly $\frac{1}{k}$ no matter what strategy she uses. Now consider the following timing: each node usually happens $1$ later than the previous one. The only exceptions are the nodes belonging to the player. At each of these nodes there is probability $\epsilon$ that it is delayed (chosen independently for each node), and if it is delayed, it will be delayed with $i$ time units, where $i$ is the number chosen by Chance. 


First, we check that this is an $\epsilon$-timing. We choose two nodes $v$
and $w$ in the same information set. As they are in the same information
set, they would have to be in the same round $j$. Let
$X_v=(X_{v,1},\dots,X_{v,j})$ denote the timing information at $v$ and
similar for $w$. Because the player learns Chance's choice after she
passes, two nodes only belong to the same information set if Chance made
the same choices in all previous rounds. Thus, $X_{v,i}=X_{w,i}$ for all
$i<j$, and the only difference between $X_v$ and $X_w$ comes from the
difference between $X_{v,j}$ and $X_{w,j}$.  Then, for fixed
$(x_1,\dots,x_{j-1})$ there is a $t$ such that $\Pr(X_{v,j}=t|(X_{v,1}\dots
X_{v,j-1})=(x_1,\dots x_{j-1}))=1-\epsilon=\Pr(X_{w,j}=t|(X_{v,1}\dots
X_{v,j-1})=(x_1,\dots,x_{j-1}))$.  Thus, $\delta(X_v|_{(X_{v,1},\dots,
  X_{v,j-1}=(x_1,\dots,x_{j-1})},X_{w}|_{(X_{v,1},\dots,
  X_{v,j-1})=(x_1,\dots x_{j-1})})\leq \epsilon$.  By averaging over
possible values of $(x_1,\dots,x_{j-1})$ we get $\delta(X_v,X_w)\leq
\epsilon$, so the timing we have defined is an $\epsilon$-timing.

We can now define a strategy for the player: if she sees a delay of time $i$ she will guess that Chance chose $i$. If she does not see a delay she will pass except in the last round where she will guess randomly. The probability that she does not see a delay is $(1-\epsilon)^m$. Given that she does see a delay she is correct with probability $1$ and otherwise with probability $\frac{1}{k}$. As $(1-\epsilon)^m\approx 1-m\epsilon$ for small $\epsilon$, her expected utility is 
\[(1-(1-\epsilon)^m)+\frac{(1-\epsilon)^m}{k}\approx m\epsilon \frac{1-m\epsilon}{k}=\frac{1+(k-1)m\epsilon}{k}.\]
Thus, for small $\epsilon$, her advantage is close to $\frac{k-1}{k}m\epsilon$, and for large $k$ this is close to $m\epsilon$. 
\end{example}

Next we will see that one cannot gain more than $m\epsilon$ by using the timing information. First we show it for a one-player game with only one information set.

\begin{lemm}\label{lemm:fakeinfo}
Let $\Gamma$ be a one-player game where each history only contains two moves. First, Chance makes a move, and then the player. All the player's nodes belong to the same information set. All utilities are in $[0,1]$. Let $\Gamma'$ be the game $\Gamma$ with an $\epsilon$-timing $X$. For any strategy $\sigma'$ for $\Gamma'$ achieving expected utility $u$ we can construct a strategy $\sigma$ for $\Gamma$ achieving utility at least $u-\epsilon$. This $\sigma$ only depends on $\sigma'$ and the timing, not on Chance's probabilities, the utilities at each node, or $u$. 
\end{lemm}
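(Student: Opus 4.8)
The idea is to describe $\sigma$ directly in terms of $\sigma'$ and the timing $X$, without reference to Chance's probabilities or utilities. In $\Gamma'$, the player, before acting, observes her timing information: the (random) time assigned to her single node. Since there is only one information set, her strategy $\sigma'$ is a function from observed timing information to a (possibly randomized) action. The natural construction of $\sigma$ is: sample a ``fake'' timing observation $t$ from the marginal distribution of timing information that $\sigma'$ would see, independently of the actual node reached, and then play $\sigma'(t)$. The point is that $\sigma$ uses no real information, so it is a legal strategy for $\Gamma$, and it performs almost as well because the real timing observation in $\Gamma'$ is close in total variation distance to this fake one, conditioned on each node.

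\textbf{Key steps.} First I would fix notation: let $c$ range over Chance's moves, leading to the player's node $v_c$; let $T_c$ denote the timing information (a random variable) at $v_c$ under $X$. Since all $v_c$ lie in one information set, the $\epsilon$-timing condition says $\delta(T_c, T_{c'}) \le \epsilon$ for all $c, c'$ — actually the cleanest route is to compare each $T_c$ to a single reference distribution. Second, I would define $\sigma$: let $\tilde T$ be a random variable with some fixed distribution (e.g. equal to $T_{c_0}$ for a fixed reference $c_0$, or the distribution of $T_C$ where $C$ is drawn from Chance's actual distribution — but to keep $\sigma$ independent of Chance's probabilities, better to fix a reference $c_0$), sampled independently of everything; then $\sigma$ plays $\sigma'(\tilde T)$. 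This is well-defined given only $\sigma'$ and the timing $X$. Third, I would bound the utility loss: conditioned on Chance having played $c$, the expected utility under $\sigma'$ depends only on the distribution of the action, which is a function of $T_c$; under $\sigma$ it is the same function applied to $\tilde T$. Since utilities are in $[0,1]$ and the action is a (randomized) function of the observation, the difference in conditional expected utility is at most $\delta(T_c, \tilde T) \le \epsilon$ — this is where the Data Processing Inequality (Proposition~\ref{prop:dpi}) does the work: the map from timing information to expected payoff is a function (composed with taking expectations), so it cannot increase total variation distance, and a function of two $\epsilon$-close variables has expectations differing by at most $\epsilon$ when the function is $[0,1]$-valued. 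Fourth, averaging over Chance's moves: $\E[\text{utility of }\sigma] \ge \E[\text{utility of }\sigma'] - \epsilon = u - \epsilon$.

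\textbf{Main obstacle.} The subtle point is matching the reference distribution $\tilde T$ so that it is within $\epsilon$ of \emph{every} $T_c$ simultaneously, while keeping $\sigma$ independent of Chance's probabilities. Picking $\tilde T = T_{c_0}$ for a fixed $c_0$ works because the $\epsilon$-timing condition gives $\delta(T_{c_0}, T_c) \le \epsilon$ directly (all nodes are in the same information set), and this choice depends only on the timing, not on Chance. The other mild care point is phrasing the ``expected utility is a function of the observed timing information'' argument rigorously: given the player's action distribution is determined by her observation via $\sigma'$, the conditional expected utility given Chance move $c$ is $\E_{t \sim T_c}[g_c(t)]$ where $g_c(t) = \sum_a \sigma'(a \mid t)\, u(c,a) \in [0,1]$, and likewise for $\sigma$ with $T_c$ replaced by $\tilde T$; then $|\E_{T_c}[g_c] - \E_{\tilde T}[g_c]| \le \delta(T_c,\tilde T) \le \epsilon$ by the standard coupling/total-variation bound for bounded functions. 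I do not expect any real difficulty beyond bookkeeping; the lemma is the ``base case'' that the later induction (over the $m$ rounds/information sets) will build on, so keeping the construction of $\sigma$ manifestly independent of the payoff data is the feature worth being careful about.
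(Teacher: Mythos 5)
Your proposal is correct and follows essentially the same route as the paper: fix a reference node (the paper takes node~$1$, you take $c_0$), have the player sample ``fake'' timing information from that reference distribution independently of the actual node, feed it into $\sigma'$, and then bound the per-Chance-move utility loss by $\epsilon$ via the data processing inequality (Proposition~\ref{prop:dpi}) together with the bounded-utility observation (the paper cites Proposition~\ref{prop:tooln2} for the last step, which is equivalent to your coupling bound). Your remark on why the reference must be a fixed $c_0$ rather than a Chance-weighted mixture is exactly the independence-from-Chance property the lemma statement insists on, and the paper realizes it the same way.
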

\begin{proof}
Let $\{1,\dots, k\}$ denote the nodes belonging to the player, and let $X_i$ denote the timing information available in $\Gamma'$ at node $i$. The strategy $\sigma'$ can be given as a function that, given timing information $x$ and some randomness $Y'$, gives an action $\sigma'(x,y')$; then, $\sigma$ should be a function that only takes randomness $Y$ and still outputs an action. We can decide the distribution of the randomness, so we take $Y=(X_1,Y')$ where $X_1$ and $Y'$ are independent. That is, the player generates her own timing information as if she were in node $1$. 

As $\Gamma'$ is $\epsilon$-timed we have $\delta(X_1,X_i)\leq \epsilon$. By Proposition \ref{prop:dpi} this implies \begin{align*}
\delta(u(\sigma(Y)),u(\sigma'(X_i,Y')))&=\delta(u(\sigma'(X_1,Y')),u(\sigma'(X_i,Y)))\leq \epsilon,
\end{align*}
and as $u$ only takes values in $[0,1]$, Proposition \ref{prop:tooln2} implies that $\E u(\sigma'(X_i,Y'))- \E u(\sigma(Y))\leq \epsilon$. Thus, for each possible choice by Chance, the player loses at most $\epsilon$ in expected utility given that choice. 
\end{proof}

We can now extend to general extensive-form games.

\begin{theo}\label{theo:ignoretime}
Let $\Gamma$ be an extensive form game of perfect recall game with utilities in $[0,1]$ where player $i$ has at most $m$ nodes in any history, and let $\Gamma'$ be an $\epsilon$-timed version of $\Gamma$. If $\sigma'_i$ is a player $i$ strategy for $\Gamma'$ there is a strategy $\sigma_i$ that does not use timing information, such that for any strategy profile $(\sigma'_i,\sigma'_{-i})$ for $\Gamma'$ we have $u_i(\sigma'_i,\sigma'_{-i})-u_i(\sigma_i,\sigma'_{-i})\leq m\epsilon$.
\end{theo}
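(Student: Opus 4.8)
The plan is to reduce the general case to Lemma~\ref{lemm:fakeinfo} by processing the game tree one information set at a time, working from the last-acting nodes up toward the root, and accumulating the $\epsilon$ loss each time. The key object is a ``hybrid'' family of strategies: starting from $\sigma'_i$, which uses full timing information at all of player $i$'s nodes, I would define intermediate strategies that, at information sets encountered late along a history, use a \emph{fabricated} timing signal (generated internally as in Lemma~\ref{lemm:fakeinfo} from a fixed reference node of that information set), while earlier information sets still use the real timing. Perfect recall is what makes this coherent: at any information set $u$ of player $i$, the player's past action history is constant across $u$, so her behaviour there is a well-defined function of her timing information at $u$ together with her internal randomness, and fabricating that one signal does not disturb what she does elsewhere.

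The main step is the induction. Order player $i$'s information sets by the maximum number of player-$i$ nodes preceding them (equivalently, process the ``deepest'' information sets first). When I replace the real timing signal at a single information set $U$ by a fabricated one, I want to argue the expected utility drops by at most $\epsilon$. This is exactly the content of Lemma~\ref{lemm:fakeinfo}, applied not to the original game but to the ``conditional continuation game'' rooted at $U$: condition on everything that has happened up to the moment $U$ is reached (Chance's moves, the opponents' moves, player $i$'s own earlier moves and the timing signals she has already consumed), collapse $U$ into a single-information-set one-player-move problem whose ``Chance move'' is the choice of which node of $U$ was reached and whose payoffs are the conditional expected continuation utilities (which lie in $[0,1]$ since the original utilities do), and invoke the lemma to get a new behaviour at $U$, depending only on fabricated randomness, losing at most $\epsilon$ in that conditional game. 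Since the lemma's $\sigma$ depends only on $\sigma'$ and the timing --- not on the Chance probabilities or payoffs --- the same rule works simultaneously for every conditioning, and averaging over the conditioning gives an unconditional loss of at most $\epsilon$ for this one replacement; Proposition~\ref{prop:dpi} (data processing) justifies pushing the total-variation bound through the rest of the tree. Crucially the later information sets have already been ``derandomized,'' so this replacement does not interact with them.

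After processing all of player $i$'s information sets --- and by perfect recall each history contains at most $m$ of them along any path --- the resulting strategy $\sigma_i$ uses no timing information at all, and telescoping the per-step bounds gives $u_i(\sigma'_i,\sigma'_{-i}) - u_i(\sigma_i,\sigma'_{-i}) \le m\epsilon$ for every fixed opponent profile $\sigma'_{-i}$, since each replacement's guarantee was uniform over the rest of the game. I would note that the bound is ``$\le m\epsilon$'' with $m$ the maximum number of player-$i$ nodes in \emph{any} history, matching the hypothesis, and that the construction of $\sigma_i$ is oblivious to $\sigma'_{-i}$ and to the utilities, just as in the lemma.

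I expect the main obstacle to be setting up the conditioning in the inductive step cleanly: one must be careful that ``condition on everything before $U$'' really does leave player $i$'s behaviour at $U$ as a function of (real timing at $U$, fresh internal randomness) alone, that the fabricated-signal substitution is independent of the already-processed deeper information sets, and that the total-variation bound $\delta(X_U, X_{\text{ref}}) \le \epsilon$ survives being fed into the remaining (now partly derandomized) strategy --- this last point is where Proposition~\ref{prop:dpi} does the work, treating the rest of the play as the function $f$ and the independent fresh randomness as $Y$. The rest is bookkeeping: choosing the processing order so that ``deeper'' information sets are handled first, and verifying that perfect recall guarantees at most $m$ of them per history.
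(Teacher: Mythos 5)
Your overall plan is the same as the paper's: apply Lemma~\ref{lemm:fakeinfo} at each of player $i$'s information sets to fabricate its timing information, organize the information sets by how many player-$i$ nodes precede them, and telescope the losses. But there is a real gap in the final accounting step, and it is precisely the step where the factor $m$ (rather than ``number of information sets'') has to enter.

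You replace one information set $U$ at a time and assert that ``averaging over the conditioning gives an unconditional loss of at most $\epsilon$ for this one replacement,'' and then that ``telescoping the per-step bounds gives $\le m\epsilon$.'' Telescoping a per-replacement bound of $\epsilon$ gives $K\epsilon$, where $K$ is the total number of player-$i$ information sets --- and $K$ is in general much larger than $m$. The fact that any single history contains at most $m$ of these information sets does not, by itself, turn a sum of $K$ terms each bounded by $\epsilon$ into $m\epsilon$. What is actually needed is the sharper per-replacement bound: Lemma~\ref{lemm:fakeinfo} gives loss at most $\epsilon$ \emph{conditionally on reaching} $U$, so the unconditional loss from replacing $U$ is at most $\Pr(\text{reach }U)\cdot\epsilon$, not $\epsilon$. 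One must then observe that $\sum_U \Pr(\text{reach }U)$ equals the expected number of player-$i$ information sets visited in a play, which by perfect recall is at most $m$. Equivalently --- and this is how the paper phrases it --- group the information sets into at most $m$ layers by the number of preceding player-$i$ nodes; information sets within a layer are mutually exclusive (any history passes through at most one), so replacing an entire layer costs at most $\epsilon$ in total, and summing over at most $m$ layers gives $m\epsilon$. Without the mutual-exclusivity / $\Pr(\text{reach }U)$ refinement, your telescoping argument only yields $K\epsilon$.

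A secondary, more cosmetic point: the ``signal'' being fabricated at $U$ is the entire vector of perceived times at all of player $i$'s nodes up to and including $U$ (and the paper is explicit that fresh fabricated times are drawn even for nodes whose times were fabricated at earlier, deeper information sets), not a single number; your phrase ``that one signal'' should be read this way, and it is worth stating that the fabrications at distinct information sets are drawn independently so that Proposition~\ref{prop:dpi} applies cleanly.
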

\begin{proof}
Fix $\Gamma$ and $\Gamma'$ and strategy profile $(\sigma'_i,\sigma'_{-i})$. For each information set in $\Gamma$ we modify player $i$'s strategy as in Lemma \ref{lemm:fakeinfo}. Notice that we do not need to know the other players' strategies to do this. For each information set, she generates new timing information even for the previous nodes for which she has already generated timing information before. 

For each node $v$ we now compute player $i$'s expected utility given that she is at that node. We know from Lemma  \ref{lemm:fakeinfo} that when we change the strategy at one node, the expected utility given that we reach that node goes down with at most $\epsilon$. We order the nodes belonging to player $i$ in layers: If there are $j$ nodes belonging to player $i$ on the path from the root to node $v$, we put node $v$ in the $j$'th layer. As we have perfect recall, each information set is contained in one layer, and we have at most $m$ layers. If we modify the strategy for all the information set in one layer at a time, the total excepted utility goes down with at most $\epsilon$ per layer, so if we modify the strategy for all layers, it goes down with at most $m\epsilon$.
\end{proof}

We then obtain the following corollary which justifies our definition of $\epsilon$-timeability.

\begin{coro}
Let $\Gamma$ be a perfect recall game with utilities in $[0,1]$ where each player has at most $m$ nodes in any history. If $\Gamma'$ is an $\epsilon$-timing of $\Gamma$ then any Nash-equilibrium $\sigma$ of $\Gamma$ is an $m\epsilon$-approximate Nash equilibrium of $\Gamma'$. 
\end{coro}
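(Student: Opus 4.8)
The plan is to obtain this as an essentially immediate consequence of Theorem~\ref{theo:ignoretime}. The first step is a bookkeeping observation: a strategy in $\Gamma$ is in particular a strategy in $\Gamma'$ that simply ignores the timing information, and since the timing attached to $\Gamma'$ does not alter any of the payoffs, we have $u_i(\tau)$ computed in $\Gamma'$ equal to $u_i(\tau)$ computed in $\Gamma$ for every profile $\tau$ all of whose component strategies ignore timing. In particular this holds for the given Nash equilibrium $\sigma=(\sigma_i,\sigma_{-i})$, so $u_i^{\Gamma'}(\sigma)=u_i^{\Gamma}(\sigma)$ for every $i$.

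Next I would verify the approximate-equilibrium condition in $\Gamma'$ one player at a time. Fix a player $i$ and an arbitrary deviation $\sigma'_i$ available to player $i$ in $\Gamma'$ (possibly exploiting the timing). Since $\sigma_{-i}$ does not use timing information, it is a legitimate strategy profile for the other players in $\Gamma'$, so I can apply Theorem~\ref{theo:ignoretime} to the profile $(\sigma'_i,\sigma_{-i})$. This produces a timing-free strategy $\sigma_i$ for player $i$ with
\[
u_i^{\Gamma'}(\sigma'_i,\sigma_{-i})-u_i^{\Gamma'}(\sigma_i,\sigma_{-i})\le m\epsilon .
\]
Here it is important — and guaranteed by the theorem — that the replacement strategy $\sigma_i$ depends only on $\sigma'_i$ and the timing, not on the opponents' strategies, so using it against $\sigma_{-i}$ is legitimate.

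Finally I would chain the inequalities. Because $\sigma_i$ ignores the timing, $u_i^{\Gamma'}(\sigma_i,\sigma_{-i})=u_i^{\Gamma}(\sigma_i,\sigma_{-i})$, and since $\sigma$ is a Nash equilibrium of $\Gamma$ we have $u_i^{\Gamma}(\sigma_i,\sigma_{-i})\le u_i^{\Gamma}(\sigma)=u_i^{\Gamma'}(\sigma)$. Combining these with the displayed bound gives $u_i^{\Gamma'}(\sigma'_i,\sigma_{-i})\le u_i^{\Gamma'}(\sigma)+m\epsilon$. As $i$ and $\sigma'_i$ were arbitrary, no player can improve her payoff in $\Gamma'$ by more than $m\epsilon$ by unilaterally deviating from $\sigma$, which is exactly the assertion that $\sigma$ is an $m\epsilon$-approximate Nash equilibrium of $\Gamma'$.

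The argument is short, and I do not expect a genuine obstacle: the only points requiring care are the identification of $\Gamma$-strategies with timing-ignoring $\Gamma'$-strategies (so that payoffs are literally preserved) and the fact that Theorem~\ref{theo:ignoretime} supplies a replacement strategy that does not depend on the opponents' play, so that it can be substituted against the fixed profile $\sigma_{-i}$ without changing what the other players do.
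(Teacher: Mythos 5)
Your argument is correct and is precisely the deduction the paper intends (the corollary is left unproved in the paper as a direct consequence of Theorem~\ref{theo:ignoretime}): embed the timing-ignoring Nash profile into $\Gamma'$, apply the theorem to any deviation $\sigma'_i$ against the fixed $\sigma_{-i}$ to get a timing-free replacement within $m\epsilon$, and then invoke the Nash property of $\sigma$ in $\Gamma$. The two points you flag — that payoffs agree on timing-ignoring profiles and that the replacement strategy is independent of the opponents' play — are exactly the observations needed to make the chain of inequalities go through.
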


\section{Upper bound}\label{sec:upper}

From~\cite{NoF} we have the following definition and theorem.

\begin{defi}
Let $X_1,\dots,X_n$ be random variables over $\N$ with some joint distribution such that we always have $X_1<X_2<\dots <X_n$.
 We say that $(X_1,\dots, X_n)$ 
 \emph{has $\epsilon$-indistinguishable $m$-subsets} if for any two subsets $\{i_1,\dots,i_m\},\{j_1,\dots j_m\}\subset [n]$ of size $m$, the two random sets $\{X_{i_1},\dots ,X_{i_n}\}$ and $\{X_{j_1},\dots, X_{j_n}\}$ are $\epsilon$-indistinguishable. We slightly abuse notation and say that $(X_1,\dots, X_n)$ \emph{has $\epsilon$-indistinguishable subsets} if for all $m<n$ it has $\epsilon$-indistinguishable $m$-subsets.
\end{defi}

In the following $\exp_2$ denotes the function given by $\exp_2(x)=2^x$, and $\exp_2^n(x)$ denotes iteration of $\exp_2$, so $\exp_2^n(x)=2^{2^{\dots^{2^x}}}$ where the tower contains $n$ $2$'s.

\begin{theo}
For fixed $n$ there exists a function $N:(0,1]\to \N$ such that for all $\epsilon>0$ there exists a distribution of $(X_1,\dots, X_n)$ such that
$1\leq X_1<X_2<\dots<X_n\leq N(\epsilon)$ are all integers and $X$ has
$\epsilon$-indistinguishable subsets. We can choose $N$ such that 
$N(\epsilon)=\exp_2^{n-2}\left(O\left(\frac{1}{\epsilon}\right)\right)$. Conversely, for such a distribution to exist, we must have $N(\epsilon)=\exp_2^{n-2}\left(\Omega\left(\frac{1}{\epsilon}\right)\right)$ for sufficiently small $\epsilon$.
\end{theo}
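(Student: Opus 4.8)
The plan is to prove both halves by induction on $n$. Write $R_n(\epsilon)$ for the smallest $N$ that works, so the claim is $R_n(\epsilon)=\exp_2^{n-2}\!\big(\Theta(1/\epsilon)\big)$. The base case $n=2$ is immediate: the only constraint is $\delta(X_1,X_2)\le\epsilon$, and taking $X_1$ uniform on $\{1,\dots,N-1\}$, $X_2=X_1+1$ gives $\delta(X_1,X_2)=1/(N-1)$, so $N=\lceil 1/\epsilon\rceil+1$ suffices; conversely $X_2\ge X_1+1$ forces $\E X_2\ge\E X_1+1$, so Proposition~\ref{prop:tooln2} gives $\epsilon\ge 1/(N-1)$. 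Thus $R_2(\epsilon)=\exp_2^{0}(\Theta(1/\epsilon))$. The inductive step must gain exactly one exponential in each direction.

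\textbf{Upper bound, inductive step.} First I would pass to a ``gap process'': write $X_i=Y+\sum_{j<i}G_j$, where $Y$ is an independent uniform offset on $\{1,\dots,L\}$ with $L$ polynomially larger than $\max\sum G_j$, and $(G_1,\dots,G_{n-1})$ is a nonnegative integer gap vector still to be designed. Then every $X_i$ is within total variation $O(\tfrac{\max\sum G_j}{L})$ of uniform, so the $m=1$ conditions hold for free, and the internal structure of an $m$-subset $\{X_{i_1}<\dots<X_{i_m}\}$ is the vector of consecutive partial sums $\big(\sum_{j=i_1}^{i_2-1}G_j,\dots,\sum_{j=i_{m-1}}^{i_m-1}G_j\big)$; so, up to the negligible offset error, all $m$-subset conditions reduce to requiring that these partial-sum vectors be pairwise $\epsilon$-close over all choices of subset. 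The model case is $n=3$, where I need $G_1,G_2\ge 1$ with $\delta(G_1,G_2)$, $\delta(G_1,G_1{+}G_2)$, $\delta(G_2,G_1{+}G_2)$ all $\le\epsilon$: take $C=G_1+G_2$ log-uniform on $\{1,\dots,M\}$ with $M=2^{\Theta(1/\epsilon)}$, and $G_1$ uniform on $\{1,\dots,C-1\}$, $G_2=C-G_1$. Then $G_1\stackrel{d}{=}G_2$ by symmetry, and one checks $G_1$ is again essentially log-uniform on $\{1,\dots,M\}$, so $\delta(G_1,C)=O(1/\log M)\le\epsilon$ -- this is precisely where one exponential is purchased. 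For general $n$ I would iterate: make the outermost span log-uniform on a range $\exp_2^{n-2}(O(1/\epsilon))$ and, conditioned on it, build the finer gap structure recursively as an admissible gap process for $n-1$ living at the logarithmic scale (range $\exp_2^{n-3}(O(1/\epsilon))$). The recursion $M_n=2^{O(M_{n-1})}$ with $M_2=O(1/\epsilon)$ gives $M_n=\exp_2^{n-2}(O(1/\epsilon))$; since $n$ is fixed, the $O(\cdot)$ and polynomial losses at each of the $n$ levels are absorbed, and all the total-variation bookkeeping is handled by Propositions~\ref{prop:dpi}, \ref{prop:disjoint} and~\ref{prop:givenT}.

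\textbf{Lower bound, inductive step.} Given an admissible $(X_1,\dots,X_n)$ in $[1,N]$, a warm-up reduction already gets most of the way: restricting the $m<n$ conditions to index sets avoiding the coordinate $n$ and averaging over $X_n$ (Propositions~\ref{prop:disjoint} and~\ref{prop:givenT}) shows that, for a set of values $b$ of probability at least $\tfrac12$, the tuple $(X_1,\dots,X_{n-1})$ conditioned on $X_n=b$ is an admissible $(n-1)$-configuration in $[1,b-1]$ with parameter $O_n(\epsilon)$; hence $N\ge b\ge R_{n-1}(O_n(\epsilon))$. Unfortunately this loses exactly one exponential. To recover it I would argue that an admissible configuration must be ``log-spread'': using the $m=1$ and $m=n-1$ conditions together with strict monotonicity, extract from the configuration a derived $(n-1)$-object supported on only $O(\log N)$ values -- morally the vector of scales $\lfloor\log_2(X_{i+1}-X_i)\rfloor$, which the construction above identifies as the right object, with merging of gaps corresponding to (roughly) taking maxima of scales -- and show this coarse object inherits admissibility with parameter $\Omega(\epsilon)$. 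Then $\log N\ge R_{n-1}(\Omega(\epsilon))=\exp_2^{n-3}(\Omega(1/\epsilon))$, i.e.\ $N\ge\exp_2^{n-2}(\Omega(1/\epsilon))$.

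\textbf{Where the difficulty lies.} The routine ingredients are the offset reduction, the base case, and the total-variation manipulations. The real content is dual. On the upper side it is to exhibit a gap process meeting the $m$-subset conditions for \emph{all} $m$ simultaneously while losing only a constant factor in $\epsilon$ per level: the obvious choice of fresh independent gaps fails already at $n=3$ (it makes $G_1{+}G_2$ macroscopically different from $G_1$), and the log-uniform split is what rescues it. On the lower side it is the scale-compression step: turning one exponential of range into an $(n-1)$-admissible object at the logarithmic scale, with the ``tropical'' behaviour of merged gaps making the inductive hypothesis apply only after some care. I expect that scale-compression step to be the single hardest point of the proof.
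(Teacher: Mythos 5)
This theorem is not actually proved in the paper: it is imported from~\cite{NoF}, and the surrounding text gives only an informal description of the construction (a uniform offset $Y_1$ plus conditionally independent gaps $Y_{i+1}-Y_i\sim\mathrm{Unif}\bigl[2^{X_i}\bigr]$, where $(X_1,\dots,X_n)$ is the $(n-1)$-level object) together with the ``size / distance / ratio of distances'' heuristic for the lower bound, deferring all arguments to~\cite{NoF}. Your sketch independently reconstructs the same essential ideas: a uniform offset disposes of $m=1$ and reduces everything to the gap vector; the gaps must have a log-uniform, ``tropical'' character so that a merged gap is dominated at the log scale by its largest constituent; and each recursion level buys exactly one exponential. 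Your lower-bound plan --- pass to $\lfloor\log_2(X_{i+1}-X_i)\rfloor$ and show the derived $(n-1)$-tuple inherits admissibility at a comparable parameter --- is precisely the natural formalisation of the paper's ratio-of-distances heuristic (one slip: you want the compressed object to be $O(\epsilon)$-admissible, not $\Omega(\epsilon)$-admissible, since $R_{n-1}$ is decreasing in its argument).

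The genuine divergence, and where your proposal is underspecified, is the upper-bound recursion for $n\ge 4$. The paper's sketch builds up: given the $(n-1)$-level $(X_1,\dots,X_n)$, the gaps are drawn \emph{conditionally independently}, $G_i\sim\mathrm{Unif}\bigl[2^{X_i}\bigr]$, so all the $m$-subset symmetries are inherited directly from the indistinguishability of the $X_i$ via the data-processing inequality (Proposition~\ref{prop:dpi}). You split down: total span $C$ log-uniform, then subdivide. For $n=3$ this works and the reflection $G_1\mapsto C-G_1$ gives $G_1\stackrel{d}{=}G_2$ for free; but for $n\ge 4$ the instruction ``build the finer gap structure recursively as an $(n-1)$-admissible gap process at logarithmic scale, conditioned on $C$'' is not yet a construction, and it is unclear how to produce \emph{dependent} gaps summing to a fixed $C$ that simultaneously satisfy all the required symmetries (already at $n=4$, $m=3$ you need $(G_1,G_2)\approx(G_2,G_3)\approx(G_1,G_2{+}G_3)\approx(G_1{+}G_2,G_3)$, and there is no reflection symmetry to lean on). Note also that your dismissal of ``fresh independent gaps'' as failing at $n=3$ targets i.i.d.\ gaps from a fixed range; the paper's gaps are independent but from exponentially spread \emph{random} ranges, which is exactly what rescues the independent-gap approach. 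Since neither the paper nor your proposal gives a complete argument, the fair verdict is that your strategy and the phenomena you isolate match the paper's, but the $n\ge 4$ upper-bound step is a real gap in your version that the paper's conditional-independence sketch avoids.
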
 

The following gives intuition for the upper bound on $N$.
For $n=2$ it is easy to construct $(X_1,X_2)$ that has
$\epsilon$-indistinguishable subsets. For example, we can take $X_1$ to be
uniformly distributed on $[N-k]$ for some constants $N$ and $k$ and set
$X_2=X_1+k$. We can then use a recursive construction for higher $n$. If
$(X_1,\dots, X_n)$ has $\epsilon$-indistinguishable subsets and consecutive
$X_i$'s are usually not too close to each other, we can construct $(Y_1,Y_2
\dots, Y_{n+1})$ that has $\epsilon'$-indistinguishable subsets for some
$\epsilon'$. To do this, we choose $Y_1$ uniformly between $1$ and a
sufficiently large number, and choose each gap $Y_{i+1}-Y_{i}$ uniformly
and independently from $\left[2^{X_i}\right]$. For a proof that this works,
see~\cite{NoF}.

The intuition about the lower bound on $N$ is that for $n=2$ an
$n-1$-subset contains $1$ number, and the \emph{size} of this number gives
away some information about whether it is the higher or lowest. For $n=3$
an $n-1$ subset contains two numbers and their \emph{distance} gives away
some information about whether it is the middle number or another number
that is missing from the set. For $n=4$ an $n-1$ subset contains $3$
numbers, and now the \emph{ratios between the two distances} gives away
information about which number is missing, and so on.

We can use the construction to approximately time any game. 

\begin{theo}
All games with at most $m$ nodes in each history can be $\epsilon$-timed in time $\exp^{m-3}\left(O\left(\frac{1}{\epsilon}\right)\right)$. In particular, all games are approximately timeable.
\label{th:all_approximate}
\end{theo}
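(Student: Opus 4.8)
The plan is to run the game against the single ``global clock'' supplied by the preceding theorem. We may assume $m\ge 3$ (for $m\le 2$ the only internal node of any history is the root, so the game is trivially exactly timeable). Put $n=m-1$ and let $(X_1,\dots,X_n)$ have the distribution guaranteed by the preceding theorem, so that $1\le X_1<\cdots<X_n\le N(\epsilon)$ are integers, $(X_1,\dots,X_n)$ has $\epsilon$-indistinguishable subsets, and $N(\epsilon)=\exp_2^{\,n-2}(O(1/\epsilon))=\exp_2^{\,m-3}(O(1/\epsilon))=\exp^{\,m-3}(O(1/\epsilon))$ (the last equality just absorbs the base change into the $O$). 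Since every history of $\Gamma$ has at most $m$ nodes, every non-terminal node lies at some depth $d\in\{0,\dots,m-2\}$ from the root. Define a randomized timing of $\Gamma$ by: draw $(X_1,\dots,X_n)$, label each non-terminal node at depth $d$ with $X_{d+1}$, and label each terminal node with one more than its parent's label. Because the $X_i$ are strictly increasing integers, every child's label exceeds its parent's by at least $1$, so each realization is a legal deterministic timing; and all labels lie in $[1,N(\epsilon)+1]$, so the timing uses time $\exp^{\,m-3}(O(1/\epsilon))$. Approximate timeability of every game then follows by letting $\epsilon\to 0$.

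It remains to verify that this is an $\epsilon$-timing. Fix an information set $I$ of some player $p$ and two nodes $v,w\in I$. Let $d_1<\cdots<d_\ell$ be the depths of the player-$p$ nodes on the root-to-$v$ path (including $v$) and $e_1<\cdots<e_{\ell'}$ the analogous depths for $w$. One has $\ell=\ell'$, since $v$ and $w$ lie in the same information set of $p$ and so $p$ has made the same number of earlier moves before each — this is automatic under perfect recall; if it fails, the timing information at $v$ and $w$ are sequences of different length, so no $\epsilon$-timing with $\epsilon<1$ exists and the claim is vacuous. The timing information at $v$ is the tuple $(X_{d_1+1},\dots,X_{d_\ell+1})$, which is already sorted in increasing order because $i\mapsto X_i$ is increasing; hence it is the image, under the bijective ``sort'' map on $\ell$-element sets of integers, of the random set $\{X_{d_1+1},\dots,X_{d_\ell+1}\}$, and likewise for $w$. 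The index sets $\{d_1+1,\dots,d_\ell+1\}$ and $\{e_1+1,\dots,e_\ell+1\}$ are two $\ell$-subsets of $[n]$; if $\ell<n$ the $\epsilon$-indistinguishable $\ell$-subsets property gives that the two random sets are $\epsilon$-indistinguishable, while if $\ell=n$ the two index sets coincide and the timing informations are identically distributed. In either case, since total variation distance is unchanged under the bijective sort map (Proposition~\ref{prop:dpi}, applied to the sort and to its inverse), the timing information at $v$ and at $w$ are $\epsilon$-indistinguishable. Thus the timing is an $\epsilon$-timing.

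The genuinely hard combinatorics is already packaged into the preceding theorem; what remains here is a reduction, and its only subtle point is the bookkeeping connecting the two formalisms. Along two different root-to-node paths inside the same information set of $p$, player $p$'s nodes occur at the same number $\ell$ of positions but at possibly different depths, because the other players and Chance may move a different number of times in between — this is exactly why the preceding theorem is stated for indistinguishable $\ell$-subsets rather than for fixed coordinates, and the crux of the argument is simply the observation that a player's timing information is precisely the sorted $\ell$-subset of clock values indexed by those depths. Everything else (legality of the timing, the time bound, and the passage to the subset property) is immediate.
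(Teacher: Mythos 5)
Your proposal is correct and follows essentially the same route as the paper: draw $(X_1,\dots,X_{m-1})$ with $\epsilon$-indistinguishable subsets and assign times by depth, so each player's timing information becomes a sorted subset of the $X_i$'s and the subset-indistinguishability property does the rest. The only departure is cosmetic but slightly cleaner: by giving the root label $X_1$ rather than a fixed time $0$, you avoid the paper's separate case analysis for information sets owned by the player who also owns the root, and you are more explicit than the paper about terminal-node labels, the $\ell = n$ boundary case, and the (implicit) perfect-recall assumption that nodes in a shared information set have the same number of prior moves by that player.
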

\begin{proof}
  Take any game and $\epsilon>0$. We want to show that the game is
  $\epsilon$-timeable. First we find some
  distribution of $(X_1,\dots,X_{m-1})$ that has $\epsilon$-indistinguishable
  subsets. Now we let the time of the root be $0$ and the time of a node at depth $d$ be given by
  $X_d$. As the $X_d$ take values in $\N$ and are increasing this gives a
  timing of the game. If two nodes $v$ and $w$ belong to the same information set, 
 the player $i$ who owns these nodes will have the same number $j-1$ of previous nodes at $v$ and at $w$. 
  As $(X_1,\dots, X_{m-1})$ has $\epsilon$-indistinguishable
  subsets, it has $\epsilon$-indistinguishable $j$-subsets, so if the root does not belong to player $i$ there is total variation distance less at most $\epsilon$ between the two nodes'
  timing information. Similarly, $(X_1,\dots, X_{m-1})$ has $\epsilon$-indistinguishable $j-1$-subsets, so if the root belongs to player $i$ the total variation distance between the two nodes' timing information is also at most $\epsilon$. 
\end{proof}

\section{Lower bound for timing of games}\label{sec:lower}

The downside of the positive result in Theorem~\ref{th:all_approximate} is
that the amount of time needed is astronomical.  
We next show that this is
to some extent inevitable.  Our goal is to show that for any $r$ there
exist games that cannot be $\epsilon$-timed in time
$\exp_2^r\left(O\left(\frac{1}{\epsilon}\right)\right)$. As we only aim to
show the existence of such games, we do not have to think about all
possible games, but can rather concentrate on games that are easy to
analyze. In the following we will only look at \emph{choiceless}
games. These are extensive-form games that begin with a move by Chance and
in which all the following nodes only have one child. The players do not
learn Chance's move, so two nodes are in the same information set if they
belong to the same player and have the same number of previous nodes
belonging to that player. In these games the players never have a choice,
so from the standard game theory point of view, the fact that nodes in this
game belong to players is pointless. However, when we require the game to
be timed, the nodes play an important role: a player learns the time of all
her nodes in the history that is played, and this might reveal some
information about which branch of the tree the players are in. (Note also
that any choiceless game that takes a long time to $\epsilon$-time can be
turned into a game with choice that takes as long time to $\epsilon$-time,
simply letting each player make a move at each node that does not affect
future actions or information in the game.)

We will restrict our attention even further, to a class of games we will
call \emph{symmetric choiceless games}. These games are given by the number
of players $n$ and a finite sequence over $[n]$. In the first move Chance
chooses uniformly at random between all the $n!$ numberings of the $n$
players. After this move, the players have nodes in the order given by the
sequence. For example, if $n=3$ and the sequence is $233112$ then the first
node will belong to the player to whom Chance assigned the number $2$, the
next node will belong to the player to whom Chance assigned the number $3$,
and so on.

An instance of a timing of the $233112$ game would give a time to each of
the $6$ nodes for each of the $6$ histories (that is, for each possible
assignment of numbers $1,2,3$ to the three players). Thus, a deterministic
timing contains $36$ times, and a timing is then a random tuple of $36$
times. The timing is an $\epsilon$-timing if for any player and any two
histories, the timing information for the two histories have statistical distance at most
$\epsilon$. 
This implies that one player, Alice, 
cannot, at any point in the game, tell (to a significant extent) the difference between when she got
assigned the number $1$ and when she got assigned the number $2$. However,
for the history where Alice is $1$ and Bob is $2$, the difference between
Alice's timing information and Bob's timing information is allowed to be
large because they are different players.

We say a timing of a symmetric choiceless game is \emph{symmetric} if the
time of the $i$th node in a history only depends on $i$, not on the
history. In the example above, a symmetric timing would only consist of $6$
times, one for each depth of the tree.
\begin{prop}\label{prop:sym}
  If a symmetric choiceless game has an $\epsilon$-timing using time at
  most $N$, then it has a \emph{symmetric} $\epsilon$-timing using time at
  most $N$.
\end{prop}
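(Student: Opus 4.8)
\textbf{Proof proposal.}

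The plan is to symmetrize a given $\epsilon$-timing by randomizing over \emph{which numbering's schedule to copy wholesale}. Fix the number of players $n$ and the defining sequence $\sigma=(\sigma_1,\dots,\sigma_L)\in[n]^L$, and identify numberings of the players with elements $g\in S_n$, so that the depth-$i$ node in numbering $g$ is owned by player $g^{-1}(\sigma_i)$; write $o_k(v)$ for the position of the $k$-th occurrence of the value $v$ in $\sigma$, so that player $p$'s $k$-th node in numbering $g$ sits at depth $o_k(g(p))$. A deterministic timing is a family $(\tau(i,g))_{i\le L,\ g\in S_n}$ respecting the parent constraints, a randomized timing is a distribution over such families, and it is \emph{symmetric} exactly when every timing in its support has $\tau(i,g)=\tau(i,g')$ for all $g,g'$. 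Given an $\epsilon$-timing $X$ using time at most $N$, I define the candidate symmetric timing $D$ by: sample $\tau\sim X$ together with an independent uniformly random numbering $G\in S_n$, and output the symmetric deterministic timing whose depth-$i$ time is $\tau(i,G)$. Each realization obviously uses time at most $N$ and is a valid timing, so everything reduces to checking that $D$ is an $\epsilon$-timing.

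The key step is to rewrite the timing information under $D$ in terms of timing information under $X$. Consider a node $u$ of the game lying in numbering $g_1$, owned by player $p$ with occurrence count $k$, and set $v_1:=g_1(p)$. Since the depths of $p$'s first $k$ nodes are $o_1(v_1),\dots,o_k(v_1)$ no matter what $g_1$ is, the timing information at $u$ under $D$ is $(\tau(o_1(v_1),G),\dots,\tau(o_k(v_1),G))$ with $\tau\sim X$ and $G$ uniform; and in numbering $G$ those same depths are occupied by player $G^{-1}(v_1)$ with occurrence counts $1,\dots,k$. Hence the timing information at $u$ under $D$ has the same distribution as ``sample $\tau\sim X$ and a uniform $G$, then read off the $X$-timing information of the $k$-th node of player $G^{-1}(v_1)$ in numbering $G$''; in particular it depends on $g_1$ only through $v_1$.

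Now take two nodes $u,w$ of a single information set, $u$ in numbering $g_1$ and $w$ in numbering $g_2$, both owned by $p$ with occurrence count $k$; put $v_1=g_1(p)$, $v_2=g_2(p)$. If $v_1=v_2$ the two distributions just described coincide, so assume $v_1\neq v_2$ and fix $\psi\in S_n$ with $\psi(v_1)=v_2$. Because $\psi G$ is uniform when $G$ is, and $(\psi G)^{-1}(v_2)=G^{-1}(v_1)$, the timing information at $w$ under $D$ also equals ``sample $\tau\sim X$ and a uniform $G$, then read off the $X$-timing information of the $k$-th node of player $G^{-1}(v_1)$ in numbering $\psi G$''. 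Thus both $D$-timing-informations are mixtures, over the \emph{same} uniform $G$, of the $X$-timing information of the $k$-th node of player $q_G:=G^{-1}(v_1)$ --- taken in numbering $G$ for $u$ and in numbering $\psi G$ for $w$. For each fixed $G$ those two nodes have the same owner $q_G$ and the same occurrence count $k$, so they lie in one information set of $\Gamma$, and the $\epsilon$-timing hypothesis on $X$ bounds the total variation distance of their $X$-timing informations by $\epsilon$. Averaging over $G$ and invoking convexity of total variation distance (Proposition~\ref{prop:disjoint}, applied after replacing the two coupled families by independent copies with the same marginals) gives $\delta\le\epsilon$ between the timing informations at $u$ and at $w$ under $D$. (When a value occurs fewer than $k$ times there is simply no pair to check; and it is exactly the fact that Chance randomizes uniformly over all $n!$ numberings that makes averaging over numberings the right symmetry.)

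I expect the main obstacle to be choosing the right form of symmetrization. The naive move --- averaging the given timing over the $S_n$-action that relabels players --- produces a timing that is symmetric only ``in distribution'' (its depth-$i$ time has the same law in each numbering) but whose realizations still depend on the numbering, hence is \emph{not} symmetric in the required sense. The construction above instead randomizes which numbering's schedule is used, and the relabeling $\psi$ is the crucial device that lets one match the pair $(u,w)$ --- which share an information set only because the same player receives different numbers in $g_1$ and $g_2$ --- with pairs of nodes that are genuinely information-set-equivalent in $\Gamma$, so that the hypothesis on $X$ can actually be applied.
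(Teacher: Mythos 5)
Your construction (choose a uniformly random numbering $G$ and apply that numbering's schedule to every branch) is exactly the paper's construction $X'_{\sigma,i}=X_{\Pi,i}$, and your change of variables via $\psi$ with $\psi(v_1)=v_2$ is the same reparametrization the paper performs by substituting $\sigma\circ\Pi$ and $\sigma'\circ\Pi$ before applying Proposition~\ref{prop:disjoint}. The proof is correct and essentially identical in structure to the paper's; the discussion of why naive averaging over relabelings fails is a nice motivating remark but not a substantive departure.
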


In order to prove the proposition we need to define a notation for timings of symmetric choiceless games. The point of the proposition is to show the we do not need to consider general timings of symmetric choiceless games, so the definition of timings will only be used for this one proof. Only the definition of symmetric choiceless games will be used later.

\begin{defi}\label{defi:temp}
A \emph{symmetric choiceless game} is given by a number $n$ and a finite sequence $\Gamma$ of length $|\Gamma|$ over the set of players, $\{1,\dots ,n\}$. Let $\Gamma_i$ denote the element in the $i$'th position in the sequence $\Gamma$.

A \emph{timing} of a symmetric choiceless game is a random tuple $X=(X_{\sigma,i})_{\sigma\in S_n,i\in [|\Gamma|]}$ of non-negative reals. For each $\sigma$ and $i$ we require that $X_{\sigma,i}+1\leq X_{\sigma,i+1}$. A \emph{symmetric timing} is a timing where $X_{\sigma,i}$ never depends on $\sigma$.

When using set notation we write a subscript $j$ after the set to mean the $j$'th smallest elements of the set.
An \emph{$\epsilon$-timing} of $\Gamma$ is a timing such that for any player, $p$, and any two assignment of numbers to the players, $\sigma,\sigma'\in S_n$ and any $j\leq |\{X_{\sigma,i}|\Gamma_i=\sigma(p)\}|,|\{X_{\sigma',i}|\Gamma_i=\sigma'(p)\}|$ we have $\{X_{\sigma,i}|\Gamma_i=\sigma(p)\}_j\stackrel{\epsilon}{\sim}\{X_{\sigma',i}| \Gamma_i=\sigma'(p)\}_j$.
\end{defi}

\begin{proof}
Fix a symmetric choiceless game $\Gamma$ on $n$ players. Given an $\epsilon$-timing $X$ we construct a symmetric $\epsilon$-timing $X'$ of the game: Let $\Pi$ be a random variable that is uniformly distributed on $S_n$ independently from $X$. Now define $x'(x,\pi)$ by  $x'(x,\pi)_{\sigma,i}=x_{\pi,i}$ and let $X'=x'(X,\Pi)$. This is clearly a symmetric timing. 

We want to show that it is an $\epsilon$-timing. To do this, fix a player, $p$, and two histories. A history is given by the assignment of numbers to players, so let $\sigma$ and $\sigma'$ be the assignments that results in those two histories. We also fix two number $j$ which indicate how far we go in the history. We assume $j\leq |\{i|\Gamma_i=\sigma(p)\}|,|\{i|\Gamma_i=\sigma'(p)\}|$ As $\Pi$ is uniformly distributed we have
\[ \{X_{\Pi,i}|\Gamma_i = \sigma(p) \}_j\sim \{X_{\sigma\circ \Pi,i  } | \Gamma_i=\sigma(\Pi(\Pi^{-1}(p))) \}_j \]
and similar for $\sigma'$. As $X$ is an $\epsilon$-timing we know that
\[\{X_{\sigma,i}|\Gamma_i=\sigma(p')\}_j\stackrel{\epsilon}{\sim}\{X_{\sigma',i}|\Gamma_i=\sigma'(p')\}_j\]
for all $p',\sigma $ and $\sigma'$ where these sets contain at least $j$ elements. By substituting $\sigma\circ \pi$ for $\sigma$, $\sigma'\circ \pi$ for $\sigma'$ and $\pi^{-1}(p)$ for $p$, where $\pi\in S_n$, we get 
\[\{X_{\sigma\circ\pi,i}|\Gamma_i=\sigma(\pi(\pi^{-1}(p)))\}_j\stackrel{\epsilon}{\sim}\{X_{\sigma'\circ \pi,i}|\Gamma_i=\sigma'(\pi(\pi^{-1}(p)))\}_j.\]
By Proposition \ref{prop:disjoint} this implies 
\[\{X_{\sigma\circ\Pi,i}|\Gamma_i=\sigma(\Pi(\Pi^{-1}(p)))\}_j\stackrel{\epsilon}{\sim}\{X_{\sigma'\circ \Pi,i}|\Gamma_i=\sigma'(\Pi(\Pi^{-1}(p)))\}_j.\]
Putting it all together gives
\begin{align*}
\{X'_{\sigma,i}|\Gamma_i=\sigma(p)\}_j=& \{X_{\Pi,i}|\Gamma_i= \sigma(p) \}_j\\
  \sim & \{X_{\sigma\circ \Pi,i  } | \Gamma_i=\sigma(\Pi(\Pi^{-1}(p))) \}_j\\
\stackrel{\epsilon}{\sim}&\{X_{\sigma'\circ \Pi,i}|\Gamma_i=\sigma'(\Pi(\Pi^{-1}(p)))\}_j\\
\sim&\{X_{\Pi,i}|\Gamma_i = \sigma'(p)\}_j\\
=& \{X'_{\sigma',i}|\Gamma_i=\sigma'(p)\}_j.
\end{align*}
\end{proof}

Hence, to show the lower bound, we only need to show that for any $r$ there
exists a symmetric choiceless game that does not have a \emph{symmetric}
$\epsilon$-timing using time
$\exp_2^r\left(O\left(\frac{1}{\epsilon}\right)\right)$. We will use
induction, but in order to make the induction hypothesis stronger, we will
show
 something stronger: even if we allow some freedom in which
order we schedule the nodes, we cannot find a good timing. To formalize
this, we use something we call agendas. An example of an agenda is
$2|33|11|2$, where the ``$|$'' are called separators. A timing of this
agenda would give a time to each number and each separator. The times of
e.g. the $3$'s do \emph{not} have to be between those of the first and second
separator, but they cannot be assigned times that are ``too much'' on the
wrong side of the separators. Furthermore, two separators will always have
times that are at least one apart.

\begin{defi}\label{defi:agenda}
  For players $\{1,\dots, n\}$, an \emph{agenda} $A$ is a finite sequence
  over $\{|,1,\dots, n\}$. Let $k_i$ be the number of times $i$ occurs in
  the list, and $k_|$ the number of times $|$ occurs in the list. An
  $(\epsilon,\lambda)$-timing of $A$ is a random variable
  $X=(X_{|,1},\dots, X_{|,k_{|}},X_{1,1},\dots,
  X_{1,k_1},X_{2,1},\dots,X_{2,k_2},\dots, X_{n,1},\dots, X_{n,k_n})$ which
  is a tuple of the same length as $A$, satisfying the following
  requirements.
\begin{enumerate}
\item Each $X_{i,j}$ takes real numbers and each $X_{|,j}$ takes non-negative values.
\item For all $j\in \{1,\dots, k_{|}-1\}$ we have $X_{|,j}+1\leq X_{|,j+1}$.
\item For all $i$ and $j\in\{1,\dots,k_i-1\}$ we have $X_{i,j}<X_{i,j+1}$.
\item If the $j_1$th occurrence of $i$ in $A$ is before the $j_2$th occurrence of $|$ then $X_{i,j_1}\leq X_{|,j_2}+\lambda$.
\item If the $j_1$th occurrence of $i$ in $A$ is after the $j_2$th occurrence of $|$ then $X_{i,j_1} \geq X_{|,j_2}-\lambda$.
\item If $k_i=k_j$ then $\delta(X_i,X_j)\leq \epsilon$.
\end{enumerate}
\end{defi}
Notice that the only requirement that compares $X_{i,j}$s is requirement
$3$, and it simply says that for fixed $i$ the $X_{i,j}$s should be
ordered. Otherwise, the $X_{i,j}$s are only ``kept in place'' by the
$X_{|,j}$s. Also note that requirement 6 implies that
$\delta((X_{i,1},\dots, X_{i,k}),(X_{j,1},\dots,X_{j,k}))\leq\epsilon$
whenever $k\leq k_i=k_j$, but unlike for games we have no such assumption
when $k_i\neq k_j$. For an agenda $A$, we can create a game $\Gamma_A$ by
removing the separators. Then a symmetric $\epsilon$-timing of $\Gamma_A$
will give an $(\epsilon,\lambda)$-timing of $A$ for any $\lambda\geq
0$. Hence, lower bounds on the time needed to time agendas gives lower
bounds on the time needed to time games.

The following proposition says that we can assume that all times $X_{i,j}$ are non-negative. However, when using the induction hypotheses, it will be useful that we allow $X_{i,j}$ to be negative. 

\begin{prop}\label{prop:nonneg}
Let $N,\epsilon$ and $\lambda<N$ be three numbers. 
If the agenda $A$ has an $(\epsilon,\lambda)$-timing with $X_{i,j},X_{|,j}\leq N$ for all $i$ and $j$, then $A$ has an $(\epsilon,\lambda)$-timing where $0\leq X_{i,j},X_{|,j}\leq N$ for all $i$ and $j$.
\end{prop}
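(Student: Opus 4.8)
The plan is to take an $(\epsilon,\lambda)$-timing $X$ with all coordinates bounded above by $N$ and shift the whole configuration so that it becomes nonnegative without exceeding the upper bound $N$. The obvious move — add a constant to every coordinate — fails because it would push values above $N$; instead I would apply a coordinatewise \emph{truncation from below at $0$}, i.e. replace every $X_{i,j}$ and $X_{|,j}$ by $\max(X_{i,j},0)$ and $\max(X_{|,j},0)$ respectively (note the separator values are already required to be nonnegative by requirement 1, so only the player-coordinates actually change). Call the resulting random tuple $X'$. Since truncation is a deterministic function applied coordinatewise, and it is the identity on $[0,N]$, all new values lie in $[0,N]$. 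The work is then to verify requirements 1--6 of Definition \ref{defi:agenda} for $X'$.

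First I would dispatch the ``local'' requirements. Requirement 1 is immediate. Requirement 2 only involves separator coordinates, which are unchanged. Requirement 3 ($X_{i,j}<X_{i,j+1}$ for fixed $i$) follows because $t\mapsto\max(t,0)$ is nondecreasing and we need to rule out the degenerate case of equality; here I would observe that truncation can only collapse the strict inequality if both $X_{i,j}$ and $X_{i,j+1}$ are $\le 0$, but the separators are all $\ge 0$ and lie among the values — more carefully, I would argue that if the $j$th and $(j{+}1)$th occurrences of $i$ straddle some separator the monotonicity plus requirements 4--5 keep them apart, and if no separator lies between them then both sit on the same side and at most one can be truncated to $0$; actually the cleanest route is to note that \emph{either} $X_{i,j+1}>0$, in which case $X'_{i,j+1}=X_{i,j+1}>\max(X_{i,j},0)\ge X'_{i,j}$ using $X_{i,j}<X_{i,j+1}$ — wait, this needs $\max(X_{i,j},0)<X_{i,j+1}$, which holds since $X_{i,j}<X_{i,j+1}$ and $0<X_{i,j+1}$; \emph{or} $X_{i,j+1}\le 0$, but then by requirement 5 every separator occurring before the $(j{+}1)$th $i$ is $\le X_{i,j+1}+\lambda\le\lambda$, and combined with the fact that $X_{i,j}<X_{i,j+1}\le 0$ forces $X'_{i,j}=X'_{i,j+1}=0$, contradicting nothing only if no two $i$'s are truncated together — so I would instead simply strengthen: relabel/perturb. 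The honest fix is that requirement 3 can fail under naive truncation, so the proof should truncate at $0$ but then re-separate tied $i$-coordinates by an infinitesimal monotone perturbation, or (cleaner) shift \emph{only the negative part}: define $X'=\max(X,0)$ only when the gap structure permits, else argue the minimal coordinate is a player coordinate and translate up by $-\min$ capped appropriately. I expect the paper's actual argument is the translate-up-then-this-won't-exceed-$N$ one using $\lambda<N$; see below.

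The genuinely clean approach, and the one I would commit to, uses the hypothesis $\lambda<N$ directly. Let $\mu=\min_{i,j}X_{i,j}$ (separators are already $\ge 0$). If $\mu\ge 0$ we are done. Otherwise define $X'_{i,j}=X_{i,j}-\mu\ge 0$ and $X'_{|,j}=X_{|,j}$ (separators unchanged). Requirements 1--3 are preserved because we added the same constant to all player coordinates and did nothing to separators. For requirement 6, $\delta(X'_i,X'_j)=\delta(X_i,X_j)\le\epsilon$ since translation is a bijection (apply Proposition \ref{prop:dpi} with the constant shift). The content is requirements 4 and 5: when the $j_1$th $i$ precedes the $j_2$th separator we need $X_{i,j_1}-\mu\le X_{|,j_2}+\lambda$, and when it follows we need $X_{i,j_1}-\mu\ge X_{|,j_2}-\lambda$. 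The second is easier (we increased $X_{i,j_1}$ and left $X_{|,j_2}$ fixed, so if it held before it holds after). For the first I would use that the old timing satisfied $X_{i,j_1}\le X_{|,j_2}+\lambda\le N+\lambda$, hmm — that still overshoots. So the real obstacle, and the step I'd flag as the crux, is reconciling ``shift up to kill negativity'' with ``stay below $N$'': one must show the amount $-\mu$ we shift by is itself at most $N-\max X$, equivalently that the original timing's \emph{spread} is small enough, which is where $\lambda<N$ and the separator constraints (separators span at least length $k_|-1$ and pin the player coordinates within $\lambda$) must be combined. I would therefore structure the final step as: bound $-\mu$ via requirement 5 (every player coordinate is $\ge X_{|,1}-\lambda\ge-\lambda$, so $\mu\ge-\lambda$), hence the shift is by at most $\lambda$, and then bound $\max_{i,j}X'_{i,j}\le\max_{i,j}X_{i,j}+\lambda\le N+\lambda$ — which is \emph{not} $\le N$, so finally one rescales or reuses the ``floor/compress'' remark; I suspect the intended statement tolerates this and the proof simply observes that since all separators are $\ge0$ and $\mu\ge-\lambda>-N$, translating by $-\mu\le\lambda<N$ keeps everything in $[0,2N)$ and then a final linear compression by factor $<1$ (which only shrinks total variation distances and gaps must be rechecked — gaps of size $\ge1$ between separators survive if $N$ is large) returns to $[0,N]$. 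The main obstacle is precisely this bookkeeping: choosing the shift and, if needed, a compression so that all six requirements, especially the unit-gap requirement 2 and the strict-order requirement 3, survive simultaneously.
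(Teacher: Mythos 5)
There is a genuine gap. You first try hard truncation $X_{i,j}\mapsto\max(X_{i,j},0)$ and correctly observe that it can collapse two consecutive $i$-coordinates to zero, violating requirement 3. You then gesture at the right fix ("re-separate tied $i$-coordinates by an infinitesimal monotone perturbation") but abandon it and commit instead to translating all player coordinates up by $-\mu$ followed by rescaling. That route does not close: the shift can push the maximum above $N$, and the rescaling you invoke to fix this breaks the unit-gap requirement between separators and is not justified by $\lambda<N$ alone. You correctly flag all of these obstacles, but you never resolve them, so the proposal does not constitute a proof.

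The clean repair is the one you almost stated and dropped: replace hard truncation by a \emph{strictly increasing} bijection. The paper fixes some increasing bijection $f:(-\infty,\lambda]\to(0,\lambda]$ with $f(x)\geq x$, leaves the separator coordinates alone, and sets $X'_{i,j}=f(X_{i,j})$ when $X_{i,j}\leq\lambda$ and $X'_{i,j}=X_{i,j}$ otherwise. Strict monotonicity of $f$ (and $f\leq\lambda$) preserves requirement 3 with no ties; $f(x)\geq x$ preserves requirement 5; $f(x)\leq\lambda\leq X_{|,j_2}+\lambda$ preserves requirement 4; $f$ never increases a value past its original so the bound $\leq N$ is automatic; and since $X'_i$ is a fixed coordinatewise function of $X_i$, the total variation bound (requirement 6) is inherited by the data processing inequality. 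Note also that your estimate $\mu\geq-\lambda$ is exactly why the domain $(-\infty,\lambda]$ suffices, and the hypothesis $\lambda<N$ is only needed to ensure the image $(0,\lambda]$ stays below $N$ — there is no shift-versus-cap tradeoff to reconcile once you localize the modification to the low end of the range.
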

\begin{proof}
Given a timing $X$ of $A$ we want to construct a timing $X'$ of $A$ that only contains non-negative times. Let $f:(-\infty,\lambda]\to (0,\lambda]$ be some increasing bijection with $f(x)\geq x$ and define $X'_{|,j}=X_{|,j}$ and $X'_{i,j}=f(X_{i,j})$ if $X_{i,j}\leq \lambda$ and $X'_{i,j}=X_{i,j}$ otherwise. As $X$ is a timing of $A$ we know that $X'_{|,j}=X_{|,j}\geq 0$ for all $j$, so $X'$ satisfies requirement $4$ from the definition of $(\epsilon,\lambda)$-timing. As each $X'_i$ can be obtained as a function of $X_i$ requirement $6$ is satisfied. It is easy to see that $X'$ satisfies the last $4$ requirements.
\end{proof}

The following lemma will be the base case of the induction proof. Here and
what follows, $\log$ denotes the base $2$ logarithm.
\begin{lemm}\label{lemm:agendabase}
For any $\left(\epsilon,\frac{1}{10}\right)$-timing of the agenda $2|33|11|2$ we have $\Pr\left(X_{2,2}\geq \exp_2\left((2\epsilon)^{-1}\right)\right)>0$. 
\end{lemm}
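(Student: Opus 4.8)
The plan is to analyze the agenda $2|33|11|2$ by focusing on the two occurrences of player $2$ and the structural constraints forced by the three separators. Label the three separator times $X_{|,1}\le X_{|,2}-1 \le X_{|,3}-2$ (using requirement 2 of Definition~\ref{defi:agenda}), the two occurrences of $3$ as $X_{3,1}<X_{3,2}$, the two occurrences of $1$ as $X_{1,1}<X_{1,2}$, and the two occurrences of $2$ as $X_{2,1}<X_{2,2}$. From the position constraints (requirements 4--5 with $\lambda=\tfrac1{10}$): $X_{2,1}$ is before the first separator, so $X_{2,1}\le X_{|,1}+\tfrac1{10}$; both $3$'s lie between separators 1 and 2, so $X_{3,1},X_{3,2}\in[X_{|,1}-\tfrac1{10},\,X_{|,2}+\tfrac1{10}]$; both $1$'s lie between separators 2 and 3, so $X_{1,1},X_{1,2}\in[X_{|,2}-\tfrac1{10},\,X_{|,3}+\tfrac1{10}]$; and $X_{2,2}$ is after separator 3, so $X_{2,2}\ge X_{|,3}-\tfrac1{10}$. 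The key point is that $k_1=k_3=2$, so requirement 6 forces $\delta(X_1,X_3)\le\epsilon$, hence in particular (by the remark after the definition) $\delta(X_{1,2},X_{3,2})\le\epsilon$ — actually I want to compare the coordinate distributions, so I would use that the pair distributions are $\epsilon$-close, and also use that $\delta(X_{2,1},X_{2,2})\ne$ — wait, $k_2=2$ as well, so $\delta(X_{2,1}\text{-pair})$... no: requirement 6 compares $X_i$ to $X_j$ for $i\ne j$ with $k_i=k_j$. So I get $\delta(X_2,X_1)\le\epsilon$, $\delta(X_2,X_3)\le\epsilon$, $\delta(X_1,X_3)\le\epsilon$ (all three have two occurrences).

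The core argument is a chain of expectation comparisons using Proposition~\ref{prop:tooln2} in contrapositive form: if $X_{2,2}$ were bounded by $N$ with probability $1$, then all the relevant variables live in a bounded interval, and $\epsilon$-closeness of distributions plus Proposition~\ref{prop:tooln2} bounds how far apart their expectations can be. First I would argue that we may assume (toward a contradiction) that $X_{2,2}<\exp_2((2\epsilon)^{-1})=:N$ almost surely, and by Proposition~\ref{prop:nonneg} that all times are nonnegative and bounded by roughly $N$. Then I want to build a telescoping inequality: $\E X_{2,2}\ge \E X_{|,3}-\tfrac1{10}$; and $\E X_{|,3}\ge \E X_{1,2}-\tfrac1{10}$ (since $X_{1,2}\le X_{|,3}+\tfrac1{10}$); then by $\delta(X_{1,2},X_{3,2})\le\epsilon$ and both living in $[0,N]$, $\E X_{1,2}\ge \E X_{3,2}-N\epsilon$ (Proposition~\ref{prop:tooln2}); then $\E X_{3,2}\ge \E X_{|,1}-\tfrac1{10}$ is too weak — I need to go the other way. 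Let me instead run the chain so it forces $\E X_{2,2}$ to be large: $X_{2,2}\ge X_{|,3}-\tfrac1{10}\ge X_{1,2}-\tfrac2{10}$, and since $X_{1,2}\ge X_{1,1}$ is not enough, I use $\E X_{1,2}\ge \E X_{2,2}' $ via closeness $\delta(X_1,X_2)\le\epsilon$... The real mechanism must be: $X_{3,2}$ is squeezed near the left, $X_{1,2}$ near the right, they're $\epsilon$-indistinguishable, so the interval $[X_{|,1},X_{|,3}]$ can't be too wide — but it must contain two separator-gaps of width $\ge1$ each and accommodate the spreads. This bounds the expectations, and then the assumption $\delta(X_{2,1},X_{2,2})$ — no. Let me reconsider: the engine is that $X_{2,2}$ must be $\epsilon$-close in distribution to $X_{2,1}$ (they're the two coords of $X_2$, and requirement 6 gives $\delta(X_2,X_1)\le\epsilon$ and $\delta(X_2,X_3)\le\epsilon$, not $\delta(X_{2,1},X_{2,2})$). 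Hmm — so $X_{2,2}\sim_\epsilon X_{1,2}\sim_\epsilon X_{3,2}$ and $X_{2,1}\sim_\epsilon X_{1,1}\sim_\epsilon X_{3,1}$. Since $X_{2,1}$ sits left of separator 1 and $X_{1,1}$ sits right of separator 2, we need $\E X_{|,2}-\tfrac1{10}\le \E X_{1,1}\le \E X_{2,1}+N\epsilon\le \E X_{|,1}+\tfrac1{10}+N\epsilon$; combined with $\E X_{|,2}\ge\E X_{|,1}+1$ this gives $1\le \tfrac2{10}+N\epsilon$, i.e. $N\epsilon\ge\tfrac45$, i.e. $N\ge\tfrac{4}{5\epsilon}$. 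That's only a linear bound, not $\exp_2$!

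So the doubly-exponential bound must come from iterating the position constraint through the *ordering* structure more cleverly, or I am missing how the agenda's node-counts interact — I suspect the $2|33|11|2$ structure forces $X_{3,2}$ and $X_{1,1}$ to be compared after one already knows $X_{3,1},X_{1,2}$ are far apart, and the $\epsilon$-closeness of the *pairs* $(X_{3,1},X_{3,2})$ and $(X_{1,1},X_{1,2})$ is what bites: the pair $X_3$ has both coords in a left interval, the pair $X_1$ has both in a right interval, these intervals are disjoint if the total time is too small, contradicting $\delta(X_1,X_3)\le\epsilon<1$; to make them *overlap* you need the time horizon large, but that alone is still linear. The genuinely exponential phenomenon, I now believe, is that on the event $X_{2,2}<N$ one can *condition* (via Proposition~\ref{prop:givenT}) to kill the small-probability tail, then rescale, and re-derive a $2|33|11|2$-type constraint with $\epsilon$ replaced by roughly $2\epsilon$ and $N$ replaced by $\log N$ — i.e. the base case already contains a self-similar doubling that, while the lemma only states one level, is proved by showing $\log N \ge (2\epsilon)^{-1}$ via a second conditioning round. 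Concretely: the hard part — and the main obstacle — will be extracting from "$X_{3,2},X_{1,1}$ are $\epsilon$-close but pinned to a left resp. right region separated by a unit gap" a statement of the form "$X_{|,3}$ has its *logarithm* with nonzero probability at least $(2\epsilon)^{-1}$," which presumably requires looking not at expectations but at the probability that $X_{|,3}$ exceeds a threshold, using that if $\Pr(X_{|,3}\ge t)$ were too small for all $t\ge\exp_2((2\epsilon)^{-1})$ then the distributions would be forced $2\epsilon$-apart somewhere. I would therefore: (1) fix the telescoping position inequalities above; (2) assume for contradiction $X_{2,2}<N$ a.s.; (3) use Proposition~\ref{prop:givenT} to pass to a sub-distribution on a high-probability event where an auxiliary indicator is trivial, at the cost of $\epsilon\mapsto\approx2\epsilon$; (4) on that event apply Proposition~\ref{prop:tooln2} to the now-bounded variables to conclude $\frac{1}{N'}\le 2\epsilon$ with $N'=\log N$ (the logarithm appearing because the rescaling that makes separator-gaps still $\ge1$ after compressing a width-$N$ window down is multiplicative/logarithmic); (5) conclude $\log N\le (2\epsilon)^{-1}$, contradicting $N=\exp_2((2\epsilon)^{-1})$, hence $\Pr(X_{2,2}\ge N)>0$. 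I expect step (4) — correctly identifying why a *logarithm* appears, rather than just a linear loss — to be the crux, and it likely hinges on the fact that squeezing many unit-separated separators into a bounded window while keeping coordinate distributions $\epsilon$-close forces the window to grow, and iterating the squeeze $k$ times forces growth by $k$ nested exponentials; for the base agenda $2|33|11|2$ the single nesting gives exactly $\exp_2((2\epsilon)^{-1})$.
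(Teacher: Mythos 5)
You correctly diagnose why the naive telescoping of expectations only gives a linear bound on the time horizon, and you correctly sense that a logarithm must appear. But your proposal never locates the mechanism, and the conditioning-and-rescaling route you fall back on (via Proposition~\ref{prop:givenT}) is not what drives the base case --- Proposition~\ref{prop:givenT} only enters in the induction step of Theorem~\ref{theo:lower}, not in the proof of this lemma.

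The actual engine is the concavity of $\log$ applied to the separator gaps. Because the pair $(X_{2,1},X_{2,2})$ straddles the whole window $[X_{|,1},X_{|,3}]$ while $(X_{3,1},X_{3,2})$ sits inside $[X_{|,1},X_{|,2}]$ and $(X_{1,1},X_{1,2})$ sits inside $[X_{|,2},X_{|,3}]$ (all up to $\pm\lambda$), one compares the \emph{logarithms of intra-pair gaps}. Jensen on $X_{|,3}-X_{|,1}=(X_{|,3}-X_{|,2})+(X_{|,2}-X_{|,1})$ gives
\[
\log(X_{|,3}-X_{|,1})-1 \;\ge\; \tfrac12\bigl[\log(X_{|,3}-X_{|,2})+\log(X_{|,2}-X_{|,1})\bigr],
\]
which, after absorbing the $\lambda$-corrections (this is where $\lambda=\tfrac1{10}$ and the unit separator gaps are used), yields $\E\,2\log(X_{2,2}-X_{2,1})\ge\E\log(X_{1,2}-X_{1,1})+\E\log(X_{3,2}-X_{3,1})+1$. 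Hence $\E\log(X_{2,2}-X_{2,1})$ exceeds one of $\E\log(X_{1,2}-X_{1,1})$, $\E\log(X_{3,2}-X_{3,1})$ by at least $0.5$. The $\epsilon$-closeness constraint (requirement~6) is then applied not to the raw coordinates but to the function $(a,b)\mapsto\log(b-a)$ of the pairs $X_2$ and $X_1$ (or $X_3$), via the data-processing inequality (Proposition~\ref{prop:dpi}); Proposition~\ref{prop:tooln2} then forces these logs to range over an interval of length at least $\tfrac1{2\epsilon}$, and since the logs are nonnegative, $\log(X_{2,2}-X_{2,1})\ge\tfrac1{2\epsilon}$ with positive probability. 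The logarithm therefore appears for a reason different from the one you guessed: it is not a rescaling artefact of conditioning, but the unique transform under which ``outer gap $=$ sum of two unit-separated inner gaps'' yields a \emph{guaranteed additive unit of slack}, so the $\epsilon$-indistinguishability of the pairs is only satisfiable if the window grows multiplicatively --- exactly the ``ratios of distances leak information'' intuition sketched before Theorem~\ref{th:all_approximate}, made quantitative.
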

\begin{proof}
Let $\lambda=\frac{1}{10}$. By Proposition \ref{prop:nonneg} we can assume that $X_{i,j}\geq 0$ for all $i,j$.
From $X_{3,1}\geq X_{|,1}-\lambda$ and $X_{3,2}\leq X_{|,2}+\lambda$ and $X_{|,1}+1\leq X_{|,2}$ we get
\begin{align*}
\log(X_{3,2}-X_{3,1})\leq& \log(X_{|,2}+\lambda-(X_{|,1}-\lambda))\\
=&\log(X_{|,2}-X_{|,1})+\log\left(\frac{X_{|,2}-X_{|,1}+2\lambda}{X_{|,2}-X_{|,1}}\right)\\
=&\log(X_{|,2}-X_{|,1})+\log\left(1+\frac{2\lambda}{X_{|,2}-X_{|,1}}\right)\\
\leq&\log(X_{|,2}-X_{|,1})+\log\left(1+2\lambda\right)\\
\leq & \log(X_{|,2}-X_{|,1})+0.3.
\end{align*}
Similarly, $\log(X_{1,2}-X_{1,1})\leq \log(X_{|,3}-X_{|,2})+0.3$ and
\begin{align*}
\log(X_{2,2}-X_{2,1})\geq& \log(X_{|,3}-\lambda-(X_{|,1}+\lambda))\\
=&\log(X_{|,3}-X_{|,1})+\log\left(\frac{X_{|,3}-X_{|,1}-2\lambda}{X_{|,3}-X_{|,1}}\right)\\
=&\log(X_{|,3}-X_{|,1})+\log\left(1-\frac{2\lambda}{X_{|,3}-X_{|,1}}\right)\\
\geq &\log(X_{|,3}-X_{|,1})+\log\left(1-\lambda\right)\\
\geq &\log(X_{|,3}-X_{|,1})-0.2.
\end{align*}
We have $X_{|,3}-X_{|,1}=(X_{|,3}-X_{|,2})+(X_{|,2}-X_{|,1})$
so by Jensen's inequality we get
\begin{align*}
\log(X_{|,3}-X_{|,1})-1=&\log\left(\frac{X_{|,3}-X_{|,1}}{2}\right) 
\geq \frac{\log(X_{|,3}-X_{|,2})+\log(X_{|,2}-X_{|,1})}{2}.
\end{align*}

Thus,
\begin{align*}
\E \,2\log(X_{2,2}-X_{2,1})
\geq& \E \,2\log(X_{|,3}-X_{|,1})-0.4\\
\geq& \E \log(X_{|,3}-X_{|,2})+\log(X_{|,2}-X_{|,1})+1.6\\
\geq& \E \log(X_{1,2}-X_{1,1})+\log(X_{3,2}-X_{3,1})+1.
\end{align*}
So we must have at least one of $\E \log(X_{2,2}-X_{2,1})\geq \E \log(X_{1,2}-X_{1,1})+0.5$ and $\E \log(X_{2,2}-X_{2,1})\geq \E \log(X_{3,2}-X_{3,1})+0.5$. Assume without loss of generality that the former is the case. As $X$ is an  $\epsilon$-timing we have $\delta((X_{2,1},X_{2,2}),(X_{1,1},X_{1,2}))\leq 
\epsilon$ so by Proposition \ref{prop:dpi} and \ref{prop:tooln2} the $\log$s must take values in an interval of length at least $\frac{1}{2\epsilon}$. As the $X_{i,j}$s always differ by at least one, the $\log$s only take nonnegative values. Hence, $\log(X_{2,2}-X_{2,1})\geq \frac{1}{2\epsilon}$ with positive probability, so $X_{2,2}\geq X_{2,2}-X_{2,1}\geq 2^{\frac{1}{2\epsilon}}$ with positive probability. 
\end{proof} 
 
 \begin{lemm}\label{lemm:gapsbetweensep}
 Let $\epsilon,\lambda>0$ and $c>2$ be given. If $X$ is an $(\epsilon,\lambda)$-timing of the agenda $24|33|1441|22|13$, there is probability at least $1-(8c+\frac{3}{1-\frac{2}{c}})(2\lambda+\epsilon)$ that either
 \begin{enumerate}
 \item$X_{|,3}< \frac{(c-1)X_{|,1}+X_{|,4}}{c}$ and $X_{|,2}< \frac{(c-1)X_{|,1}+X_{|,3}}{c}$, or
\item $X_{|,2}> \frac{X_{|,1}+(c-1)X_{|,4}}{c}$ and $X_{|,3}> \frac{X_{|,2}+(c-1)X_{|,4}}{c}$.
\end{enumerate}
 \end{lemm}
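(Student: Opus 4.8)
The plan is to work entirely with the three inter-separator gaps
\[ g_1=X_{|,2}-X_{|,1},\qquad g_2=X_{|,3}-X_{|,2},\qquad g_3=X_{|,4}-X_{|,3}, \]
each at least $1$ by requirement~2. Unwinding the definitions, Condition~1 says exactly that $g_2>(c-1)g_1$ and $g_3>(c-1)(g_1+g_2)$, and Condition~2 says $g_2>(c-1)g_3$ and $g_1>(c-1)(g_2+g_3)$. So the lemma asserts that, with the stated probability, the gaps form a strongly increasing or a strongly decreasing chain: one of the two outer gaps ``dominates'' (exceeds $c-1$ times the sum of the other two) and the middle gap exceeds $c-1$ times the small outer gap. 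This is precisely the scale-invariant ``ratio of distances'' feature that $\epsilon$-indistinguishability should pin down.

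Step~1 is bookkeeping from requirements~4 and~5. Parsing $24|33|1441|22|13$ and using requirement~3 (each player's triple $X_i=(X_{i,1}<X_{i,2}<X_{i,3})$ is sorted): player~$3$'s first two nodes lie in $[X_{|,1}-\lambda,X_{|,2}+\lambda]$ and her last is $\ge X_{|,4}-\lambda$; player~$1$'s first two lie in $[X_{|,2}-\lambda,X_{|,3}+\lambda]$ and her last is $\ge X_{|,4}-\lambda$; player~$4$'s first node is $\le X_{|,1}+\lambda$ and her last two lie in $[X_{|,2}-\lambda,X_{|,3}+\lambda]$; player~$2$'s first node is $\le X_{|,1}+\lambda$ and her last two lie in $[X_{|,3}-\lambda,X_{|,4}+\lambda]$. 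Thus each player has one ``clean'' gap, trapped inside a single band, of length at most one $g_i$ plus $2\lambda$ (player~$3$'s first gap $\le g_1+2\lambda$; player~$1$'s first and player~$4$'s second gap $\le g_2+2\lambda$; player~$2$'s second gap $\le g_3+2\lambda$), and one ``wide'' gap straddling several separators, of length at least a partial sum of the $g_i$ minus $2\lambda$ (player~$3$'s second gap $\ge g_2+g_3-2\lambda$; player~$1$'s second gap $\ge g_3-2\lambda$; player~$4$'s first gap $\ge g_1-2\lambda$; player~$2$'s first gap $\ge g_1+g_2-2\lambda$). Players~$1$ and~$4$ are ``mirror images'' (clean-then-wide versus wide-then-clean, both with clean gap $\approx g_2$), as are players~$3$ and~$2$ (clean gaps $\approx g_1$ and $\approx g_3$); and by requirement~6 together with Proposition~\ref{prop:dpi} the gap-pair of any player is $\epsilon$-close to that of any other.

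Step~2 is the decomposition
\[ \neg(\text{Cond.~1})\wedge\neg(\text{Cond.~2})\subseteq\{g_1\le (c-1)(g_2+g_3)\}\cap\{g_3\le (c-1)(g_1+g_2)\}\ \cup\ E_1\cup E_3, \]
where $E_1$ is the event that $g_1$ dominates yet $g_2\le (c-1)g_3$, and $E_3$ its mirror. To bound the first set (``neither outer gap dominates''), I would, for a suitable threshold, compare a player's clean gap with the matching wide gap of the mirror player: were both outer gaps moderate, the band inequalities would force the wide gap of, say, player~$3$ into a range that its $\epsilon$-close counterpart, player~$2$'s clean gap, cannot reach --- quantified by rescaling and Proposition~\ref{prop:tooln2}, in the spirit of the base case Lemma~\ref{lemm:agendabase} but as a probability bound. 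For $E_1$ and $E_3$ I would condition on the dominating outer gap (a large-probability event) and rerun the comparison on the remaining two gaps, the conditioning inflating the total-variation error through Proposition~\ref{prop:givenT}. Tallying the $\lambda$'s and $\epsilon$'s from the handful of comparisons and conditioning steps, and the geometric loss $\tfrac{1}{1-2/c}$ they accumulate, yields the coefficient $8c+\tfrac{3}{1-2/c}$.

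The main difficulty is quantitative: each player has exactly one node with no separator on one side (player~$2$'s and player~$4$'s first node may be arbitrarily negative; player~$1$'s and player~$3$'s last node arbitrarily large), so the ``wide'' gaps have no upper bound and one cannot compare raw gap lengths directly. The remedy, as in Lemma~\ref{lemm:agendabase}, is to argue with thresholds on ratios of consecutive gaps and with $\log$ of the wide gaps, where the concavity estimate and Proposition~\ref{prop:tooln2} still bite; the remaining work is to propagate the $\lambda$-slack and the total-variation slack through the comparisons and the conditionings carefully enough that everything collapses into the single stated constant.
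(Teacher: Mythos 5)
Your parsing of the conditions into gap inequalities, the bookkeeping of requirements~4 and~5 for the four players, and the decomposition of $\neg(\text{Cond.~1})\wedge\neg(\text{Cond.~2})$ into ``neither outer gap dominates'' plus two side events are all correct. But after that the proposal stops being a proof: the core quantitative steps are only gestured at (``I would, for a suitable threshold, compare\ldots'', ``tallying the $\lambda$'s and $\epsilon$'s\ldots yields the coefficient''), and you explicitly flag that you do not know how to handle the fact that the ``wide'' gaps are unbounded. That is not a detail to propagate later; it is the crux of the lemma, and the route you sketch (take $\log$ of wide gaps, invoke Proposition~\ref{prop:tooln2}, condition via Proposition~\ref{prop:givenT}) is not the paper's and is not shown to close.

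The paper's proof is built around a normalization that sidesteps the unboundedness entirely. For each player $i$, it forms $f(X_i)=\frac{X_{i,2}-X_{i,1}}{X_{i,3}-X_{i,1}}\in[0,1]$, so indistinguishability of the $X_i$'s gives $\E\bigl(f(X_i)-f(X_j)\bigr)\le\epsilon$ directly, with no logarithms and no conditioning. Requirements~4--5 force $f(X_{|,-4})\le f(X_4)+\lambda$, $f(X_{|,-2})\le f(X_2)+\lambda$, $f(X_{|,-3})\ge f(X_3)-\lambda$ (and a symmetric set after swapping $1\leftrightarrow 4$, $2\leftrightarrow 3$), and the single identity
\[
f(X_{|,-4})\,f(X_{|,-2})=f(X_{|,-3})
\]
turns the small expected differences into a bound on $\E\bigl(f(X_{|,-4})(1-f(X_{|,-2}))+f(X_{|,-2})(1-f(X_{|,-4}))\bigr)\le 4\lambda+2\epsilon$. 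Markov's inequality then pushes each $f(X_{|,-i})$ out of the middle band $[\tfrac1c,1-\tfrac1c]$ with probability all but $(4\lambda+2\epsilon)c$, and a union bound over the four Markov events plus three ``same side'' consistency events gives exactly $\bigl(8c+\tfrac{3}{1-2/c}\bigr)(2\lambda+\epsilon)$. Your plan is missing this bounded-ratio reformulation and the multiplicative identity that powers it; without them the argument does not go through as written.
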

 \begin{proof}
 Define $f(x,y,z)=\frac{y-x}{z-x}$. Let $X_{|,-i}$ denote the tuple $X_|$ with $X_{|,i}$ removed. Then possibility $1$ from the statement of the lemma is equivalent to $f(X_{|,-2}), f(X_{|,-4})<\frac{1}{c}$ and possibility $2$ is equivalent to $f(X_{|,-3}),f(X_{|,-1})>1-\frac{1}{c}$. 
We have
 \begin{align*}
 f(X_{|,-4})f(X_{|,-2})=&\frac{X_{|,2}-X_{|,1}}{X_{|,3}-X_{|,1}}\frac{X_{|,3}-X_{|,1}}{X_{|,4}-X_{|,1}}
 =\frac{X_{|,2}-X_{|,1}}{X_{|,4}-X_{|,1}}
 =f(X_{|,-3}).
 \end{align*}

 We see that for $x<y<z$ the function $f$ is increasing in $y$ and decreasing in $x$ and $z$ and it gives values in $(0,1)$. Furthermore, $f(x+a,y+a,z+a)=f(x,y,z)$. We now have 
 \begin{align*}
 f(X_{|,-4})=&f(X_{|,1}+\lambda,X_{|,2}+\lambda,X_{|,3}+\lambda)\\
 =&f(X_{|,1}+\lambda,X_{|,2}-\lambda,X_{|,3}+\lambda)+\frac{2\lambda}{X_{|,3}-X_{|,1}}\\
 \leq& f(X_{4,1},X_{4,2},X_{4,3})+\lambda\\
 =&f(X_4)+\lambda.
 \end{align*}
 Here the inequality follows from the inequalities between the $X_{|,i}$s and the $X_{4,i}$s, and from $X_{|,3}-X_{|,1}\geq 2$. Similarly, we get $f(X_{|,-2})\leq f(X_2)+\lambda$ and $f(X_{|,-3})\geq f(X_3)-\lambda$. 
 
 As $X$ is an $(\epsilon,\lambda)$-timing and $f$ takes values in $[0,1]$, we get $\E \left(f(X_4)-f(X_3)\right)\leq \epsilon$. So
 \begin{align*}
 \E\left( f(X_{|,-4})-f(X_{|,-3})\right)=&\E \left(f(X_{|,-4})-f(X_4)+f(X_4)-f(X_3)+f(X_3)-f(X_{|,-3})\right)\\
 \leq& \lambda + \epsilon+\lambda\\ 
 =& 
 2\lambda +\epsilon,
 \end{align*}
 and similarly $\E\left( f(X_{|,-2})-f(X_{|,-3})\right)\leq 2\lambda+\epsilon$. In the other direction, the monotonicity of $f$ implies that $f(X_{|,-4})>f(X_{|,-3})$ and $f(X_{|,-2})>f(X_{|,-3})$. 
 
 Using
 $f(X_{|,-4})f(X_{|,-2})=f(X_{|,-3})$ we get 
 \begin{align*}
 \E &\left( f(X_{|,-4})(1-f(X_{|,-2}))+f(X_{|,-2})(1-f(X_{|,-4}))\right)\\
 &=\E \left(f(X_{|,-4})+f(X_{|,-2})-2f(X_{|,-2})f(X_{|,-4})\right)\\
 &=\E \left(f(X_{|,-4})-f(X_{|,-3})+f(X_{|,-2})-f(X_{|,-3})\right)\\
 &\leq 4\lambda+2\epsilon.
 \end{align*}
 If $f(x_{|,-4})\in\left[\frac{1}{c},1-\frac{1}{c}\right]$ we see that 
 \[f(x_{|,-4})(1-f(x_{|,-2}))+f(x_{|,-2})(1-f(x_{|,-4}))\geq \frac{1}{c}(1-f(x_{|,-2}))+f(x_{|,-2})\frac{1}{c}=\frac{1}{c}.\]
 By Markov's inequality $\Pr(f(X_{|,-4})\in \left[\frac{1}{c},1-\frac{1}{c}\right])\leq (4\lambda+2\epsilon)c$. Similarly for $X_{|,-2}$. 
 
 The agenda $24|33|1441|22|13$ has a symmetry: if we swap $1$ and $4$ and swap $2$ and $3$ we reverse the order of the numbers (ignoring the order of numbers between two separators). Hence, we can go through the above proof with $1-f(X_{|,-1})$ instead of $f(X_{|,-4})$, and $1-f(X_{|,-2})$ instead of $f(X_{|,-3})$ and so on. This would give us $\Pr(f(X_{|,-1})\in \left[\frac{1}{c},1-\frac{1}{c}\right])\leq (4\lambda+2\epsilon)c$ and $\Pr(f(X_{|,-3})\in \left[\frac{1}{c},1-\frac{1}{c}\right])\leq (4\lambda+2\epsilon)c$. 
 
 We know that $f(X_{|,-4})>f(X_{|,-3})$ and $\E\left(
   f(X_{|,-4})-f(X_{|,-3})\right)\leq 2\lambda+\epsilon$, so by Markov's
 inequality, the probability that $f(X_{|,-4})$ and $f(X_{|,-3})$ are both
 in $[0,\frac{1}{c})\cup (1-\frac{1}{c},1]$ but in two different parts of
 it is at most $\frac{2\lambda+\epsilon}{1-\frac{2}{c}}$. Similarly for the
 pair $f(X_{|,-3})$ and $f(X_{|,-2})$ and for the pair $f(X_{|,-2})$ and
 $f(X_{|,-1})$. Let $T$ be the random variable that is $1$ if for all $i\in
 [4]$ we have $f(X_{|,-i})\in \left[0,\frac{1}{c}\right)$, or if for all
 $i\in[4]$ we have $f(X_{|,-i})\in \left(1-\frac{1}{c},1\right]$. Otherwise
 $T=0$. By the union bound, $\Pr(T=0)\leq
 4(4\lambda+2\epsilon)c+3\frac{2\lambda+\epsilon}{1-\frac{2}{c}}=(8c+\frac{3}{1-\frac{2}{c}})(2\lambda+\epsilon)$. When
 $T=1$ we either have $f(X_{|,-i})$ for all $i$, and the first case of the
 conclusion holds, or $f(X_{|,-i})>1-\frac{1}{c}$, and the second case of
 the conclusion holds.
 \end{proof}

\begin{prop}\label{prop:allthesame2}
Let $c>2$ and $x_1<\dots <x_n$ be a sequence such that for all $i\in [n-3]$ we have either
\begin{enumerate}
\item $x_{i+2}<\frac{(c-1)x_i+x_{i+3}}{c}$ and $x_{i+1}< \frac{(c-1)x_i+x_{i+2}}{c}$, or
\item $x_{i+1}> \frac{x_i+(c-1)x_{i+3}}{c}$ and $x_{i+2}> \frac{x_{i+1}+(c-1)x_{i+3}}{c}$.
\end{enumerate}
Then it must be the same of the two conditions that holds for every $i$. If it is the first then $x_{i+1}-x_i\geq \left(c-\frac{1}{c}-1\right)(x_i-x_1)$ for all $i\in\{2,\dots, n-1\}$. If it is the second then $x_i-x_{i-1}\geq \left(c-\frac{1}{c}-1\right)(x_n-x_i)$ for all $i\in\{2,\dots n-1\}$.
\end{prop}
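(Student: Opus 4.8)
The plan is to recast everything in terms of the consecutive gaps $d_k:=x_{k+1}-x_k>0$ and the partial sums $g_i:=x_i-x_1=d_1+\cdots+d_{i-1}$. A routine rewriting shows that condition~1 at index $i$ is equivalent to the pair ``$d_{i+1}>(c-1)d_i$ and $d_{i+2}>(c-1)(d_i+d_{i+1})$'', and that condition~2 at index $i$ is equivalent to ``$d_i>(c-1)(d_{i+1}+d_{i+2})$ and $d_{i+1}>(c-1)d_{i+2}$''. Since $c>2$, so $c-1>1$, condition~1 forces the gaps $d_i,d_{i+1},d_{i+2}$ to be strictly increasing and condition~2 forces them to be strictly decreasing; in particular the two conditions are mutually exclusive at a fixed index.

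Next I would argue that exactly one of the two conditions holds at every index of $[n-3]$ (we may assume $n\ge4$). The key point is that the condition type cannot change between consecutive indices $i,i+1\in[n-3]$: if condition~1 held at $i$ and condition~2 at $i+1$, then we would get both $d_{i+2}>(c-1)d_{i+1}$ and $d_{i+1}>(c-1)d_{i+2}$, hence $(c-1)^2<1$, a contradiction; the opposite clash is excluded identically. As $[n-3]=\{1,\dots,n-3\}$ is an interval, whatever condition holds at one index holds at all of them. Moreover, the order-reversing substitution $x_i\rightsquigarrow -x_{n+1-i}$ preserves strict monotonicity and, by direct inspection, turns ``condition~2 holds everywhere'' into ``condition~1 holds everywhere'' while converting the desired inequality $x_i-x_{i-1}\ge(c-\tfrac1c-1)(x_n-x_i)$ into $x_{i+1}-x_i\ge(c-\tfrac1c-1)(x_i-x_1)$. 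Hence it suffices to treat the case in which condition~1 holds at every index, and to prove $x_{i+1}-x_i\ge(c-\tfrac1c-1)(x_i-x_1)$ for $2\le i\le n-1$. Writing $r_i:=g_{i+1}/g_i$ (well defined since $g_i>0$ for $i\ge2$) and $\rho:=c-\tfrac1c$, this target is equivalent to $r_i\ge\rho$ for $2\le i\le n-1$.

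The heart of the proof is a two-step induction on $r_i$. The base cases $r_2,r_3>c>\rho$ are immediate from condition~1 at index~$1$ (i.e.\ from $d_2>(c-1)d_1$ and $d_3>(c-1)(d_1+d_2)$). For $i\ge4$, condition~1 at index $i-2$ gives $d_i>(c-1)(d_{i-2}+d_{i-1})$; substituting $d_{i-1}=g_i(1-1/r_{i-1})$ and $d_{i-2}=(g_i/r_{i-1})(1-1/r_{i-2})$ one finds $d_{i-2}+d_{i-1}=g_i\bigl(1-1/(r_{i-1}r_{i-2})\bigr)$, which gives the clean recursion
\[
r_i \;>\; c-\frac{c-1}{r_{i-1}\,r_{i-2}} .
\]
Because $t\mapsto c-(c-1)/t$ is increasing and $\rho^2\ge c(c-1)$ — equivalently $c^3-2c^2+1\ge0$, which holds for all $c\ge2$ — the inductive hypothesis $r_{i-1},r_{i-2}\ge\rho$ yields $r_i>c-(c-1)/\rho^2\ge c-\tfrac1c=\rho$. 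This closes the induction and proves the first conclusion; the second then follows by the order-reversing symmetry above.

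I expect the only real obstacle to be locating the correct invariant. Retaining merely the weak consequence $d_i>(c-1)d_{i-1}$ of condition~1 leads, via a one-step recursion, to estimates of the shape $r_i\gtrsim c-1$, which is strictly weaker than the needed $r_i\ge c-\tfrac1c$; one must instead keep the full term $(c-1)(d_{i-2}+d_{i-1})$. The payoff is the collapse $d_{i-2}+d_{i-1}=g_i\bigl(1-1/(r_{i-1}r_{i-2})\bigr)$, after which the single numerical fact $\rho^2\ge c(c-1)$ is exactly what makes the induction close.
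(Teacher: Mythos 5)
Your proof is correct. It follows the same overall strategy as the paper (rewrite the two conditions in terms of consecutive gaps, observe that each forces strict monotonicity of the gaps and that the case therefore cannot switch between consecutive indices, then run a two-step induction from the base cases $i=2,3$), but your execution of the induction is cleaner and more conceptual. The paper pushes the $x_i$'s around directly, using both inequalities of condition~1 at index $i-1$ together with the induction hypothesis at $i-1$ to reach $i+1$; you instead introduce the ratios $r_i = g_{i+1}/g_i$ and distill the content into the single recursion $r_i > c - (c-1)/(r_{i-1}r_{i-2})$, after which the whole argument reduces to the one numerical fact $\rho^2 \ge c(c-1)$, i.e.\ $c^3 - 2c^2 + 1 \ge 0$. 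You also make explicit the order-reversing symmetry $x_i \rightsquigarrow -x_{n+1-i}$ that lets you treat only case~1, where the paper just says the second case is ``similar.'' I verified the gap translations (condition~1 at $i$ $\Leftrightarrow$ $d_{i+1} > (c-1)d_i$ and $d_{i+2} > (c-1)(d_i + d_{i+1})$, and the mirrored form for condition~2), the identity $d_{i-2} + d_{i-1} = g_i\bigl(1 - 1/(r_{i-1}r_{i-2})\bigr)$, the recursion, the base cases, and the index bookkeeping ($i-2 \in [n-3]$ exactly when $4 \le i \le n-1$); all check out.
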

  \begin{proof}
If the first equality holds for some $i$ we have $x_{i+2}<\frac{(c-1)x_i+x_{i+3}}{c}<\frac{(c-1)x_{i+1}+x_{i+3}}{c}<\frac{x_{i+1}+x_{i+3}}{2}$ and $x_{i+1}\leq \frac{(c-1)x_i+x_{i+2}}{c}<\frac{x_i+x_{i+2}}{2}$. This implies $x_{i+1}-x_i<x_{i+2}-x_{i+1}<x_{i+3}-x_{i+2}$. Similarly, if the second set of inequalities holds for some $i$, the gaps get smaller. Hence, if $x_{i+1}-x_i<x_{i+2}-x_{i+1}$ we know that for both $i-1$ and $i$ we must be in the first case and if $x_{i+1}-x_i>x_{i+2}-x_{i+1}$ then for both $i-1$ and $i$ we must be in the second case, and we cannot have $x_{i+1}-x_i=x_{i+2}-x_{i+1}$. So for any two consecutive $i$s we must be in the same case, and by induction it must be the same case for all $i$.

To prove the second part of the proposition, assume that the first case holds for all $i$. We want to show that $x_{i+1}-x_i\geq \left(c-\frac{1}{c}-1\right)(x_i-x_1)$ for all $i\in\{2,\dots, n-1\}$ by induction on $i$. By substituting $i=1$ into the second inequality of the first case, we get $x_{2}\leq \frac{(c-1)x_1+x_3}{c}$, so $\left(c-\frac{1}{c}-1\right)(x_2-x_1)\leq(c-1)(x_2-x_1)\leq x_3-x_2$. By substituting $i=1$ into the first inequality of the case, we obtain $cx_{3}<(c-1)x_1+x_4$ which implies $x_4-x_3>(c-1)(x_3-x_1)>\left(c-1-\frac{1}{c}\right)(x_3-x_1)$. Thus, the induction hypothesis is true for $i=2$ and $i=3$.

Next, we assume for induction that $x_{i}-x_{i-1}\geq \left(c-\frac{1}{c}-1\right)(x_{i-1}-x_1)$ and we want to prove $x_{i+2}-x_{i+1}\geq \left(c-\frac{1}{c}-1\right)(x_{i+1}-x_1)$, so we use the induction hypothesis for $i-1$ to prove it for $i+1$. By substituting $i-1$ instead of $i$ in the first inequality of the first case we get $cx_{i+1}< (c-1)x_{i-1}+x_{i+2}$, so 
\begin{align*}
x_{i+2}-x_{i+1}> &(c-1)(x_{i+1}-x_{i-1})\\
=& \left(c-1-\frac{1}{c}\right)(x_{i+1}-x_{i-1})+\frac{1}{c}(x_{i+1}-x_{i-1})\\
\geq &  \left(c-1-\frac{1}{c}\right)(x_{i+1}-x_{i-1}) +(x_i-x_{i-1})\\
\geq &  \left(c-1-\frac{1}{c}\right)(x_{i+1}-x_{i-1}) +\left(c-1-\frac{1}{c}\right)(x_{i-1}-x_{1}) \\
\geq &   \left(c-1-\frac{1}{c}\right)(x_{i+1}-x_{1}).
\end{align*}
Here the second inequality follows from $x_{i}< \frac{(c-1)x_{i-1}+x_{i+1}}{c}$ and the third inequality follows from the induction hypothesis. 

The proof that $x_i-x_{i-1}\geq \left(c-\frac{1}{c}-1\right)(x_n-x_i)$ if the second case holds for all $i$ is similar. 
\end{proof}

\begin{theo}\label{theo:lower}
Given $r\geq 1$ there exist $k_r,\epsilon_r>0$ and an agenda $A_r$ with $16r-13$ players and at most $3(r-1)$ nodes per player such that for any $\left(\epsilon,\frac{\epsilon^{1-\frac{1}{r}}}{10}\right)$-timing of $A_r$ with $\epsilon\leq\epsilon_r$ we need time at least $\exp_2^r\left(k_r\epsilon^{-\frac{1}{r!(r-1)!}}\right)$. 
\end{theo}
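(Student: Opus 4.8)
The plan is to prove Theorem~\ref{theo:lower} by induction on $r$, using the agenda machinery developed above. The base case $r=1$ essentially follows from Lemma~\ref{lemm:agendabase}: the agenda $2|33|11|2$ has $4$ players, which we can pad (duplicating unused player names) to meet the $16r-13=3$-player bookkeeping — or rather, we set up $A_1$ to be a trivially-timeable-looking agenda whose analysis reduces to that lemma, giving time at least $\exp_2(k_1\epsilon^{-1})$ with $\lambda=\frac{\epsilon^{0}}{10}=\frac{1}{10}$, matching Lemma~\ref{lemm:agendabase}. The key point is that the lemma already produces the single exponential with the right shape.

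For the inductive step I would build $A_{r+1}$ out of many disjoint copies of $A_r$ (on fresh, disjoint sets of players) interleaved with a ``skeleton'' agenda built from the gadget $24|33|1441|22|13$ of Lemma~\ref{lemm:gapsbetweensep} and the amplification of Proposition~\ref{prop:allthesame2}. The idea: in any $(\epsilon,\lambda)$-timing of $A_{r+1}$, Lemma~\ref{lemm:gapsbetweensep} forces (with probability $1-O(c)(2\lambda+\epsilon)$) that the separator times $X_{|,1},\dots,X_{|,n}$ of the skeleton are ``geometrically spread'' in one of the two directions, and then Proposition~\ref{prop:allthesame2} upgrades this to a genuine geometric growth: consecutive gaps grow by a factor $\big(c-\frac1c-1\big)$ relative to the distance to one endpoint. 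Choosing $c$ as a large constant (or slowly growing), this means that if the skeleton spans $n$ separators, the total time is at least (roughly) $\big(c-\frac1c-1\big)^{n}$ times the innermost gap. Now I place a copy of $A_r$ inside each such gap — more precisely, I use the separators of the skeleton as the ``$|$'' symbols bounding each embedded sub-agenda, so that a timing of $A_{r+1}$ restricts (via requirements 4--5 of Definition~\ref{defi:agenda}, with the $\lambda$ only slightly inflated) to an $(\epsilon,\lambda')$-timing of each embedded $A_r$. By induction each embedded $A_r$ needs time $\exp_2^r(k_r\epsilon^{-1/(r!(r-1)!)})$ measured in its own local time scale; but one of these gaps is a $\big(c-\frac1c-1\big)^{-n}$-fraction of the whole timeline, so to fit $\exp_2^r(\dots)$ worth of local time into that tiny fraction, the global time must be at least $\big(c-\frac1c-1\big)^{n}\cdot\exp_2^r(\dots)$; taking $n$ proportional to the required local time (i.e. $n \approx \exp_2^r(\dots)$ many skeleton slots) turns the outer geometric factor into one more exponential, giving $\exp_2^{r+1}(\dots)$. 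The player count $16r-13$ comes from: $16$ new players per level for the skeleton copies (the gadget has $4$ players, reused with enough disjoint relabelings for the separator structure), and $3(r-1)$ nodes-per-player because each embedded copy contributes $3$ more nodes in the worst-case history. The bookkeeping on $\lambda$: at each level $\lambda$ grows additively by the gadget's slack, and we need $\lambda \le \frac{\epsilon^{1-1/r}}{10}$ at level $r$, which forces the exponent $1-\frac1r$ and feeds the $\epsilon^{-1/(r!(r-1)!)}$ loss (each level divides the usable $\epsilon$-budget by a factorial-type factor because Lemma~\ref{lemm:gapsbetweensep}'s failure probability $O(c)(2\lambda+\epsilon)$ must be beaten by a union bound over $\sim n$ skeleton slots, and $n$ itself is exponential in the previous level's parameters).

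Concretely the steps, in order, would be: (1) fix the recursive definition of $A_r$ — specify the skeleton as $r-1$ nested layers of relabeled copies of $24|33|1441|22|13$, with $A_{r-1}$ copies spliced between consecutive separators; verify the player count and node-per-player count by induction; (2) prove the ``geometric spreading'' lemma: combine Lemma~\ref{lemm:gapsbetweensep} and Proposition~\ref{prop:allthesame2} to show that, conditioned on a high-probability event $E$, the separator times satisfy $X_{|,j+1}-X_{|,j}\ge \gamma\,(X_{|,j}-X_{|,1})$ (or the mirrored version) with $\gamma=c-\frac1c-1$; (3) the embedding/restriction step: show a timing of $A_{r+1}$ restricted to one embedded $A_r$ (after an affine normalization sending that gap to $[0,\cdot]$) is an $(\epsilon',\lambda')$-timing of $A_r$ with $\epsilon'\le \frac{\epsilon+\Pr(\bar E)}{\Pr(E)}$ via Proposition~\ref{prop:givenT}, and $\lambda'$ only mildly larger; (4) chain the bounds: global time $\ge \gamma^{\,n}\cdot(\text{local time needed by }A_r)\ge \gamma^{\,n}\cdot\exp_2^r(\cdots)$, pick $n$ and $c$ optimally so $\gamma^n \ge \exp_2(\cdots)$, and collect the $\epsilon$-exponent loss into the claimed $\epsilon^{-1/((r+1)!\,r!)}$; (5) discharge the $\lambda$-constraint and choose $\epsilon_{r+1}, k_{r+1}$ so that all the $O(\cdot)(2\lambda+\epsilon)$ error terms from the union bound at level $r+1$ are $<1$, guaranteeing the ``positive probability'' conclusion (which suffices, since we only need $\Pr(\text{time}\ge \text{bound})>0$).

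The main obstacle I expect is step (4) together with the $\epsilon$-exponent accounting in step (5): one must simultaneously (a) take $n$ large enough that the outer geometric factor $\gamma^n$ contributes a full extra exponential, (b) keep the union bound over the $\Theta(n)$ skeleton constraints from Lemma~\ref{lemm:gapsbetweensep} below $1$, which caps $\epsilon$ (and hence $\lambda$) in terms of $n$, and (c) ensure the induction hypothesis for $A_r$ is applied with an $\epsilon'$ that is still large enough to give a nontrivial bound after the $\Pr(E)$-deflation. Balancing (a)--(c) is exactly what produces the super-exponential height-$r$ tower and the peculiar $\frac{1}{r!(r-1)!}$ exponent; getting the constants $k_r$ to propagate cleanly through the recursion — rather than degrading to $0$ — is the delicate part, and is presumably why the statement only claims positive probability rather than high probability.
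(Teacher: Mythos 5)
You have correctly identified the key gadgets (Lemma~\ref{lemm:gapsbetweensep} and Proposition~\ref{prop:allthesame2} for forcing the separator gaps to grow or shrink geometrically, and Proposition~\ref{prop:givenT} for conditioning), and the overall recursive skeleton is in the right spirit. But the inductive step as you describe it cannot be made to work, and the idea that actually closes the induction is missing.

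The problem is in step (4). You want the extra exponential to come from multiplying the induction hypothesis's $\exp_2^r(\cdot)$ bound by a geometric factor $\gamma^{\,n}$ arising from $n$ skeleton slots, with $n\approx \exp_2^r(\cdot)$ chosen proportional to the local time needed by an embedded copy of $A_r$. However, $A_{r+1}$ must be a \emph{fixed} finite agenda, with exactly $16(r+1)-13$ players and a sequence length independent of $\epsilon$ --- that is part of the statement. So $n$ is a constant depending only on $r$, not on $\epsilon$; a constant $n$ gives only a constant multiplicative gain $\gamma^{\,n}$ over $\exp_2^r(\cdot)$, which is nowhere near $\exp_2^{r+1}(\cdot)$. (A second, related problem: embedding many disjoint copies of $A_r$ on fresh player sets multiplies the player count, and your restriction step already yields an $(\epsilon',\lambda')$-timing of $A_r$ on the \emph{global} clock, which by the induction hypothesis uses time $\geq\exp_2^r(\cdot)$ outright; there is no mechanism in that step that then multiplies this by $\gamma^{\,n}$.) What you have, in effect, is two unconnected observations --- ``the gaps grow geometrically'' and ``each embedded copy needs $\exp_2^r(\cdot)$'' --- without a device to compound them into a higher tower.

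The missing device is a \emph{logarithmic change of variable}. The paper embeds a \emph{single} copy of $A_{r-1}$ inside $A_r$ (the old players get one extra node between the first two separators and one between the last two), and after conditioning on the high-probability event that the separator gaps grow (WLOG increase) geometrically, it defines a new random tuple by
\[
Y_{i,j}=\log\bigl(X_{i,j+1}-X_{i,1}\bigr),\qquad Y_{|,j}=\log\bigl(X_{|,j+2}-X_{|,2}\bigr).
\]
The geometric growth of the separator gaps is exactly what makes $Y_{|,j+1}\geq Y_{|,j}+1$ (requirement~2 of Definition~\ref{defi:agenda} for $Y$), so $Y$ is a valid $(\epsilon'',\lambda'')$-timing of $A_{r-1}$, and the induction hypothesis forces $Y$ to take a value at least $\exp_2^{r-1}(\cdot)$, hence $X$ takes a value at least $2^{\exp_2^{r-1}(\cdot)}=\exp_2^{r}(\cdot)$. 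That single $\log\leftrightarrow\exp$ flip is where the extra level of the tower comes from, with a constant-size agenda; it replaces, and cannot be reproduced by, the ``many copies $\times$ $\gamma^{\,n}$'' count you propose. The $\epsilon^{-1/(r!(r-1)!)}$ exponent and the $\lambda=\epsilon^{1-1/r}/10$ slack then fall out of tracking how $\epsilon$ degrades through the conditioning (Propositions~\ref{prop:givenT} and~\ref{prop:disjoint}) and how $\lambda$ behaves under $\log$, rather than from balancing $n$ against a union bound over $\Theta(n)$ slots as you suggest.
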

\begin{proof}
We construct the games recursively. We define $A_1=2|3332|111|2$.
To construct $A_{r}$ for $r>1$ we first take $A_{r-1}$, and add two separators on either side. E.g., for $r=2$ this gives us  
$||2|3332|111|2||.$
For each of the players in $A_{r-1}$, we then give them a node between the first two separators and a node between the last two. In the example, this gives
$|123|2|3332|111|2|123|.$
Around the first $4$ separators we put four new players $\text{a},\text{b},\text{c},\text{d}$ to ensure that the gaps between consecutive separators either increase or decrease using the pattern $\text{bd}|\text{cc}|\text{adda}|\text{bb}|\text{ac}$. For $r=2$ we now have 
\[\text{bd}|\text{cc}123|\text{adda}2|\text{bb}3332|\text{ac}111|2|123|.\]
Then around separators $2$ to $5$ we then put the player $\text{e},\text{f},\text{g},\text{h}$. The example becomes 
\[\text{bd}|\text{ccfh}123|\text{addagg}2|\text{bbehhe}3332|\text{acff}111|\text{eg}2|123|.\]
Similarly, we put the players $\text{i},\text{j},\text{k},\text{l}$ around separators $3$ to $6$ and the players $\text{m},\text{n},\text{o},\text{p}$ around separators $4$ to $7$. This gives
\[\text{bd}|\text{ccfh}123|\text{addaggjl}2|\text{bbehhekknp}3332|\text{acffillioo}111|\text{egjjmppm}2|\text{iknn}123|\text{mo}\]
If there are more groups of $4$ consecutive separators left, we could continue the same way. However, we can now reuse players, so player $\text{a},\text{b},\text{c},\text{d}$ will be used around separators $5$ to $8$ and so on. This uses $16$ players more than for $A_{r-1}$, so by induction $A_r$ has $16r-13$ players. The new players each have $3\left\lfloor\frac{4r-1}{4}\right\rfloor=3(r-1)$ nodes, and all other players have $2$ nodes more than in the $r-1$ game. By induction each player has at most $3(r-1)$ nodes.

Now we want to show that we need a long time to time these agendas. 
We know from Lemma \ref{lemm:agendabase} that any
$\left(\epsilon,\frac{1}{10}\right)$-timing of $2|33|11|2$ will need time
at least $\exp_2((2\epsilon)^{-1})$. Furthermore, from any
$\left(\epsilon,\frac{1}{10}\right)$-timing of $2|3332|111|2$ we can throw
away the middle time for each player and 
get an $\left(\epsilon,\frac{1}{10}\right)$-timing of $2|33|11|2$. This proves the base case. (The middle node is needed in the induction step, to ensure that player $2$ can see if the gaps between separators increase or decrease).

Next let $r>1$ and assume that the statement holds for $r-1$. Define 
$\epsilon_r=\frac{\epsilon_{r-1}^{r(r-1)}}{180r}.$

Assume that we have an $(\epsilon,\frac{\epsilon^{1-\frac{1}{r}}}{10})$-timing $X$ of $A_r$ where $\epsilon<\epsilon_r$. By Proposition \ref{prop:nonneg} we can assume that $X_{i,j}\geq 0$ for all $i,j$. For each $i\in [4r-4]$ we claim that there is probability at least $1-\left(8\epsilon^{\frac{1}{r-1}-1}+\frac{3}{1-\frac{2}{\epsilon^{\frac{1}{r-1}-1}}}\right)(\frac{2\epsilon^{1-\frac{1}{r}}}{10}+\epsilon)$ that we have either 
 \begin{enumerate}
 \item$X_{|,i+2}< \frac{(\epsilon^{\frac{1}{r-1}-1}-1)X_{|,i}+X_{|,i+3}}{\epsilon^{\frac{1}{r-1}-1}}$ and $X_{|,i+1}< \frac{(\epsilon^{\frac{1}{r-1}-1}-1)X_{|,i}+X_{|,i+2}}{\epsilon^{\frac{1}{r-1}-1}}$, or
\item $X_{|,i+1}> \frac{X_{|,i}+(\epsilon^{\frac{1}{r-1}-1}-1)X_{|,i+3}}{\epsilon^{\frac{1}{r-1}-1}}$ and $X_{|,i+2}> \frac{X_{|,i+1}+(\epsilon^{\frac{1}{r-1}-1}-1)X_{|,i+3}}{\epsilon^{\frac{1}{r-1}-1}}$.
\end{enumerate}
If $i$ is of 
the form $4j+1$, we show this using the players
$\text{a},\text{b},\text{c},\text{d}$: As we have an
$(\epsilon,\frac{\epsilon^{1-\frac{1}{r}}}{10})$ timing of our agenda, we
can take the $(3j+1)$th, $(3j+2)$th, and $(3j+3)$th nodes of these four players
and the $(4j+1)$th, $(4j+2)$th, $(4j+3)$th and $(4j+4)$th separator and get an
$(\epsilon,\frac{\epsilon^{1-\frac{1}{r}}}{10})$-timing of the agenda
$\text{bd}|\text{cc}|\text{adda}|\text{bb}|\text{ac}$. The claim now follows from Lemma
\ref{lemm:gapsbetweensep}. If $i$ is of 
the form $4j+2$ we would use player $\text{e},\text{f},\text{g},\text{h}$ instead, and so on. This proves the claim.

Let $T$ be the random variable that is $1$ if one of the two cases holds
for each $i\in [4r-4]$ and $0$ otherwise. 
By the union bound
\begin{align*}
\Pr(T=0)\leq &(4r-4)\left(8\epsilon^{\frac{1}{r-1}-1}+\frac{3}{1-\frac{2}{\epsilon^{\frac{1}{r-1}-1}}}\right)(\frac{2\epsilon^{1-\frac{1}{r}}}{10}+\epsilon)\\
\leq& 4r\left(9\epsilon^{\frac{1}{r-1}-1}\right)1.2\epsilon^{1-\frac{1}{r}}\\
=&44r\epsilon^{\frac{1}{r-1}-\frac{1}{r}}.
\end{align*}
Let $X'=X|_{T=1}$ be the timing of the agenda given $T=1$. By Proposition \ref{prop:givenT} $X'$ is an $(\epsilon',\epsilon^{1-\frac{1}{r}})$-timing with
\[\epsilon'\leq \frac{\epsilon+44r\epsilon^{\frac{1}{r-1}-\frac{1}{r}}}{1-44r\epsilon^{\frac{1}{r-1}-\frac{1}{r}}}\leq 2\left(45r\epsilon^{\frac{1}{r-1}-\frac{1}{r}}\right)=90r\epsilon^{\frac{1}{(r-1)r}}.\]

Given $X'$ we will then construct a $\left(180r\epsilon^{\frac{1}{(r-1)r}},\left(180r\epsilon^{\frac{1}{(r-1)r}}\right)^{1-\frac{1}{r}}\right)$-timing of $A_{r-1}$. First we want to argue that for a timing $X'$, each of the old players (players from $A_{r-1}$) can figure out if the gaps between $X_{|,i}$s are increasing or decreasing. By induction, each player has  $k_i\geq 2r+1\geq 5$ nodes. Furthermore, in $A_{r-1}$ each player occurs at most once before the first separator, so in $A_r$ each old player occurs at most twice before the third separator, so $X_{i,3}\geq X_{|,3}-\frac{\epsilon^{1-\frac{1}{r}}}{10}$ (here and in the following $i$ denotes the number of an old player). Conversely, $X_{i,3}\leq X_{i,k_i-2}\leq X_{|,k_|-2}+\frac{\epsilon^{1-\frac{1}{r}}}{10}$. As each of the old players has a node between the first two separators and a node between the last two separators, we get $X_{i,1}\leq X_{|,2}+\frac{\epsilon^{1-\frac{1}{r}}}{10}$ and $X_{i,k_i}\geq X_{|,k_|-1}-\frac{\epsilon^{1-\frac{1}{r}}}{10}$. 
Let $c=\epsilon^{\frac{1}{r-1}-1}$. If we are in the case where gaps between separators increases, we get
\begin{align*}
X_{i,k_i}-X_{i,3}\geq& \left(X_{|,k_|-1}-\frac{\epsilon^{1-\frac{1}{r}}}{10}\right)-\left(X_{|,k_|-2}+\frac{\epsilon^{1-\frac{1}{r}}}{10}\right)\\
=&\left(X_{|,k_|-1}-X_{|,k_|-2}\right)-\frac{1}{5}\epsilon^{1-\frac{1}{r}}\\
\geq& \left(c-\frac{1}{c}-1\right)(X_{|,k_|-2}-X_{|,2})-\epsilon^{1-\frac{1}{r}}\\
\geq& \left(c-2\right)\left(X_{i,3}-\frac{\epsilon^{1-\frac{1}{r}}}{10}-X_{i,1}-\frac{\epsilon^{1-\frac{1}{r}}}{10}\right)-\epsilon^{1-\frac{1}{r}}\\
\geq&  \left(c-2\right)(X_{i,3}-X_{i,1})-c\epsilon^{1-\frac{1}{r}}\\
\geq& 3(X_{i,3}-X_{i,1})-\epsilon^{\frac{1}{r-1}-1}\epsilon^{1-\frac{1}{r}}\\
=& 3(X_{i,3}-X_{i,1})-\epsilon^{\frac{1}{r-1}-\frac{1}{r}}\\
\geq& 3(X_{i,3}-X_{i,1})-1\\
> & X_{i,3}-X_{i,1}.
\end{align*}
The second inequality follows from Proposition \ref{prop:allthesame2} on the sequence $x_{|,2},x_{|,3},\dots, x_{|,k_|}$. In the fifth inequality we use $c-2\geq 3$ which is equivalent to $\epsilon^{\frac{1}{r-1}-1}\geq 5$ and follows from $\epsilon\leq\epsilon_r=\frac{\epsilon_{r-1}^{r(r-1)}}{180r}<\frac{1}{25}$. 
Finally, we use $\epsilon<1$ and $X_{i,3}-X_{i,1}\geq X_{|,3}-X_{|,2}-\frac{2\epsilon^{1-\frac{1}{r}}}{10}\geq \frac{1}{2}$.

Conversely, if we are in the case where gaps between separators decrease, a similar computation shows that $X_{i,3}-X_{i,1}>  X_{i,k_i}-X_{i,3}$.
Thus, if one knows that $i$ is a player from $A_{r-1}$ and one knows
$X'_i$, one can then decide if the gaps between the $X'_{|,j}$s are
increasing or decreasing, even if one does not know $i$. Let $S$ be the
random variable that is $1$ if the gaps are increasing and $2$ if they are
decreasing. We must have either $\Pr(S=1)\geq \frac{1}{2}$ or $\Pr(S=2)\geq
\frac{1}{2}$. Assume without loss of generality that we have $\Pr(S=1)\geq
\frac{1}{2}$. As the supports 
of the $X_i|_{S=1}$s are disjoint from the supports of the $X_{j}|_{S=2}$s Proposition \ref{prop:disjoint} gives us 
\begin{align*}
\delta(X'_i,X'_j)=&\Pr(S=1)\delta(X'_i|_{S=1},X'_j|_{S=1})+\Pr(S=2)\delta(X'_i|_{S=2},X'_j|_{S=2})\\
\geq &\frac{1}{2}\delta(X'_i|_{S=1},X'_j|_{S=1}).
\end{align*} 
We define $X''=X'|_{S=1}$. By the above inequality we have $ \delta(X''_i,X''_j)\leq \delta(X'_i,X'_j)\leq 2\cdot 90r\epsilon^{\frac{1}{(r-1)r}}$ for all old players $i$ and $j$ with $k_i=k_j$.

We now define a timing $Y$ of $A_{r-1}$ by
\begin{itemize}
\item $Y_{i,j}=\log(X_{i,j+1}-X_{i,1})$ for all numbers
 $i,j$ such that player $i$ occurs at least $j$ times in $A_{r-1}$.
\item $Y_{|,j}=\log(X_{|,j+2}-X_{|,2})$ for all numbers
$j$ less than the number of separators in $A_{r-1}$.  
\end{itemize}
We want to show that this is a $\left(180r\epsilon^{\frac{1}{(r-1)r}},\frac{\left(180r\epsilon^{\frac{1}{(r-1)r}}\right)^{1-\frac{1}{r}}}{10}\right)$-timing of $A_{r-1}$. If we can show that, the induction hypothesis then implies that the highest value that the $Y_{p,i}$s take must be at least 
\begin{align*}\exp_2^{r-1}\left(k_{r-1}\left(436r\epsilon^{\frac{1}{(r-1)r}}\right)^{-\frac{1}{(r-1)!(r-2)!}}\right)
=&\exp_2^{r-1}\left(k_{r}\epsilon^{-\frac{1}{(r)!(r-1)!}}\right).
\end{align*}
As the $X_{i,1}$s are non-negative this implies that $X''$ and $X$ have to take values of at least 
$2^{\exp_2^{r-1}\left(k_{r}\epsilon^{-\frac{1}{(r)!(r-1)!}}\right)}=\exp_2^{r}\left(k_{r}\epsilon^{-\frac{1}{(r)!(r-1)!}}\right).$

To finish the proof we only need to check that $Y$ satisfies the $6$ requirements for an $(\epsilon'',\frac{\epsilon''^{1-\frac{1}{r-1}}}{10})$-timing of $A_{r-1}$ with $\epsilon''=180r\epsilon^{\frac{1}{(r-1)r}}$. 

First, for fixed $i$ the $X_{i,j}$s are increasing in $j$ so the $Y_{i,j}$s are real numbers. We also have $X_{|,j}+1\leq X_{|,j+1}$, so the $Y_{|,j}$s are non-negative. 

The second requirement says $Y_{|,j}+1\leq Y_{|,j+1}$. By using Proposition \ref{prop:allthesame2} on the sequence $x_{|,2},\dots, x_{|,k_|}$ we get 
\begin{align*}
X_{|,j+3}-X_{|,2}=&X_{|,j+3}-X_{|,j+2}+X_{|,j+2}-X_{|,2}\\
\geq& \left(c-\frac{1}{c}-1\right)(X_{|,j+2}-X_{|,2})+X_{|,j+2}-X_{|,2}\\
=&\left(c-\frac{1}{c}\right)(X_{|,j+2}-X_{|,2})\\
\geq& 2(X_{|,j+2}-X_{|,2})
\end{align*}
where $c=\epsilon^{\frac{1}{r}-1}\geq 3$. Hence, 
\begin{align*}
Y_{|,j+1}=&\log(X_{|,j+3}-X_{|,2})
\geq \log(2(X_{|,j+2}-X_{|,2}))
=1+\log(X_{|,j+2}-X_{|,2})
=Y_{|,j}+1.
\end{align*} 

The third requirement says that for fixed $i$ the $Y_{i,j}$ should be increasing in $j$. This follows from the similar fact for  $X_{i,j}$.

To show the fourth requirement assume that the $j_1$'th occurrence of $i$ in $A_{r-1}$ is before the $j_2$'th occurrence of $|$. We need to show that $Y_{i,j_1}\leq Y_{|,j_2}+\frac{\epsilon''^{1-\frac{1}{r-1}}}{10}$. From the assumptions we have that the $j_1+1$th occurrence of $i$ in $A_r$ is before the $j_2+2$'th occurrence of $|$. As $X$ is an $\left(\epsilon,\frac{\epsilon^{1-\frac{1}{r-1}}}{10}\right)$-timing this implies that 
$X_{i,j_1+1}\leq X_{|,j_2+2}+\frac{\epsilon^{1-\frac{1}{r}}}{10}$,
and in general we have 
$X_{i,1}\geq X_{|,1}-\frac{\epsilon^{1-\frac{1}{r}}}{10}.$
Finally, because we are in the case where the gaps are increasing we have
$X_{|,2}-X_{|,1}\leq \frac{X_{|,3}-X_{|,2}}{c-1}\leq \frac{X_{|,j_2+2}-X_{|,2}}{c-1}$
so 
\begin{align*}
X_{|,j_2+2}-X_{|,1}=&X_{|,j_2+2}-X_{|,2}+X_{|,2}-X_{|,1}\\
\leq& \frac{c}{c-1}(X_{|,j_2+2}-X_{|,2}).
\end{align*}
Thus, we have 
\begin{align*}
Y_{i,j_1}=&\log(X_{i,j_1+1}-X_{i,1})\\
\leq & \log\left(X_{|,j_2+2}+\frac{\epsilon^{1-\frac{1}{r}}}{10}-X_{|,1}+\frac{\epsilon^{1-\frac{1}{r}}}{10}\right)\\
=&\log(X_{|,j_2+2}-X_{|,1})+\log\left(\frac{X_{|,j_2+2}-X_{|,1}+\frac{1}{5}\epsilon^{1-\frac{1}{r}}}{X_{|,j_2+2}-X_{|,1}}\right)\\
\leq & \log(X_{|,j_2+2}-X_{|,2})+\log\left(\frac{c}{c-1}\right)+\log\left(1+\frac{\epsilon^{1-\frac{1}{r}}}{10}\right)\\
\leq & Y_{|,j_2}+\log(1+\frac{2}{c})+\frac{2}{10}\epsilon^{1-\frac{1}{r}}\\
\leq & Y_{|,j_2}+4\epsilon^{1-\frac{1}{r-1}}+\frac{1}{5}\epsilon^{1-\frac{1}{r}}\\
\leq & Y_{|,j_2}+5\epsilon^{1-\frac{1}{r-1}}\\
\leq & Y_{|,j_2}+\frac{(\epsilon'')^{1-\frac{1}{r-1}}}{10}.
\end{align*}
The last inequality follows from $\epsilon''=180r\epsilon^{\frac{1}{(r-1)r}}$. 

To show the fifth requirement assume that $j_1$'th occurrence of $i$ in $A_{r-1}$ is after the $j_2$'th occurrence of $|$. We need to show that $Y_{i,j_1}\geq Y_{|,j_2}-\frac{\epsilon''^{1-\frac{1}{r-1}}}{10}$. The assumption implies that the $j_1+1$'th occurrence of $i$ in $A_r$ is after the $j_2+2$'th occurrence of $|$, so we have
\begin{align*}
Y_{i,j_1}=& \log(X_{i,j_1+1}-X_{i,1})\\
\geq & \log(X_{|,j_2+2}-\frac{\epsilon^{1-\frac{1}{r}}}{10}-X_{|,2}-\frac{\epsilon^{1-\frac{1}{r}}}{10})\\
\geq &\log(X_{|,j_2+2}-X_{|,2})+\log\left(\frac{X_{|,j_2+2}-X_{|,2}-\frac{1}{5}\epsilon^{1-\frac{1}{r}}}{X_{|,j_2+2}-X_{|,2}}\right)\\
\geq & Y_{|,j}+\log\left(1-\frac{1}{10}\epsilon^{1-\frac{1}{r}}\right)\\
\geq & Y_{|,j}-\frac{2}{10}\epsilon^{1-\frac{1}{r}}\\
\geq & Y_{|,j}-\frac{\epsilon''^{1-\frac{1}{r}}}{10}.
\end{align*}
Here the second to last inequality follows from $\frac{\epsilon^{1-\frac{1}{r}}}{10}\leq \frac{1}{2}$.

Finally, the last requirement says that if $k_i=k_j$ then $\delta(Y_i,Y_j)\leq \epsilon''$. This follows from the fact that $\delta(X_i'',X_j'')\leq \epsilon''$ and the fact that $Y_i$ is a function of $X_i''$. 
\end{proof}

\begin{coro}\label{coro:lowerbound}
Given $r\geq 1$ there exists $\epsilon_r>0$ and a game with $16r+3$ players and at most $3r$ nodes per player per history such that for any $\epsilon$-timing of the game with $\epsilon\leq\epsilon_r$ we need time at least $\exp_2^r\left(\epsilon^{-1}\right)$. 
\end{coro}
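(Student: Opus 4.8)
The plan is to deduce the corollary directly from Theorem~\ref{theo:lower}, applied with parameter $r+1$ rather than $r$, together with the two reductions already available in this section: Proposition~\ref{prop:sym}, which lets us symmetrize any timing of a symmetric choiceless game without increasing the time used, and the observation recorded just after Definition~\ref{defi:agenda} that a symmetric $\epsilon$-timing of $\Gamma_A$ is automatically an $(\epsilon,\lambda)$-timing of the agenda $A$ for every $\lambda\ge 0$.

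Concretely, I would take the agenda $A_{r+1}$ furnished by Theorem~\ref{theo:lower}. It has $16(r+1)-13=16r+3$ players and at most $3\big((r+1)-1\big)=3r$ nodes per player, so deleting its separators yields a symmetric choiceless game $\Gamma:=\Gamma_{A_{r+1}}$ with exactly $16r+3$ players in which every player owns at most $3r$ nodes in any history --- precisely the counts claimed in the corollary. Now suppose $\Gamma$ has an $\epsilon$-timing using time at most $N$ with $\epsilon\le\epsilon_{r+1}$. By Proposition~\ref{prop:sym} we may take this timing to be symmetric, still using time at most $N$, and such a symmetric $\epsilon$-timing of $\Gamma_{A_{r+1}}$ is in particular an $\big(\epsilon,\tfrac{\epsilon^{1-1/(r+1)}}{10}\big)$-timing of $A_{r+1}$. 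Theorem~\ref{theo:lower} then forces
\[
N\ \ge\ \exp_2^{r+1}\!\Big(k_{r+1}\,\epsilon^{-\frac{1}{(r+1)!\,r!}}\Big).
\]

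The remaining step is pure tower-arithmetic: one extra layer of exponentiation must absorb the constant $k_{r+1}$ and the small exponent $\tfrac{1}{(r+1)!\,r!}$. Using $\exp_2^{r+1}(y)=\exp_2^{r}\!\big(2^{y}\big)$ and monotonicity of $\exp_2^{r}$, the right-hand side above is at least $\exp_2^{r}(\epsilon^{-1})$ as soon as $2^{\,k_{r+1}\epsilon^{-1/((r+1)!\,r!)}}\ge\epsilon^{-1}$, i.e.\ $k_{r+1}\,\epsilon^{-1/((r+1)!\,r!)}\ge\log_2(1/\epsilon)$; since the left side grows polynomially in $1/\epsilon$ while the right side only logarithmically, this holds for all $\epsilon$ below some threshold $\epsilon'_r>0$. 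Setting $\epsilon_r:=\min(\epsilon_{r+1},\epsilon'_r)$ then gives the claim. I expect no conceptual obstacle; the only points needing care are the bookkeeping around the index shift $r\mapsto r+1$ (matching the player and node counts) and this final comparison of power towers, which is where the clean exponent $\epsilon^{-1}$ is bought at the price of shrinking the admissible range of $\epsilon$.
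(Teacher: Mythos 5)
Your proof follows essentially the same route as the paper's: apply Theorem~\ref{theo:lower} with parameter $r+1$, use Proposition~\ref{prop:sym} together with the observation that a symmetric $\epsilon$-timing of $\Gamma_A$ yields an $(\epsilon,\lambda)$-timing of $A$, and then absorb the constant $k_{r+1}$ and the exponent $\tfrac{1}{(r+1)!\,r!}$ into one extra tower layer by a growth-rate comparison. The only (cosmetic) difference is that you perform the game-to-agenda reduction before the tower arithmetic rather than after, and you spell out the monotonicity step $\exp_2^{r+1}(y)=\exp_2^r(2^y)$ a bit more explicitly; the content is identical.
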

\begin{proof}
Let $r$ be given. We know from Theorem \ref{theo:lower} that there exist
$k_{r+1},\epsilon'_{r+1}>0$ and an agenda $A_{r+1}$ with $16r+3$ players
and at most $3r$ nodes per player such that for any
$(\epsilon,\epsilon^{r+1})$-timing of $A_{r+1}$ with
$\epsilon\leq\epsilon'_{r+1}$ we need time at least
$\exp_2^{r+1}\left(k_{r+1}\epsilon^{-\frac{1}{(r+1)!r!}}\right)$. We know
that for all $c>0$ we have $\log(x)\leq x^c$ for all sufficiently large
$x$. Using this for $x=\epsilon^{-1}$ and $c=\frac{1}{(r+1)!r!}$ we get
$\log(\epsilon^{-1})\leq k_{r+1}\epsilon^{-\frac{1}{(r+1)!r!}}$ for
sufficiently small $\epsilon$. By exponentiating on both sides we get
$\exp\left(k_{r+1}\epsilon^{-\frac{1}{(r+1)!r!}}\right)\geq
\epsilon^{-1}$. Thus, there is some $\epsilon_r$ such that for any
$(\epsilon,\epsilon^{1-\frac{1}{r}})$-timing of $A_{r+1}$ with
$\epsilon\leq\epsilon_r$ we need time at least
$\exp_2^r\left(\epsilon^{-1}\right)$. Let $\Gamma_r$ be the symmetric
choiceless game obtained from $A_{r+1}$ by removing the separators. Now any $\epsilon$-timing of $\Gamma_r$ must be an $(\epsilon,0)$-timing, and hence also an $(\epsilon,\epsilon^{1-\frac{1}{r}})$-timing of $A_{r+1}$. Thus, any $\epsilon$-timing of $\Gamma_r$ must use time at least $\exp_2^r\left(\epsilon^{-1}\right)$.
\end{proof}


\section{Imperfect timekeeping}\label{sec:imperfect}
Previously we assumed that at any time all the players knew the exact
time. In practice, this is not a realistic assumption. Even our model of
time---that there exists an absolute time, and that everybody's time goes
at the same speed---has been proven wrong by relativity theory. If the
players cannot feel acceleration, one could use the twin paradox to time
games that otherwise cannot be exactly timed~\cite{twinbook}.\footnote{The
  question whether it is possible to implement a not exactly timeable game
  on players who are equipped with a perfect accelerometer is beyond the
  scope of this paper.} A more down-to-earth objection is that it might be
possible affect humans' or even computers' perception of time if you
control their environment.  The purpose of this section it to show that our
lower bounds are quite robust: even if we can determine the players'
perception of time within some reasonable bounds, there are games that take
a long time to $\epsilon$-time. We will assume each node occurs at some
``official'' time, $x$, 
 and that we can also decide the players' perception $y$ of that time. The following definition also models a situation where the players do not know when the game started.

\begin{defi}\label{defi:imperfecttime}
Let $l,u:\mathbb{R}^+\to \mathbb{R}^+$ be weakly increasing functions satisfying $l(t)\leq t\leq u(t)$. A deterministic $[l,u]$-timing of a game $\Gamma$ is an assignment of a tuple $(x_v,y_v)$ (two nonnegative real numbers) to each node $v$ such that:
\begin{enumerate}
\item If we label $\Gamma$ with just the $x_v$ values we have a timing of $\Gamma$.
\item If $v$ and $w$ are two nodes belonging to the same player and $v$ is on the path from the root to $w$ then
$l(x_w-x_v)\leq y_w-y_v\leq u(x_w-x_v).$
\end{enumerate}

 An $[l,u]$-timing is a distribution over deterministic
 $[l,u]$-timings. The \emph{timing information} of player $i$ at a node $w$
 given an $[l,u]$-timing consists of the \emph{perceived times}, $y_v$, of all nodes $v$ belonging to that player between the root and $w$. Now an \emph{$(\epsilon,[l,u])$-timing} is an $[l,u]$-timing such that for any two nodes belonging to the same information set, the current player's timing information at the two nodes has total variation distance at most $\epsilon$. An $[l,u]$-timing is an \emph{exact $[l,u]$-timing} if it is a $(0,[l,u])$-timing.
\end{defi}

The next theorem says that even if we can affect the players' clocks by some large constant factor $c$, there still exist games that cannot be $\epsilon$-timed in time $\exp_2^r(\frac{1}{\epsilon})$.

\begin{theo}\label{theo:imperfecttime}
Let $c$ be an integer and let $l,u$ be functions as in Definition \ref{defi:imperfecttime} and such that $l(x)\geq \frac{x}{c}$ and $u(x)\leq cx$. Then for any $r$ there exists a game $\Gamma_{c,r}$ with $16(2c^4+r)+11$ players such that for sufficiently small $\epsilon$ any $(\epsilon,[l(x),u(x)])$-timing of $\Gamma_{c,r}$ has to use time at least $\exp_2^r\left(\frac{1}{\epsilon}\right)$.   
\end{theo}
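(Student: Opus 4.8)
The plan is to re-run the machinery of Section~\ref{sec:lower} on the players' \emph{perceived} times in place of the official times, using a game whose underlying agenda tower is made taller by an additive $\Theta(c^4)$ so that the levels eaten by the distortion still leave $r$ of them behind. Concretely, I would take $\Gamma_{c,r}$ to be the choiceless game obtained by deleting the separators from the agenda $A_{2c^4+r+1}$ of Theorem~\ref{theo:lower}, together with a fixed-size auxiliary gadget of $8$ players whose perceived times calibrate the distortion to within a multiplicative constant; since the base agenda $A_1$ contributes the constant part, this yields the player count $16(2c^4+r)+11$. A few reductions come first: the $[l,u]$-analogue of Proposition~\ref{prop:sym} (whose proof transfers verbatim, since permuting player labels sends one valid $[l,u]$-relation to another) lets us assume the official timing is symmetric; then one checks, exactly as in Corollary~\ref{coro:lowerbound}, that the symmetric official times realize the separators of $A_{2c^4+r+1}$ with zero additive slack. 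Since $l(t)\ge t/c$ and $u(t)\le ct$, the perceived times then sit within a multiplicative factor $c$ of the official ones along every single-player path, and because the recursion of Section~\ref{sec:lower} only ever inspects differences $X_{i,j+1}-X_{i,1}$ within a player — never the absolute value of a first node — the fact that each player's perceived clock has an unknown offset is harmless.

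The core of the argument is an analogue of the induction in Theorem~\ref{theo:lower} carried out with this multiplicative slack in force. The only place the players' observations do real work is the gap-detection step (Lemmas~\ref{lemm:agendabase}, \ref{lemm:gapsbetweensep} and Proposition~\ref{prop:allthesame2}), where each player reconstructs from three of her nodes the ratio $f(x,y,z)=(y-x)/(z-x)$ and reads off whether the separator gaps increase or decrease. Under a factor-$c$ distortion the numerator and denominator of $f$ are each perturbed by a factor $c$, so two players in the same information set are only guaranteed to see values of $f$ agreeing up to a factor $c^4$ rather than up to $1\pm\epsilon$. I would therefore apply Lemma~\ref{lemm:gapsbetweensep}, and hence Proposition~\ref{prop:allthesame2}, with their parameter replaced by a value of order $c^4$: its conclusion then forces the separator gaps to grow by a factor $\gg c^4$ at each step, which keeps the two regimes ``case~1'' and ``case~2'' separated by more than the ambiguity, so the monotonicity bit still gets through. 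After conditioning on the monotonicity event via Proposition~\ref{prop:givenT} and passing to $\log$-coordinates as in the proof of Theorem~\ref{theo:lower}, each surviving multiplicative $c$ becomes an additive $O(\log c)$ term — precisely an agenda slack $\lambda$ — and, since the recursion forces the coordinates to grow extremely fast, this constant slack is swamped after a bounded further number of levels. Tracking how many levels are consumed in total gives the figure $2c^4$ that we budgeted for; the base case is the distorted version of Lemma~\ref{lemm:agendabase}, where the calibration gadget lets the same Jensen computation yield $X_{2,2}\ge\exp_2(\Omega(\epsilon^{-1}))$ with the $O(\log c)$ losses absorbed into the $\Omega$.

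Having descended through the $2c^4$ extra levels we are left with a game encoding of $A_{r+1}$ carrying an $(\epsilon',(\epsilon')^{1-1/r})$-timing for some $\epsilon'$ polynomially related to $\epsilon$, to which Corollary~\ref{coro:lowerbound} applies and gives perceived time at least $\exp_2^r(1/\epsilon')$. Converting back to official time costs only a factor $c$ — an $O(\log c)$ change at the top of the tower — which is absorbed into the ``sufficiently small $\epsilon$'' of the statement, yielding official time at least $\exp_2^r(1/\epsilon)$.

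The step I expect to be the main obstacle is the distorted gap-detection: making the bookkeeping genuinely robust against an adversary who may exploit the full freedom of $[l,u]$ in the perceived times — randomized, and possibly correlated with the official separator positions — to blur the monotonicity bit, and verifying that enlarging the gadget parameter to $\Theta(c^4)$ together with the $\Theta(c^4)$ extra tower levels really is sufficient. A secondary but necessary check is that the symmetrization reduction of Proposition~\ref{prop:sym}, and the zero-slack realization of separators used in Corollary~\ref{coro:lowerbound}, both survive the passage to $[l,u]$-timings.
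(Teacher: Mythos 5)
There is a genuine gap here, and it sits exactly where you flagged the ``main obstacle'': the gap-detection step does not survive the $[l,u]$ distortion with a constant-size gadget, no matter how the threshold parameter inside Lemma~\ref{lemm:gapsbetweensep} is rescaled. Consider the four-player gadget $\text{a},\text{b},\text{c},\text{d}$ around a block of separators whose actual gaps are all equal, so that $f(X_{|,-i})\approx\tfrac12$ for every $i$. The adversary may then choose every $Y_{i,j}$ within the allowed factor-$c$ window of $X_{i,j}$ so that the four tuples $Y_\text{a},Y_\text{b},Y_\text{c},Y_\text{d}$ are identically distributed; this is feasible because a multiplicative-$c$ window around a balanced configuration is wide enough to overlap for all four players. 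The resulting perceived timing is then \emph{exactly} $\epsilon$-indistinguishable ($\epsilon=0$), and Lemma~\ref{lemm:gapsbetweensep} yields no constraint on the official separator gaps at all. The same happens however large you set the internal threshold parameter $c$ of that lemma: the bound $\Pr(T=0)\leq(8c+\dots)(2\lambda+\epsilon)$ also degrades as that parameter grows, so you do not win by inflating it to $c^4$. In short, no fixed-size (here $8$-player) ``calibration'' gadget can pin down the adversary's per-player multiplicative slack.

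The paper's actual proof circumvents this with a genuinely different device (Lemma~\ref{lemm:imperfecttime}): the gap-detection gadget is itself enlarged to $2(4c^4+1)$ players, arranged in two groups straddling the separators, and a pigeonhole/disjoint-interval counting argument shows that at most $4c^4$ of them can have their perceived ratio $f(Y_i)$ on the ``wrong'' side of $\tfrac12$ simultaneously. This counting is what makes the majority vote $g(Y_I)-g(Y_{I'})$ informative despite the $[l,u]$ freedom, and it is the reason the player budget grows like $c^4$: the extra $\Theta(c^4)$ players live in the gadget, not in extra tower levels. Your proposed reallocation --- fixed $8$-player gadget plus $\Theta(c^4)$ extra levels of the agenda tower --- cannot recover this, because each new tower level introduces fresh multiplicative distortion on the perceived times rather than absorbing distortion from earlier levels; only additive slack is amortized by the $\log$-coordinate recursion, whereas the $[l,u]$ error is multiplicative and renewed at every level. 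Your claim that ``this constant slack is swamped after a bounded further number of levels'' is therefore the step that fails, and the tower-height budget $2c^4$ has no justification in the argument as written. The symmetrization reduction and the zero-slack realization of separators, which you also flag, do transfer, but those are the easy parts.
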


In order to prove Theorem \ref{theo:imperfecttime} we will show that there exists symmetric choiceless games, where all $(\epsilon,[u,l])$-timing takes a long time to time. To do this, we will use symmetric choiceless games with separators. Unlike for agendas, these separators are not a part of the game, they are only used in the proof. For example, for the game $24|33|1441|22|13$ we could say ``for any $\epsilon$-timing of this game the two middle separator will with high probability either both be much closer to the first than the last separator or both be much closer to the last then the first separator''. This is just a simpler way of saying, ``if $X$ is a timing of $243314412213$ and $(X,X_{|,1},X_{|,2},X_{|,3},X_{|,4})$ is jointly distributed such that $X_{4,1}<X_{|,1}<X_{3,1}$, $X_{3,2}<X_{|,2}<X_{4,2}$, $X_{1,2}<X_{|,3}<X_{2,2}$ and $X_{2,3}<X_{|,4}<X_{1,3}$ then with high probability $X_{|,2}$ and $X_{|,3}$ are either both much closer to $X_{|,1}$ than to $X_{|,4}$ or both much closer to $X_{|,4}$ than to $X_{|,1}$''.

 \begin{lemm}\label{lemm:imperfecttime}
Let $c$ be an integer and let $l,u$ be functions as in Definition \ref{defi:imperfecttime} and such that $l(x)\geq \frac{x}{c}$ and $u(x)\leq cx$. Let $n= 4c^4+1$.

 If $X$ is a symmetric $(\epsilon,[l,u])$-timing of the symmetric choiceless game $\Gamma$ with $2n$ players given by $12\dots n|(n+1)(n+1)(n+2)(n+2)\dots (2n)(2n)||1122\dots nn|(n+1)(n+2)\dots (2n)$, the probability that both
 \begin{enumerate}
 \item $\frac{X_{|,4}-X_{|,2}}{X_{|,4}-X_{|,1}}\geq 2c^{-2}$, and
\item $\frac{X_{|,3}-X_{|,1}}{X_{|,4}-X_{|,1}}\geq 2c^{-2}$
\end{enumerate}
is at most $2\epsilon$, where $X_{|,i}$ denote the actual time of the $j$'th node belonging to player $|$.
 \end{lemm}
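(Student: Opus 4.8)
The plan is to adapt the structure of Lemma~\ref{lemm:gapsbetweensep} to the setting where the $\lambda$-slack is replaced by the multiplicative distortion coming from $l$ and $u$. The key observation is that even though a player may perceive times differently, a player with enough nodes can still estimate the ratios of gaps between separators up to a bounded multiplicative error, and the error is controlled by $c$. Concretely, I would let the four separators $X_{|,1},\dots,X_{|,4}$ be pinned between appropriate player-nodes exactly as in the agenda setting: there are players who must act ``just before'' a separator and players who must act ``just after'' it, so that the separator time is sandwiched (in \emph{official} time $x$) between two of the players' official node-times. The first step is to record these sandwiching inequalities for the official times $x$ of the four separators, using the position of the four players (say the blocks of $(n+1),\dots,(2n)$ versus $1,\dots,n$) relative to the separators in the word $12\dots n\,|\,(n+1)(n+1)\cdots(2n)(2n)\,|\,|\,1122\cdots nn\,|\,(n+1)(n+2)\cdots(2n)$.

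The second step is the translation to perceived time. Fix one of the many repeated players, say player $p \le n$; by perfect recall / the choiceless structure the player's timing information is a tuple of perceived times $y$, and the function $t\mapsto y$ is weakly increasing and distorts differences by at most a factor $c$ in either direction. So if I look at the perceived-time version of the ratio $\frac{X_{|,j}-X_{|,1}}{X_{|,4}-X_{|,1}}$, sandwiched by the player's own node times, I get the analogous ratio with a multiplicative fudge factor living in $[c^{-2},c^{2}]$. (The $c^{2}$ rather than $c$ is because both the numerator gap and the denominator gap each pick up a factor in $[c^{-1},c]$.) This is exactly why the threshold in the statement is $2c^{-2}$ rather than, say, $1/2$: a perceived ratio being small forces the official ratio to be small up to this factor, and conversely. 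I would define $f(x,y,z)=\frac{y-x}{z-x}$ as in Lemma~\ref{lemm:gapsbetweensep} and show that for each repeated player $i$ the quantity $f$ evaluated on three of that player's perceived node-times differs from $f$ evaluated on three separators by at most a multiplicative factor in $[c^{-2},c^{2}]$; combined with $\epsilon$-indistinguishability of timing information for players with the same number of nodes, this gives $\E$-bounds of the form $\E|f(X_i^{\text{perc}}) - f(X_j^{\text{perc}})| \le \epsilon$ whenever $k_i = k_j$.

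The third step is the probabilistic punchline, which closely mirrors the end of Lemma~\ref{lemm:gapsbetweensep}. I would use the multiplicative identity $f(X_{|,-4})f(X_{|,-2}) = f(X_{|,-3})$ (here $X_{|,-i}$ is the separator tuple with the $i$th entry deleted, exactly as before) together with the monotonicity of $f$. The point is that the event ``both ratios $\ge 2c^{-2}$'' would force both $\frac{X_{|,2}-X_{|,1}}{X_{|,4}-X_{|,1}}$ and $\frac{X_{|,4}-X_{|,3}}{X_{|,4}-X_{|,1}} = 1-\frac{X_{|,3}-X_{|,1}}{X_{|,4}-X_{|,1}}$ to be bounded away from both $0$ and $1$ — wait, more carefully: I want to show the product $f(X_{|,-4}) f(X_{|,-2})$ being forced to sit in a ``middle'' range contradicts the near-equality $f(X_{|,-4}) \approx f(X_{|,-3}) \approx f(X_{|,-2})$ forced by the $\epsilon$-timing condition through many repeated players. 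Since the number of repeated players is $n = 4c^4+1$, I get enough slack that a Markov-type argument bounds the bad probability by $2\epsilon$ (the constant $2$ rather than something worse is where the precise choice of $n$ in terms of $c$ is spent).

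I expect the main obstacle to be bookkeeping the direction of the inequalities through the perceived-time distortion: $l$ and $u$ give one-sided bounds, and one has to be careful that the sandwiching of a separator between two player-nodes in \emph{official} time, after applying the (only weakly monotone, possibly very nonlinear) perception map, still yields a \emph{two-sided} multiplicative bound on the perceived ratio. The reason this works is that the player has enough nodes (this is the role of having $n$ copies and putting players on \emph{both} sides of each separator) that every separator is squeezed from both sides by that player's own nodes, so the distortion is genuinely controlled above and below; writing this cleanly, and pinning down exactly which node of which player squeezes which separator in the specific word given, is the fiddly part. The rest is a routine adaptation of Lemma~\ref{lemm:gapsbetweensep} with ``$2\lambda + \epsilon$'' replaced throughout by a quantity that is $O(\epsilon)$ once $c$ is fixed and $n = 4c^4+1$.
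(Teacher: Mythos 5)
Your proposal tries to port the additive-slack argument of Lemma~\ref{lemm:gapsbetweensep} to the imperfect-timekeeping setting, but there is a genuine gap at the crucial step, and the paper's actual argument is quite different.

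The issue is in your second and third steps. You claim that each player's perceived ratio $f(Y_i)$ differs from the official separator ratio $f(X_{|,-j})$ only by a multiplicative factor in $[c^{-2},c^{2}]$, and then that the $\epsilon$-indistinguishability of timing information for players with $k_i=k_j$ yields something like $\E\,|f(X_i^{\text{perc}})-f(X_j^{\text{perc}})|\leq\epsilon$, from which the three official separator ratios $f(X_{|,-2}),f(X_{|,-3}),f(X_{|,-4})$ are forced to be within $O(\epsilon)$ of each other. Neither of these inferences holds. First, total variation distance bounds $|\E g(Y_i)-\E g(Y_j)|$ for bounded $g$, not $\E|f(Y_i)-f(Y_j)|$ (there is no coupling). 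Second, and more fundamentally, because the distortion is a factor of up to $c^2$ (a fixed large constant, not vanishing with $\epsilon$), the near-equality $f(Y_i)\approx f(Y_j)$ for \emph{perceived} ratios tells you essentially nothing about the \emph{official} ratios $f(X_{|,-j})$ beyond a factor of $c^4$. The chain of inequalities in Lemma~\ref{lemm:gapsbetweensep} only worked because the slack $\lambda$ there was an additive quantity of the same order as $\epsilon$; here the distortion is multiplicative and large, so the argument you propose collapses.

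The paper's proof instead makes no attempt to show the official separator ratios are close. It runs a \emph{counting/majority-vote} argument over the $n=4c^4+1$ players in each block. Let $n_1$ be the number of $i\leq n$ with $f(x_i)\leq 1-\tfrac{1}{2c^2}$ and $n_2$ the number of $i>n$ with $f(x_i)\geq\tfrac{1}{2c^2}$. Because the three consecutive node times of each player $i>n$ bracket a gap in the middle block, and the disjoint intervals $[x_{i,1},x_{i,2}]$ (resp.\ $[x_{j,2},x_{j,3}]$) of distinct players all fit inside $[X_{|,1},X_{|,4}]$, one gets the purely combinatorial inequality $n_1n_2\leq 4c^4$; and separately, on the ``bad'' event where both your stated ratios are $\geq 2c^{-2}$, one gets $n_1\leq c^4$ and $n_2\leq c^4$. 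The $c^2$ distortion is then used only through the one-sided implications: $f(x_i)\leq\tfrac{1}{2c^2}\Rightarrow f(y_i)\leq\tfrac12$ and $f(x_i)\geq 1-\tfrac{1}{2c^2}\Rightarrow f(y_i)\geq\tfrac12$. Finally, one thresholds with $g=\mathbf 1\{f(\cdot)\geq 1/2\}$, compares $g(Y_I)$ for $I$ uniform in $[n]$ to $g(Y_{I'})$ for $I'$ uniform in $[2n]\setminus[n]$, and uses $\delta(Y_I,Y_{I'})\leq\epsilon$ together with the arithmetic $g(Y_I)-g(Y_{I'})\geq 1-(N_1+N_2)/n\geq 0$ always and $\geq 1/2$ on the bad event, to conclude via Markov that the bad event has probability at most $2\epsilon$. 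This counting idea, which is where $n=4c^4+1$ is actually spent, is what is missing from your proposal; without it, there is no way to pass from an $O(\epsilon)$ statement about perceived timings to a statement about the official separator ratios in the presence of constant-factor distortion.
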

\begin{proof}
Suppose we have an $(\epsilon,[l,u])$-timing of $\Gamma$. Let $X_{i,j}$ denote the actual time of the $j$'th node belonging to player $i$, and let $Y_{i,j}$ denote the perceived time. Define $f(x,y,z)=\frac{y-z}{x-z}$ and $f(X_i)=f(X_{i,1},X_{i,2},X_{i,3})$ and similar for $f(Y_i)$. If $x<y<z$ then $f(x,y,z)$ is increasing in $y$ and decreasing in $x$ and $z$. 

For particular values $x$ and $y$ of $X$ and $Y$, let $n_1$ be the number of $i\leq n$ for which $f(x_{i})\leq 1-\frac{1}{2c^2}$ and let $n_2$ be the number of $i>n$ for which $f(x_{i})\geq \frac{1}{2c^2}$. We will argue that $n_1n_2\leq 4c^4$. For $i>n$ we have
\begin{align*}
f(x_i)=\frac{x_{i,2}-x_{i_1}}{x_{i,3}-x_{i,1}}\leq \frac{x_{i,2}-x_{i,1}}{\sum_{j=1}^n(x_{i,3}-x_{i,2})},
\end{align*}
so if $f(x_i)\geq \frac{1}{2c^2}$ then 
\[\sum_{j=1}^n(x_{i,3}-x_{i,2})\leq 2c^2(x_{i,2}-x_{i,1}).\] 
This hold for each of the $n_2$ values of $i>n$ for which $f(x_{i})\geq \frac{1}{2c^2}$, so
\[2c^2\sum_{j=n+1}^{2n}(x_{i,2}-x_{i,1})\geq n_2\sum_{j=1}^n(x_{i,3}-x_{i,2}).\]
Completely analogously we get
\[2c^2\sum_{j=1}^n(x_{i,3}-x_{i,2}) \geq n_1\sum_{j=n+1}^{2n}(x_{i,2}-x_{i,1}).\]
Putting this together gives
\[4c^4\sum_{j=n+1}^{2n}(x_{i,2}-x_{i,1})\geq 2c^2n_2\sum_{j=1}^n(x_{i,3}-x_{i,2})\geq n_1n_2\sum_{j=n+1}^{2n}(x_{i,2}-x_{i,1}),\]
hence $n_1n_2\leq 4c^4$. 

Furthermore, if $i>n$ and  $f(x_i)\geq  \frac{1}{2c^2}$ we have
\begin{align*}
\frac{1}{2c^2} \leq f(x_i) =\frac{x_{i,2}-x_{i,1}}{x_{i_3}-x_{i,1}} \leq \frac{x_{i,2}-x_{i,1}}{x_{|,4}-x_{|,2}} 
\end{align*}
Thus, if $\frac{x_{|,4}-x_{|,2}}{x_{|,4}-x_{|,1}}\geq 2c^{-2}$ we get 
\[\frac{1}{2c^2} \frac{2}{c^2}\leq  \frac{x_{i,2}-x_{i,1}}{x_{|,4}-x_{|,2}}\frac{x_{|,4}-x_{|,2}}{x_{|,4}-x_{|,1}} \]
so
\[c^4(x_{|,4}-x_{|,1})\geq  x_{i,2}-x_{i,1}\]
As all the intervals $[x_{i,1},x_{i,2}]$ are disjoint and contained in $[X_{|,1},X_{|,4}]$ we see that if $\frac{x_{|,4}-x_{|,2}}{x_{|,4}-x_{|,1}}\geq \frac{2}{c^2}$ we must have $n_1\leq c^4$. Completely analogously we get that if $\frac{x_{|,3}-x_{|,1}}{x_{|,4}-x_{|,1}}\geq \frac{2}{c^2}$ we must have $n_2\leq c^4$. 

For general $i$ we have 
\[1-f(y_i)=\frac{y_{i,3}-y_{i,2}}{y_{i,3}-y_{i,1}}\leq \frac{(x_{i,3}-x_{i,2})c}{(x_{i,3}-x_{i,1})/c}=c^2(1-f(x_i))\]
and $f(y_i)\leq c^2 f(x_i)$. In particular, if $f(x_i)\leq \frac{1}{2c^2}$ then $f(y_i)\leq \frac{1}{2}$ and if $f(x_i)\geq 1-\frac{1}{2c^2}$ then $f(y_i)\geq \frac{1}{2}$. Thus, the number of $i\leq n$ with $f(y_i)< \frac{1}{2}$ is at most $n_1$ and the number of $i>n$ with $f(y_i)\geq \frac{1}{2}$ is at most $n_2$.

In the above $n_1$ and $n_2$ where determined by the value $x$ that $X$ take. Thus, for random $X$ we get random variables $N_1$ and $N_2$. 

By assumption on the timing, for any two $i,i'$ we have $\delta(Y_i,Y_{i'})\leq \epsilon$. In particular if $I$ is uniformly distributed on $[n]$ and $I'$ is uniformly distributed on $[2n]\setminus[n]$ then by Proposition \ref{prop:disjoint} we have $\delta(Y_I,Y_{I'})\leq \epsilon$. Let $g(y_i)$ be the function that is $1$ if $f(y_i)\geq \frac{1}{2}$ and $0$ otherwise. Then we must have
\[\E g(Y_I)-g(Y_{I'})\leq \epsilon.\] 
We know that $\E g(Y_I)\geq \left(\E 1-\frac{N_1}{n}\right)$ and $\E g(Y_{I'})\leq \E \frac{N_2}{n}$. We have $N_1,N_2\leq n$, so if one of them is zero, then $g(Y_I)-g(Y_{I'})\geq 0$. In the cases where both are non-zero we have $N_1N_2\leq 4c^4$ and they are both integers so $N_1+N_2\leq 4c^4+1$. Hence, $g(Y_I)-g(Y_{I'})\geq 1-\frac{N_1}{n}-\frac{n_2}{n}\geq \frac{n-4c^4-1}{n}\geq 0$, so for all values of $X$ we have $g(Y_I)-g(Y_{I'})\geq 0$. We have shown that when $\frac{x_{|,4}-x_{|,2}}{x_{|,4}-x_{|,1}}\geq \frac{2}{c^2}$ then $n_1\leq c^4$ and when $\frac{x_{|,3}-x_{|,1}}{x_{|,4}-x_{|,1}}\geq \frac{2}{c^2}$ we must have $n_2\leq c^4$. Thus, this case contributes with $g(Y_I)-g(Y_{I'})\geq 1-\frac{N_1}{n}-\frac{N_2}{n}\geq 1-\frac{2c^4}{n}\geq \frac{1}{2}$, so the probability that $\frac{x_{|,4}-x_{|,2}}{x_{|,4}-x_{|,1}}\geq \frac{2}{c^2}$ and $\frac{x_{|,3}-x_{|,1}}{x_{|,4}-x_{|,1}}\geq \frac{2}{c^2}$ is at most $2\epsilon$. 
\end{proof}

\begin{proof}[Proof sketch of Theorem \ref{theo:imperfecttime}]
Same idea as the induction step of the proof Theorem \ref{theo:lower}: We will construct symmetric choices games that takes a long time to $(\epsilon,[u,l])$-time. An argument analogously to the proof of Proposition \ref{prop:sym} shows that it is enough to consider symmetric timings. By Corollary \ref{coro:lowerbound} there exists a game $\Gamma_{r}$ that cannot be $\epsilon$-timing in time less than $\exp_2^{r}(\epsilon^{-1})$. We then modify $\Gamma_{r}$: Between any two nodes of $\Gamma_{r}$ we put three separators, and before the first node and after the last, first three separators, then outside that we put one node for each player, and then a ``$|$'' before and after all the other nodes. For example, the game $233112$ becomes
\[|123|||2|||3|||3|||1|||1|||2|||123|.\]
 Around the first $4$ separators, we put $2(4k^4+1)$ players as in Lemma \ref{lemm:imperfecttime}, around separator $2$ to $5$ we put $2(4k^4+1)$ other players and so on. When we get to separators $5$ to $8$ we can reuse players as in the induction step of the proof of Theorem \ref{theo:lower}, so in total we use $8(4k^4+1)$ players, and these ensure that with probability $O(\epsilon)$ gap between the separators will either increase fast or decrease fast. The players from $\Gamma_{r}$ will now be able to see if the gaps are decreasing or increasing. If they are increasing the players will look at the difference between their first node and their $i$'th. Because their perception of this gap is of by at most a factor $k$, and the distances between these gaps will differ by almost a factor $k^4$, the ordering of the perceptions of these distances will be the same at the actual ordering. Hence, the logarithms of the perceived distances is a timing of $\Gamma_{r}$ and must use numbers as large as $\exp^{r}_2(\Omega(\frac{1}{\epsilon}))$, in particular any $(\epsilon,[l,u])$-timing of $\Gamma_{r,c}$ must use time at least $\exp^{r+1}_2(\Omega(\frac{1}{\epsilon}))$ and hence $\exp_2^r(\frac{1}{\epsilon})$ for sufficiently small $\epsilon$. 
\end{proof}

The next theorem shows that the above is the strongest theorem we can hope for: if we can make the players' clocks go faster or slower by more that a constant factor, we can implement all games.

\begin{theo}\label{theo:imperfectconstruction}
Let $\Gamma$ be a game and $l,u$ functions as in Definition \ref{defi:imperfecttime} with $\frac{u(t)}{l(t)}\to \infty$ as $t\to \infty$. Then $\Gamma$ is exactly $[l,u]$-timeable.
\end{theo}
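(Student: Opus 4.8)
The plan is to construct an exact $[l,u]$-timing by exploiting the fact that, since $u(t)/l(t) \to \infty$, we can make the ratio between any two players' perceived durations of the same official time interval arbitrarily large by simply pushing the official times far enough apart. The idea is: take any exact deterministic timing in the abstract sense — that is, a labelling of the information sets of $\Gamma$ by integers $1, 2, \dots, q$ respecting the partial order induced by the game tree (which exists only if $\Gamma$ is exactly timeable, so this cannot be what we do). Since $\Gamma$ need not be exactly timeable, I instead need to time the \emph{nodes} rather than the information sets, and arrange that although nodes in the same information set get different \emph{official} times $x_v$, they get the same \emph{perceived} time $y_v$, and that this is consistent with the constraints relating $y_w - y_v$ to $x_w - x_v$ for a player's successive nodes.

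First I would fix an arbitrary deterministic timing assigning to each node $v$ a depth-like official time $x_v$, with a large gap $G$ between a node and its parent, where $G$ will be chosen at the end; scale so that distinct information sets at ``the same intended round'' are separated, and within a round by tiny perturbations. The key move: I want each player $i$ to perceive her $k$-th node as occurring at perceived time exactly $k$ (or some fixed target depending only on how many of $i$'s nodes precede it), regardless of which node in an information set is actually reached. So for a player-$i$ node $w$ at ``player-depth'' $k$ whose immediately preceding player-$i$ node $v$ is at player-depth $k-1$, I need $l(x_w - x_v) \le 1 \le u(x_w - x_v)$; since $l(t) \le t \le u(t)$ and $l, u$ are weakly increasing, this is automatic once $x_w - x_v$ is in the right range, and the real content is that for \emph{different} nodes $w, w'$ in the same information set, with possibly different predecessors $v, v'$ at different official times, we can still hit the same perceived increment. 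This is where $u(t)/l(t)\to\infty$ is used: by taking the official gaps enormous, the interval $[l(\Delta), u(\Delta)]$ of achievable perceived increments for an official gap of size $\Delta$ becomes so wide that a single common target perceived-increment lies in the intersection of these intervals over all the relevant predecessor-pairs in the information set; then define $y_w$ by accumulating these common increments along the path, which depends only on player-depth, hence is constant on information sets, giving $\epsilon = 0$.

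The steps in order: (1) build the official timing $x_v$ with a free scale parameter $G$, ensuring it is a legitimate timing (child at least $1$ above parent — automatic if $G \ge 1$) and that we know, for each player $i$ and each player-depth $k$, the finite set of possible official gaps $x_w - x_v$ between consecutive player-$i$ nodes at depths $k-1, k$ across all histories; (2) observe each such gap is between $G$ and $mG$ (bounded multiples of $G$, since at most $m$ nodes per history and bounded tree depth), so the achievable perceived-increment intervals are $[l(\alpha G), u(\alpha G)]$ for various constants $\alpha \in [1, m]$; (3) since $l$ is weakly increasing, $l(\alpha G) \le l(mG)$, and since $u$ is weakly increasing, $u(\alpha G)\ge u(G)$, so all these intervals contain $[l(mG), u(G)]$ — wait, that needs $l(mG) \le u(G)$, which follows once $u(G)/l(G) \ge l(mG)/l(G)$; a cleaner route is to pick the common target increment to be, say, $\sqrt{l(G)\,u(G)}$ and check it lies in $[l(\alpha G), u(\alpha G)]$ for all $\alpha\in[1,m]$ by using $l(\alpha G)\le \alpha G \le u(\alpha G)$ together with $l(G)u(G)\to\infty$ relative to $l(mG)$ and $u(G)$; (4) choose $G$ large enough (possible since $u(t)/l(t)\to\infty$) that this common target works simultaneously for every player and every depth — finitely many conditions; (5) define $y_v$ along each root-to-leaf path by summing the common per-step increments for the owning players, set $y_v$ of the root to $0$ and non-player nodes' perceived times by any monotone interpolation, verify requirement 2 of Definition \ref{defi:imperfecttime} holds, and note $y_v$ depends only on the sequence of information sets passed through, hence is constant on each information set, so the timing is exact.

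The main obstacle I expect is step (3)–(4): making precise the choice of the common perceived increment so that a \emph{single} value simultaneously lies in $[l(x_w - x_v), u(x_w - x_v)]$ for all predecessor-successor pairs within every information set of every player, when these official gaps genuinely differ. The subtlety is that $x_w - x_v$ can differ across a player's information set precisely because $\Gamma$ is not exactly timeable, so there is real non-triviality in arranging the $x$'s so these gaps, while different, are comparable (bounded ratio independent of $G$), and then leveraging $u/l \to \infty$ to swamp that bounded ratio. Once the official timing is set up so that all intra-information-set gaps have bounded ratio, the rest is a clean limiting argument; but getting that bounded-ratio property — essentially a careful layered construction of $x_v$ — is the part that requires the most care.
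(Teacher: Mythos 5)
Your approach is essentially the paper's: set the official time $x_v$ proportional to depth with a large scale $t_0$, and the perceived time $y_v$ proportional to the player-depth $j_v$ (the number of that player's nodes up to and including $v$) with a fixed per-step increment $c$, so that perceived times depend only on $j_v$ and are therefore constant on information sets. Two remarks on where your sketch is less clean than the paper's. First, the ``bounded-ratio'' property you flag as the hard part is in fact trivial: with $x_v = i_v t_0$ ($i_v$ the depth), every official gap $x_w - x_v$ lies in $[t_0, M t_0]$ automatically, so no layered construction or perturbation is needed. Second, the genuinely load-bearing step is choosing $c$ with $l(Mt_0)\leq c$ and $Mc \leq u(t_0)$, which covers requirement~2 of Definition~\ref{defi:imperfecttime} for \emph{all} ancestor--descendant pairs, not only consecutive ones; your candidate $c=\sqrt{l(G)u(G)}$ is not obviously in this window, since the left inequality $l(MG)^2\leq l(G)u(G)$ does not follow from $l$ being weakly increasing and $l(t)\leq t$ alone (one would need an iteration or subsequence argument to recover it). The paper instead picks $c = l(Mt_0)$, the left endpoint, and splits into the two cases $\limsup u(t)/t = \infty$ or $\limsup t/l(t) = \infty$, in each of which $Ml(Mt_0)\leq u(t_0)$ follows in one line from $l(t)\leq t\leq u(t)$. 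So: right idea, right decomposition, but the choice of the common increment and the verification of the nonempty window are where your sketch would need to be tightened, and the paper does both more directly.
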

\begin{proof}
Let $M$ be the maximal number of nodes in any history of $\Gamma$. As $\frac{u(t)}{l(t)}\to \infty$ we must have either $\limsup_{t\to \infty} \frac{u(t)}{t}=\infty$ or $\limsup_{t\to\infty}\frac{t}{l(t)}=\infty$. In the first case we can find $t_0\geq 1$ such that $u(t_0)\geq M^2t_0 \geq M l(Mt_0)$. Similarly, in the second case we can find $t_0\geq 1$ such that $M^2 l(Mt_0)\leq Mt_0\leq Mu(t_0)$, hence $Ml(Mt_0)\leq u(t_0)$.    

Now we define an $[l,u]$-timing of $\Gamma$. If a node $v$ has distance $i_v$ from the root, we set $x_v=i_vt_0$, and if $v$ is the $j_v$'th node belonging that player in the history leading to $v$, then $y_v=j_vl(Mt_0)$. Clearly, the $x_v$ give a deterministic timing of $\Gamma$ as $t_0\geq 1$. If $v,w$ are two nodes belonging to the same player with $v$ on the path from the root to $w$ we have $y_w-y_v=(j_w-j_v)l(Mt_0)$ and
\[l(x_w-x_v)\leq l(Mt_0)\leq (j_w-j_v)l(Mt_0)\leq M l(Mt_0)\leq u(t_0)\leq u(x_w-x_v).\]
Thus, we have constructed an $[l,u]$-timing. If $v$ and $w$ a two nodes belonging to the same information set, we must have $j_v=j_w$, and the timing information for each of them is $(l(Mt_0),2l(Mt_0),\dots,j_ul(Mt_0))$, so it is an exact $[l,u]$-timing.
\end{proof}

\section{Conclusion}

Not every extensive-form game can be naturally implemented in the
world.  Games with imperfect recall constitute a well known example of
this.  In this paper, we have drawn attention to another feature that
is likely to prevent the direct implementation of the game in the
world: games that are not exactly timeable.  We gave necessary and
sufficient conditions for a game to be exactly timeable and showed
that they are easy to check.  Most of the technical contribution
concerned approximately timing games; we showed that this can always
be done, but can require large amounts of time.

Future research can take a number of directions.  Does restricting
attention to exactly timeable games allow one to prove new results about
these games, or develop new algorithms for solving them---as is the case
for perfect recall?  It is conceivable that the possibility of games that
are not exactly timeable has been an unappreciated and unnecessary
roadblock to the development of certain theoretical or algorithmic results.
Can our techniques be applied to the design of protocols that should not
leak information to participants by means of the time at which they receive
messages? Are there natural families of games for which we can obtain desirable
bounds for the amount of time required to approximately time them?

\section*{Acknowledgements}

We thank J\"orgen Weibull for helpful feedback. 
Conitzer thanks ARO and NSF for support under grants W911NF-12-1-0550,
W911NF-11-1-0332, IIS-0953756, CCF-1101659, and CCF-1337215.

  \bibliographystyle{plain} 
\bibliography{timeabilityArXiv}

\begin{thebibliography}{10}

\bibitem{aumann1997absent}
Robert~J Aumann, Sergiu Hart, and Motty Perry.
\newblock The absent-minded driver.
\newblock {\em Games and Economic Behavior}, 20(1):102--116, 1997.

\bibitem{aumann1997forgetful}
Robert~J Aumann, Sergiu Hart, and Motty Perry.
\newblock The forgetful passenger.
\newblock {\em Games and Economic Behavior}, 20(1):117--120, 1997.

\bibitem{CLRS}
Thomas~H. Cormen, Clifford Stein, Ronald~L. Rivest, and Charles~E. Leiserson.
\newblock {\em Introduction to Algorithms}.
\newblock McGraw-Hill Higher Education, 2nd edition, 2001.

\bibitem{hansen2007finding}
Kristoffer~Arnsfelt Hansen, Peter~Bro Miltersen, and Troels~Bjerre S{\o}rensen.
\newblock Finding equilibria in games of no chance.
\newblock In {\em Computing and Combinatorics}, pages 274--284. Springer Berlin
  Heidelberg, 2007.

\bibitem{twinbook}
James~B. Hartle.
\newblock {\em {Gravity: An Introduction to Einstein's General Relativity}}.
\newblock Addison Wesley, 2002.

\bibitem{NoF}
Sune~K. Jakobsen.
\newblock A numbers-on-foreheads game.
\newblock 2015.
\newblock arxiv:1502.02849.

\bibitem{kreps1987structural}
David~M Kreps and Garey Ramey.
\newblock Structural consistency, consistency, and sequential rationality.
\newblock {\em Econometrica: Journal of the Econometric Society}, pages
  1331--1348, 1987.

\bibitem{kreps1982sequential}
David~M Kreps and Robert Wilson.
\newblock Sequential equilibria.
\newblock {\em Econometrica: Journal of the Econometric Society}, pages
  863--894, 1982.

\bibitem{kuhn1953extensive}
Harold~W Kuhn.
\newblock Extensive games and the problem of information.
\newblock {\em Contributions to the Theory of Games}, 2(28):193--216, 1953.

\bibitem{miltersen2008fast}
Peter~Bro Miltersen and Troels~Bjerre S{\o}rensen.
\newblock Fast algorithms for finding proper strategies in game trees.
\newblock In {\em Proceedings of the nineteenth annual ACM-SIAM symposium on
  Discrete algorithms}, pages 874--883. Society for Industrial and Applied
  Mathematics, 2008.

\bibitem{miltersen2010computing}
Peter~Bro Miltersen and Troels~Bjerre S{\o}rensen.
\newblock Computing a quasi-perfect equilibrium of a two-player game.
\newblock {\em Economic Theory}, 42(1):175--192, 2010.

\bibitem{AGT07}
Noam Nisan, Tim Roughgarden, Eva Tardos, and Vijay~V. Vazirani.
\newblock {\em Algorithmic Game Theory}.
\newblock Cambridge University Press, New York, NY, USA, 2007.

\bibitem{piccione1997interpretation}
Michele Piccione and Ariel Rubinstein.
\newblock On the interpretation of decision problems with imperfect recall.
\newblock {\em Games and Economic Behavior}, 20(1):3--24, 1997.

\bibitem{romanovsky1962reduction}
JV~Romanovsky.
\newblock Reduction of a game with perfect recall to a constrained matrix game.
\newblock {\em Doklady Akademii Nauk SSSR}, 144:62--64, 1962.

\bibitem{von1996efficient}
Bernhard Von~Stengel.
\newblock Efficient computation of behavior strategies.
\newblock {\em Games and Economic Behavior}, 14(2):220--246, 1996.

\bibitem{waugh2009practical}
Kevin Waugh, Martin Zinkevich, Michael Johanson, Morgan Kan, David Schnizlein,
  and Michael~H Bowling.
\newblock A practical use of imperfect recall.
\newblock In {\em SARA}, 2009.

\bibitem{weibull1992self}
J{\"o}rgen~W Weibull.
\newblock On self-enforcement in extensive-form games.
\newblock {\em Games and Economic Behavior}, 4(3):450--462, 1992.

\bibitem{weibull2009lecturenotes}
J\"orgen~W. Weibull.
\newblock Lecture notes in game theory and economic analysis.
\newblock Department of Economics, Ecole Polytechnique (Paris), ECO574, 2009.

\end{thebibliography}

	 \end{document}